\documentclass[acmtocl]{acmtrans2m}
\usepackage{amsmath}
\usepackage{amssymb}
\usepackage{epsfig}
\usepackage{eclbip}
\usepackage{epic}
\newenvironment{grp}{\begin{center}\begin{tabular}{l}}{\end{tabular}\end{center}}
\def\ebox{\quad{\vrule height6pt width6pt depth0pt}}

\newtheorem{theorem}{Theorem}[section]

\newtheorem{corollary}[theorem]{Corollary}
\newtheorem{proposition}[theorem]{Proposition}
\newtheorem{subproperty}[theorem]{Subproperty}
\newtheorem{invariant}[theorem]{Invariant}
\newtheorem{lemma}[theorem]{Lemma}

\newdef{definition}[theorem]{Definition}
\newdef{remark}[theorem]{Remark}
\newtheorem{example}{Example}[section]

\newcommand{\rarw}{\ensuremath{\rightarrow}}
\newcommand{\Rarw}{\ensuremath{\Rightarrow}}
\newcommand{\lrarw}{\ensuremath{\leftrightarrow}}
\newcommand{\lorarw}{\ensuremath{\longrightarrow}}
\newcommand{\larw}{\ensuremath{\leftarrow}}
\newcommand{\Larw}{\ensuremath{\Leftarrow}}
\newcommand{\mcl}[1]{\ensuremath{\mathcal{#1}}}

\newcommand{\dbiI}{\ensuremath{\mbox{\textbf{I}}}}
\newcommand{\dbiJ}{\ensuremath{\mbox{\textbf{J}}}}
\newcommand{\defn}{\textit}
\newcommand{\assign}{\ensuremath{\overset{\mbox{\tiny def}}{=}}}

\newcommand{\qrank}{\ensuremath{\mbox{qrank}}}

\newcommand{\ucv}{\ensuremath{\text{UCV}}}
\newcommand{\ufo}{\ensuremath{\text{UFO}}}
\newcommand{\fo}{\ensuremath{\mcl{FO}}}
\newcommand{\bld}[1]{\ensuremath{\mbox{\textbf{{#1}}}}}
\newcommand{\strG}{\ensuremath{\bld{G}}}
\newcommand{\gaif}{\ensuremath{\Bbb{G}}}
\newcommand{\strT}{\ensuremath{\bld{T}}}

\newcommand{\substruct}{\ensuremath{\subseteq}}

\newcommand{\db}{\ensuremath{\mathcal{D}}}
\newcommand{\code}{\ensuremath{\text{code}}}
\newcommand{\bit}[2]{\ensuremath{\text{bit}_{#1}(#2)}}
\newcommand{\ef}{Ehrenfeucht-Fra\"iss\'e}




\title{Logical Queries over Views: Decidability and
Expressiveness\footnote{A preliminary
version of this paper appeared in \cite{DBLP:conf/icdt/BaileyD99}}}

\author{JAMES BAILEY, \\ The University of Melbourne
    \and GUOZHU DONG, \\ Wright State University
    \and ANTHONY WIDJAJA TO \\ University of Edinburgh}

\begin{abstract}
We study the problem of deciding satisfiability of first order logic queries
over views, our aim being to delimit the boundary between the decidable and the
undecidable fragments of this language. Views currently occupy a central place
in database research, due to their role in applications such as information
integration and data warehousing. Our main result is the identification
of a decidable class of first order queries over unary conjunctive views that generalises 
the decidability of the classical class of first order sentences over unary 
relations, known as the L\"{o}wenheim class.
We then demonstrate how various extensions of this class lead to undecidability
and also provide some expressivity results. Besides its theoretical interest,
our new decidable class is potentially interesting for use in applications such
as deciding implication of complex dependencies,
analysis of a restricted class of active database rules, and ontology reasoning.
\end{abstract}

\category{F4.1}{MATHEMATICAL LOGIC AND FORMAL LANGUAGES}{Mathematical Logic}
\category{H2.3}{DATABASE MANAGEMENT}{Languages}

\terms{Theory}

\keywords{Satisfiability, containment, unary view, decidability,
first order logic,
database query, database view, conjunctive query,
L\"{o}wenheim class,  monadic logic, unary logic, ontology reasoning}


\begin{document}
\maketitle

\section{Introduction}

The study of views in relational databases has attracted much
attention over the years.  Views are an indispensable
component for activities such as
data integration
and data warehousing \cite{Wid95,GQPRS95,LRO96}, where they
can be used as
``mediators'' for source information that is not directly accessible to
users.
This is especially helpful in
modelling the integration of data from
 diverse sources, such as legacy systems and/or the world wide web.

Much of the research related to views has addressed fundamental problems
such as containment and rewriting/optimisation of queries
using views (e.g. see \cite{U97,Levy01}).
In this paper, we examine the use of views in a somewhat different context,
where they are used as the basic unit for writing logical expressions.
We provide results on the related decision problem in this paper, for a range
of possible view definitions.  In particular, for the case where
views are monadic/unary conjunctive queries, we show that the corresponding
query logic is decidable.  This corresponds to an interesting new fragment
of first order logic.
On the application side, this decidable query language also has some interesting potential applications
for areas such as implication of complex dependencies, ontology reasoning and termination results for active rules.

\subsection{Informal Statement of the Problem}
Consider a relational vocabulary
$R_1, \ldots, R_p$
and a set of views
$V_1, \ldots , V_n$.
Each view definition corresponds to a first order
formula over the vocabulary. Some example views (using horn clause style
notation) are

\begin{grp}
$V_1(x_1,y_1) \leftarrow R_1(x_1,y_1), R_2(y_1,y_1,z_1), R_3(z_1,z_2,x_1), R_4(z_2,x_1)$\\
$V_2(z_1) \leftarrow R_1(z_1,z_1)$
\end{grp}

Each such view can be expanded into to a first order sentence, e.g.
$V_1(x_1,y_1) \Leftrightarrow \exists z_1,z_2 (R_1(x_1,y_1) \wedge
R_2(y_1,y_1,z_1), R_3(z_1,z_2,x_1) \wedge \neg R_4(z_2,x_1))$. A
{\em first order view query} is a first order formula expressed {\em
solely} in terms of the given views. e.g. $q_1=\exists x_1,y_1
((V_1(x_1,y_1) \vee V_1(y_1,x_1)) \wedge \neg V_2(x_1)) \wedge
\forall z_1 (V_2(z_1) \Rightarrow V_1(z_1,z_1))$ is an example first
order view query, but $q_2=\exists x_1,y_1 (V_1(x_1,y_1) \vee
R(y_1,x_1))$ is not.  By expanding the view definitions, every first
order view query can clearly be re-written to eliminate the views.
Hence, first order view queries can be thought of as a fragment of
first order logic, with the exact nature of the fragment varying
according to how expressive the views are permitted to be.

From a database perspective, first order view queries are particularly suited to
applications where the source
data is unavailable, but summary data (in the form of views) is.
Since many database and reasoning
languages are based on first order logic (or extensions
thereof), this makes it a useful choice for manipulating the views.

Our purpose in this paper is to determine, for what types of view definitions,
satisfiability (over both finite and infinite models) is decidable for the language.
If views can be binary, then this language is clearly as
powerful as first order logic over binary base relations,
and hence undecidable (see \cite{BGG96}).
The situation becomes far more interesting, when we restrict the
form that views may take --- in particular, when their arity must be unary.
Such a restriction has the effect of constraining which parts of the
underlying database can be ``seen'' by the view formula and also constrains
how such parts may be connected.

\subsection{Contributions}

The main contribution of this paper is the definition of a
language called the {\em first order unary conjunctive
view language} ($\ucv$) and a proof of its decidability.
As its name suggests, it uses unary arity views
defined by conjunctive queries\footnote{More generally, views may be
any existential formulas with one free variable, since this can be rewritten
into a disjunction of conjunctive formulas with one free variable.}.
We demonstrate that it is a maximal decidable class, in
the sense that increasing the expressiveness of the view definitions
results in undecidability.
Some interesting aspects of this decidability result are:

\begin{itemize}
\item It is well known that first order logic solely over monadic relations
is decidable \cite{lowenheim}, but the extension to dyadic
relations is undecidable \cite{BGG97}.   The first order unary conjunctive
view language can be seen as an interesting intermediate case between the two,
since although only monadic predicates (views) appear in the query, they
are intimately related to database relations of higher arity.
\item The language is able to express some interesting
properties, which might be applied to various kinds of reasoning over
ontologies.
It can also be thought of as a powerful generalisation of unary
inclusion dependencies \cite{CKV90}.  Furthermore, it has an interesting
characterisation as a decidable class of rules (triggers) for active databases.
\end{itemize}

To briefly give a feel for this decidable language, we next provide
some example unary conjunctive
views and a first order unary conjunctive view query defined over them:
\begin{grp}
$V_1(x) \leftarrow R_1(x,y), R_2(y,z), R_3(z,x'), R_4(x',x)$\\
$V_2(x) \leftarrow R_1(x,y), R_1(x,z), R_4(y,z) $\\
$V_3(x) \leftarrow R_1(x,y), R_1(x,z), R_4(y,y), R_4(z,x) $\\
$V_4(x) \leftarrow R_1(x,y), R_3(y,z), R_4(z,x'), R_4(x',y'), R_3(y',x) $\\
$\exists x (V_2(x) \wedge \neg V_1(x)) \wedge \neg\exists y (V_3(y) \wedge \neg V_4(y))$
\end{grp}

\subsection{Paper Outline}

The paper is structured as follows:
Section \ref{sec:prelim} defines the necessary
preliminaries and background concepts.
Section \ref{sec:ucvdef} presents the definition of the logic $\ucv$.
Section \ref{sec:decid} is the core section of the paper, where
the decidability result for the class $\ucv$ is proved.
Section \ref{extend}
shows that extensions to the language, such as allowing
negation, inequality or recursion in
views,
result in undecidability.
Section \ref{sec:app}
covers applications of the decidability results and then Section \ref{sec:expressive}
provides some results on expressivity.
Section \ref{sec:related} discusses related work
and section \ref{sec:summary}
summarises and discusses future work.

\section{Preliminaries}
\label{sec:prelim}
In this section, we state basic definitions and relevant results. The
reader is assumed to be familiar with standard results and notations
from mathematical logic (e.g. see \cite{End2001}). In the following,
formulas are always first-order. The symbol $\fo$ denotes the set of first
order formulas over any vocabulary $\sigma$. In addition, if
$\mcl{L} \subseteq \fo$ (i.e. $\mcl{L}$ is a fragment of $\fo$), we denote by
$\mcl{L}(\sigma)$ the set of formulas in $\mcl{L}$ over the vocabulary
$\sigma$.

\subsection{First-order logic}
A (relational) \defn{vocabulary} $\sigma$ is a tuple $\langle R_1,\ldots,R_n
\rangle$ of relation symbols with each $R_i$ associated with a
specified arity $r_i$. A (relational) $\sigma$-\defn{structure} $\bld{A}$ is the
tuple
\[
    \langle A; R_1^{\bld{A}},\ldots,R_n^{\bld{A}} \rangle
\]
where $A$ is a non-empty set, called the \defn{universe}
(of $\bld{A}$), and
$R_i^{\bld{A}}$ is an $r_i$-ary relation over $A$ interpreting $R_i$. We refer
to the elements in the set $A$ as \defn{the elements in $\bld{A}$}, or simply
by \defn{constants}\footnote{Although it is common in mathematical logic to use
the term ``constants'' to mean the interpretation of constant symbols in the
structure, no confusion shall arise in this article, as we assume the absence
of constant symbols in the vocabulary. Our results, nevertheless, easily extend
to vocabularies with constant symbols.} (of $\bld{A}$). In the
sequel, we write
$R_i$ instead of $R_i^{\bld{A}}$ when the meaning is clear from the context. We
also use $STRUCT(\sigma)$ to denote the set of all $\sigma$-structures. We
assume a countably infinite set VAR of variables. An \defn{instantiation} (or
\defn{valuation}) of a structure $\bld{\dbiI}$ is a function $v: \text{VAR}
\rarw I$. Extend this function to free tuples (i.e. tuple of variables) in the
obvious way. We use the
usual Tarskian notion of satisfaction to define $\dbiI \models \phi[v]$, i.e.,
whether $\phi$ is true in $\dbiI$ under $v$. If $\phi$ is a sentence, we
simply write $\dbiI \models \phi$. The \defn{image} of a structure
$\dbiI$ under a formula $\phi(x_1,\ldots,x_n)$ is
\[
    \phi(\dbiI) \assign \{ v(x_1,\ldots, x_n): \text{$v$ is an instantiation of
            \dbiI, and $\dbiI \models \phi[v]$} \}.
\]
In particular, if $n = 0$, we have that $\phi(\dbiI) \neq \emptyset$
iff $\dbiI \models \phi$. We say that two
$\sigma$-structures $\bld{A}$ and $\bld{B}$ {\em agree}
on ${\cal L}$ iff for all $\phi \in {\cal L}(\sigma)$ we have
$\bld{A} \models \phi \Leftrightarrow \bld{B} \models \phi$.

Following the convention in database theory, the \defn{(tuple) database
$\db(\bld{A})$
corresponding to the structure $\bld{A}$} (defined above) is the set
\[
    \{ R_i(t) : \text{$1 \leq i \leq n$ and $t \in R_i^{\bld{A}}$} \}.
\]
It is easy to see that such a database can be considered a structure with
universe
$adom(\bld{A})$, which is defined to be the set of all elements of $\bld{A}$
occurring in at least one relation $R_i$, and relations built appropriately from
$\db(\bld{A})$. Abusing terminologies, we refer to the elements of
$\db(\bld{A})$ as \defn{tuples (associated with $\bld{A}$)}. In addition, when
the
meaning is clear from the context, we shall also abuse the term \defn{free
tuple} to mean an atomic formula $R(u)$, where $R \in \sigma$ and $u$ is a
tuple of variables.

A formula $\phi$ is said to be \defn{satisfiable} if there exists a structure
$\bld{A}$ (either of finite or infinite size) such that $\phi(\bld{A}) \neq
\emptyset$; such a structure is said to be a \defn{model} for $\phi$. We say
that $\phi$ is \defn{finitely satisfiable} if there exists a finite structure
$\dbiI$ such that $\phi(\dbiI) \neq \emptyset$. Without loss of generality, we
shall focus only on  sentences when we are dealing with the satisfiability
problem. In fact, if $\phi$ has some free variables, taking its existential
closure preserves satisfiability [Indeed we shall see that the languages we consider are
closed under first-order quantification].

Given two $\sigma$-structures $\bld{A}, \bld{B}$, recall that $\bld{A}$ is a
\defn{substructure} of $\bld{B}$ (written $\bld{A} \substruct \bld{B}$) if
$A \subseteq B$ and $R^{\bld{A}} \subseteq R^{\bld{B}}$ for every relation
symbol $R$ in $\sigma$. We say that $\bld{A}$
is an \defn{induced} substructure of $\bld{B}$ (i.e. \defn{induced} by $A
\subseteq B$) if for
every relation
symbol $R$ in $\sigma$, $R^{\bld{A}} = R^{\bld{B}} \cap A^r$, where $r$ is the arity of
$R$. Now, a \defn{homomorphism} from \bld{A} to \bld{B} is a function
$h: A \rarw B$ such that, for every relation symbol $R$ in $\sigma$ and
$\bld{a} = (a_1,\ldots,a_r) \in R^{\bld{A}}$, it is the case that
$h(\bld{a}) \assign (h(a_1),\ldots,h(a_r)) \in R^{\bld{B}}$. An
\defn{isomorphism} is a  bijective homomorphism whose inverse is a homomorphism.

The \defn{quantifier rank} $\qrank(\phi)$ of of a formula $\phi$
is the maximum nesting depth of quantifiers in $\phi$.


\subsection{Views}
For our purpose, a \defn{view} over $\sigma$ can be thought of as an
arbitrary FO formula over $\sigma$. We say that a view $V$ is
\defn{conjunctive} if it can be written as a \defn{conjunctive query}, i.e. of
the form
\[
    \exists x_1,\ldots,x_n( R_1(u_1) \wedge \ldots \wedge R_k(u_k) )
\]
where each $R_i$ is a relation symbol, and each $u_i$ is a
\defn{free tuple} of appropriate arity. We adopt the horn clause
style notation for writing conjunctive views. For example, if
$\{y_1,\ldots,y_n\}$ is the set of free variables in the above
conjunctive query, then we can rewrite it as
\[
    V(y_1,\ldots,y_n) \larw R_1(u_1), \ldots, R_k(u_k)
\]
where $V(y_1, \ldots, y_n)$ is called the \defn{head} of $V$, and
the conjunction $R_1(u_1), \ldots, R_k(u_k)$ the \defn{body} of $V$.
The \defn{length} of the conjunctive view $V$ is defined to be the sum of the
arities of the relation symbols in the \emph{multiset} $\{R_1,\ldots,R_k\}$.
For example, the lengths of the two views $V$ and $V'$ defined as
\begin{eqnarray*}
    V(x) & \larw & E(x,y) \\
    V'(x) & \larw & E(x,y), E(y,z)
\end{eqnarray*}
are, respectively, two and four. Additionally, if $n = 1$ (i.e. has a head of
arity 1), the view is said to be unary. {\em Unless
stated otherwise, we shall say ``view'' to mean ``unary-conjunctive view
with neither equality nor negation in its body''}.

\subsection{Graphs}
We use standard definitions from graph theory (e.g. see \cite{Dies05}).
A \defn{graph} is a structure $\strG = (G,E)$ where $E$ is a binary relation.
The \defn{girth} of a graph is the length of its shortest cycle.
For two vertices $x,y \in G$, we denote their distance by $d_{\strG}(x,y)$ (or
just $d(x,y)$ when $\strG$ is clear from the context). For two sets $S_1$ and
$S_2$ of vertices in $\strG$, we define their distance to be
\[
    d_{\strG}(S_1,S_2) := \min\{ d_{\strG}(a,b) : \text{$a \in S_1$ and
                            $b \in S_2$} \}.
\]
In a weighted graph $\strG$ with weight $w_{\strG}: E \rarw \mathbb{N}$, the
weight $w_{\strG}(P)$ of a path $P$ in $\strG$ is just $\sum_{e \in E(P)}
w_{\strG}(e)$. We shall write $w$ instead of $w_{\strG}$ if the meaning is
clear from the context. In the sequel, we shall frequently mention trees and
forests. We always assume that any tree has a selected node, which we call a
\defn{root} of the tree. Given a tree $\strT = (T,E)$, we can partition $T$
according to the distance of the vertices from the root.

The \defn{Gaifman graph} (see \cite{Gai82}) associated with a structure
$\bld{A}$ is the weighted undirected multi-graph $\gaif(\bld{A}) =
(G,E)$ such that:
\begin{enumerate}
    \item $G = A$.
    \item The multi-set $E$ is defined as follows: for each $x,y \in G$, we
    put an $R(t)$-labeled edge $xy$ in $E$ with weight $r$ (the arity
    of $R$) iff $x$ and $y$ appear in a tuple $R(t)$ in $\db(\bld{A})$.
    [Notice that the multiplicity of $xy$ in $E$
    depends on the number of tuples in $\db(\bld{A})$ that contain both $x$
    and $y$ as their arguments.]
\end{enumerate}
Note also that the subgraph of $\gaif(\bld{A})$ induced by the set of all
elements of $\bld{A}$ in a tuple $t$ is the complete graph $K_r$, and so an
$L$-labelled edge is
adjacent to an edge $e \in E$ iff all $L$-labelled edges are adjacent (i.e.
connected) to the
edge $e$. For any $a,b \in A$, we define the \defn{distance} $d_{\bld{A}}(a,b)$
between $a$ and $b$ to be their distance in $\gaif(\bld{A})$. Also, extend this
distance function to tuples and sets of tuples by interpreting them as sets of
elements of $\bld{A}$ that appear in them. Any pair
of tuples $R(t)$ and $R'(t')$ in $\db(\bld{A})$ are said to be \defn{connected}
(in $\bld{A}$) if in $\gaif(\bld{A})$ some (and hence all) $R(t)$-labeled edge
is adjacent to some (and hence all) $R'(t')$-labeled edge.

\subsection{Unary formulas}
A \defn{unary formula} is an arbitrary FO formula \emph{without equality} such
that
each of its relation symbols has arity one. Let $\sigma$ be a vocabulary whose
relation symbols are of arity one. We shall use $\ufo(\sigma)$ to denote the
set of all unary formulas without equality over $\sigma$. Also, we define
$\ufo = \cup_{\sigma} \ufo(\sigma)$.  The following lemma will be useful for
proving expressiveness results in Section \ref{sec:expressive}.

\begin{lemma}
For every unary sentence, there exists an equivalent one of quantifier rank 1.
\label{lm:unaryform}
\end{lemma}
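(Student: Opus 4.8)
The plan is to exploit the fact that a unary vocabulary with relation symbols $U_1,\ldots,U_m$ partitions the universe into at most $2^m$ \emph{colour classes}, one for each Boolean type $\tau \subseteq \{1,\ldots,m\}$ recording which $U_i$ hold of an element. Up to isomorphism a unary structure is completely determined by the function assigning to each type $\tau$ the cardinality $n_\tau \in \{0,1,2,\ldots,\infty\}$ of its colour class. The first observation I would record is that any unary sentence is logically equivalent to a Boolean combination of sentences of the form ``there exist at least $k$ elements of type $\tau$'', i.e. of the form $\exists^{\geq k} x\, \chi_\tau(x)$ where $\chi_\tau(x)$ is the quantifier-free conjunction $\bigwedge_{i\in\tau} U_i(x) \wedge \bigwedge_{i\notin\tau} \neg U_i(x)$. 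Since $\exists^{\geq k} x\,\chi(x)$ (``there are at least $k$ distinct $x$ with $\chi$'') normally needs quantifier rank $k$, the real content of the lemma is that in a unary signature \emph{without equality} we can only ever \emph{detect} the presence or absence of a colour class, never count beyond one — so the target normal form is a Boolean combination of the rank-$1$ sentences $\exists x\,\chi_\tau(x)$, and a single outer layer of quantification over a propositional combination of the $\chi_\tau$ suffices.

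The key steps, in order: (1) Fix the finite vocabulary $\sigma = \{U_1,\ldots,U_m\}$ of the given sentence $\varphi$ and enumerate the $2^m$ types $\tau$; note the $\chi_\tau$ are mutually exclusive and exhaustive, so every structure satisfies $\forall x\bigvee_\tau \chi_\tau(x)$. (2) Prove the crucial \emph{indistinguishability} claim: if $\bld{A}$ and $\bld{B}$ are unary $\sigma$-structures such that for every type $\tau$, $\bld{A}$ has an element of type $\tau$ iff $\bld{B}$ does, then $\bld{A}$ and $\bld{B}$ agree on all equality-free unary sentences. The cleanest way is an \ef-style argument: because there is no equality, Duplicator can always answer a Spoiler move in $\bld{A}$ (say he plays $a$ of type $\tau$) by playing \emph{any} element of $\bld{B}$ of the same type $\tau$ — re-using previously chosen elements is allowed, since no atomic formula, equality included, can tell the repeat apart. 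Hence Duplicator wins the $n$-round game for every $n$, giving elementary equivalence. (3) Conclude that $\varphi$ is equivalent to the disjunction, over the set $\mathcal{S}$ of all ``realisable type-sets'' $S \subseteq \mathcal{P}(\{1,\ldots,m\})$ for which the structure with exactly the types in $S$ (each realised once) satisfies $\varphi$, of the sentence
\[
\psi_S \;\assign\; \exists x\Big( \bigvee_{\tau\in S}\chi_\tau(x)\Big)^{?}\!\!,
\]
more precisely of $\bigwedge_{\tau\in S}\exists x\,\chi_\tau(x) \;\wedge\; \bigwedge_{\tau\notin S}\neg\exists x\,\chi_\tau(x)$; rewriting $\neg\exists x\,\chi_\tau$ as $\forall x\,\neg\chi_\tau$ keeps everything at quantifier rank $1$, and a finite disjunction/conjunction of rank-$1$ sentences is again rank $1$. (4) If $\varphi$ is unsatisfiable the equivalent sentence is, say, $\exists x (U_1(x)\wedge\neg U_1(x))$, still rank $1$.

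The main obstacle is step (2): one must be careful that the equality-free \ef-argument genuinely licenses Duplicator to answer with repetitions and with \emph{any} same-type element, and that this is exactly where the ``without equality'' hypothesis is used — with equality the lemma is false, since one can say ``there exist two distinct elements''. A secondary subtlety is purely bookkeeping: the vocabulary $\sigma$ is finite for a fixed sentence, so there are only finitely many types and the resulting Boolean combination is a genuine (finite) formula; I would state this explicitly. Everything else — that mutually exclusive quantifier-free formulas combine as claimed, that $\neg\exists = \forall\neg$ does not raise quantifier rank, and that the disjunction ranges over a finite index set — is routine.
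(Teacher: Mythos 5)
Your proof is correct, but it takes a genuinely different route from the one the paper relies on. The paper does not argue the lemma itself; it defers to Lemma 21.12 of Boolos--Burgess--Jeffrey, which obtains the same normal form by purely syntactic manipulation (putting matrices in disjunctive normal form, distributing quantifiers over disjunctions, and pulling out conjuncts not containing the quantified variable, iterated from the innermost quantifier outward). You instead give a semantic argument: partition elements into the $2^m$ types $\chi_\tau$, prove via an equality-free Ehrenfeucht--Fra\"iss\'e (or direct induction-on-formulas) argument that two unary structures realising the same set of types satisfy the same equality-free sentences, and read off the normal form $\bigvee_{S}\bigl(\bigwedge_{\tau\in S}\exists x\,\chi_\tau(x)\wedge\bigwedge_{\tau\notin S}\forall x\,\neg\chi_\tau(x)\bigr)$ as the disjunction over satisfying type-sets. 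You correctly isolate the one point where care is needed --- that without equality Duplicator may answer with any element of the matching type, repeats included, which is exactly why counting beyond ``at least one'' is impossible and why the lemma fails with equality --- and your observation that Boolean combinations do not increase quantifier rank disposes of the rest. What each approach buys: the syntactic route is effective and gives explicit control over the size of the resulting formula (relevant if one cared about complexity), while yours is self-contained, yields as a byproduct the stronger structural fact that the equality-free theory of a unary structure is determined by its set of realised types, and in fact also delivers the extra conclusion the paper notes in its bracketed remark (a single quantified variable suffices). One cosmetic point: as stated the lemma should be read as ``quantifier rank at most 1,'' since an unsatisfiable sentence is equivalent to a rank-0 falsum; your step (4) papers over this harmlessly.
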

\begin{proof}
By a straightforward manipulation. See the proof of lemma 21.12 in
\cite{BBJ2002}. [Their proof actually gives more than the result they claim. In
fact, their construction converts an arbitrary unary sentence into one with one
unary variable and of quantifier rank 1.]
\end{proof}

\subsection{Ehrenfeucht-Fra\"isse Games}
We shall need a limited form of Ehrenfeucht-Fra\"isse games; for a general
account, the reader may consult \cite{Libkin2004}. The games
are played by two players, Spoiler and Duplicator, on two $\sigma$-structures
$\bld{A}$ and $\bld{B}$. The goal of Spoiler is to show that the structures
are different, while Duplicator aims to show that they are the same. The game
consists of a single round. Spoiler chooses a structure (say, $\bld{A}$) and an
element $a$ in it, after which Duplicator has
to respond by choosing an element $b$ in the other structure $\bld{B}$.
Duplicator wins the game iff the substructure of $\bld{A}$ induced
by $\{a\}$ is isomorphic to the substructure of $\bld{B}$ induced by $\{b\}$.
Duplicator has a winning strategy iff Duplicator has a winning move, regardless
of how Spoiler behaves.
\begin{proposition}[(Ehrenfeucht-Fra\"isse Games)]
Duplicator has a winning strategy on $\bld{A}$ and $\bld{B}$ iff $\bld{A}$
and $\bld{B}$ agree on first-order formulas over $\sigma$ of quantifier rank 1.
\end{proposition}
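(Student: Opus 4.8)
The plan is to establish the equivalence by routing both sides through the finite collection of atomic $1$-types realised in each structure; this is essentially the one-round specialisation of the classical Ehrenfeucht--Fra\"isse theorem, but it is quicker to argue directly. For a $\sigma$-structure $\bld{M}$ and an element $m \in M$, let $\mathrm{tp}_{\bld{M}}(m)$ denote the isomorphism type of the substructure of $\bld{M}$ induced by $\{m\}$. Since $\sigma = \langle R_1,\ldots,R_n \rangle$ is finite, this type is just the finite record, for each $R_i$, of whether the ``loop'' tuple $(m,\ldots,m)$ lies in $R_i^{\bld{M}}$; it is captured by a single quantifier-free formula $\delta_\tau(x)$ formed as a conjunction of literals $R_i(x,\ldots,x)$ or $\neg R_i(x,\ldots,x)$. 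Write $T(\bld{M}) = \{ \mathrm{tp}_{\bld{M}}(m) : m \in M \}$ for the set of atomic $1$-types realised in $\bld{M}$.

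First I would observe that Duplicator has a winning strategy on $\bld{A}$ and $\bld{B}$ if and only if $T(\bld{A}) = T(\bld{B})$. Indeed, when Spoiler picks $a \in A$, Duplicator wins the round exactly when she can answer with some $b \in B$ whose one-element induced substructure is isomorphic to that of $a$, i.e. with $\mathrm{tp}_{\bld{B}}(b) = \mathrm{tp}_{\bld{A}}(a)$; such a $b$ exists for every choice of $a$ precisely when $T(\bld{A}) \subseteq T(\bld{B})$, and letting Spoiler move instead in $\bld{B}$ gives the reverse inclusion. (Both universes are non-empty, so there is no degenerate case to worry about.)

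Next I would show that $\bld{A}$ and $\bld{B}$ agree on sentences over $\sigma$ of quantifier rank $1$ if and only if $T(\bld{A}) = T(\bld{B})$. For the ``only if'' direction this is immediate: for each atomic $1$-type $\tau$ the sentence $\exists x\, \delta_\tau(x)$ has quantifier rank $1$, and it holds in $\bld{M}$ exactly when $\tau \in T(\bld{M})$, so agreement on all quantifier-rank-$1$ sentences forces $T(\bld{A}) = T(\bld{B})$. For the ``if'' direction the key step is a normal-form observation: because quantifiers cannot be nested inside a quantifier-rank-$1$ sentence, and because $\sigma$ carries no constant symbols (hence no quantifier-free sentences), every such sentence is logically equivalent to a Boolean combination of sentences of the form $\exists x\, \psi(x)$ with $\psi$ quantifier-free in the single variable $x$. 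Each such $\exists x\, \psi(x)$ is in turn equivalent to the finite disjunction $\bigvee \{\, \exists x\, \delta_\tau(x) : \tau \text{ an atomic } 1\text{-type with } \delta_\tau \models \psi \,\}$, whose truth in $\bld{M}$ depends only on $T(\bld{M})$. Hence the truth value of any quantifier-rank-$1$ sentence in $\bld{M}$ is a function of $T(\bld{M})$, so $T(\bld{A}) = T(\bld{B})$ yields agreement. Chaining the two equivalences proves the proposition.

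I expect the only point requiring genuine care to be the normal-form claim for quantifier-rank-$1$ sentences: one must pull out sub-sentences from the scope of the lone quantifier, check that no further quantifier can appear underneath without exceeding rank $1$, and confirm that a one-variable quantifier-free formula can mention a higher-arity symbol $R_i$ only through its ``loop'' atom $R_i(x,\ldots,x)$ --- which is exactly the information recorded by a one-element induced substructure. The appeal to finiteness of $\sigma$ is what keeps both the formulas $\delta_\tau$ and the disjunctions above finite, and hence legitimate.
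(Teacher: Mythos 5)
Your proof is correct. The paper does not actually prove this proposition --- it is stated as a standard fact with a pointer to \cite{Libkin2004} --- and your argument via the set of realised atomic $1$-types is precisely the standard one-round specialisation of the Ehrenfeucht--Fra\"isse theorem that such a reference supplies: the reduction of the game to $T(\bld{A})=T(\bld{B})$, the normal form for quantifier-rank-$1$ sentences as Boolean combinations of $\exists x\,\psi(x)$ with $\psi$ quantifier-free in one variable, and the observation that such a $\psi$ can only mention the ``loop'' atoms $R_i(x,\ldots,x)$ are all handled correctly.
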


\subsection{Other Notation}
Regarding other notation we shall use
throughout the rest of the paper: we shall use $a,b$ for constants,
$x,y,z$ for variables, $u$ for free tuples, $U,V$ for views, $\mcl{U},
\mcl{V}$ for sets of views, $\sigma$ for vocabularies, $R_1, R_2,\ldots$ for
relation symbols,
$\bld{A}, \bld{B},\ldots$ for structures and
$A,B$ for their respective  universes. If $\db$ is a database (a set of
tuples), we use $adom(\db)$ to denote the set of constants in
$\db$. Finally,
given a $a \in adom(\db)$ and a ``new'' constant $b \notin \db$, we define
$\db[b/a]$ to be the database that is obtained from $\db$ by replacing every
occurrence of $a$ by $b$. The notation $\db[b_1/a_1,\ldots,b_n/a_n]$
is defined in the same way.

\section{Definition of First Order Unary-conjunctive-view Logic}
\label{sec:ucvdef}
Let $\sigma$ be an arbitrary vocabulary, and $\mcl{V}$ be a finite set of (unary conjunctive)
views over $\sigma$, which we refer to as a \defn{$\sigma$-view set}. We now
inductively define the set
$\ucv(\sigma,\mcl{V})$ of \defn{first order unary-conjunctive-view (UCV) queries/formulas
over the vocabulary $\sigma$ and a $\sigma$-view set $\mcl{V}$}:
\begin{enumerate}
\item if $V \in \mcl{V}$, then $V(x) \in \ucv(\sigma,\mcl{V})$; and
\item if $\phi, \psi \in \ucv(\sigma,\mcl{V})$, then the formulas
    $\neg \phi, \phi \wedge \psi$ and $\exists x\phi$ belong to
    $\ucv(\sigma,\mcl{V})$.
\end{enumerate}
The smallest set of so-constructed formulas defines the set
$\ucv(\sigma,\mcl{V})$. We denote the set of all UCV formulas over the
vocabulary $\sigma$ by $\ucv(\sigma)$, i.e. $\ucv(\sigma) \assign
\bigcup_{\mcl{V}} \ucv(\sigma,\mcl{V})$ where $\mcl{V}$ may be any
$\sigma$-view set. Further, the set of all UCV queries is denoted
by $\ucv$, i.e. $\ucv \assign \bigcup_{\sigma} \ucv(\sigma)$, where $\sigma$
is any vocabulary. As usual, we use the shorthands
$\phi \vee \psi$, $\phi \rarw \psi$, $\phi \lrarw \psi$, and $\forall x\phi$
for (respectively) $\neg (\neg \phi \wedge \neg \psi), \neg \phi \vee \psi,
(\phi \rarw \psi) \wedge (\psi \rarw \phi)$, and $\neg \exists x \neg \phi$.
Thus, the UCV language is closed under boolean combinations and first-order
quantifications. As an example, consider the UCV formula
\[
    q_1 = \exists x( V(x) \wedge \neg V'(x) )
\]
where $V$ and $V'$ are defined as
\begin{eqnarray*}
    V(x) & \larw & E(x,y) \\
    V'(x) & \larw & E(x,y), E(y,z) \\
\end{eqnarray*}
This formula asserts that there exists a
vertex from which there is an outgoing arc, but no outgoing directed walk
of length 2.

Let us make a few remarks on the expressive power of the logic $\ucv$ with respect
to other logics. It is easy to see that the UCV language strictly subsumes \ufo $\;$ (the L\"owenheim class
without equality \cite{lowenheim,BGG97}), as UCV queries can be defined  over
 \emph{any} relational vocabularies
(i.e. including ones that include $k$-ary relation symbols with $k >
1$). It is also easy to see that allowing any general existential
positive formula (i.e. of the form
$\exists\overline{x}\phi(\overline{x})$ where $\phi$ is a
quantifier-free formula with no negation) with one free variable,
does not increase the expressive power of the logic. Indeed, the
quantifier-free subformula $\phi$ can be rewritten in disjunctive
normal form without introducing negation, after which we may
distribute the existential quantifier across the disjunctions and
consequently transform entire formula to a disjunction of
conjunctive queries with one or zero free variables. Each such
conjunctive query can then be treated as a view.

%

There are two ways in which we can interpret a UCV formula. The standard way is to
think of a UCV query as an FO formula over the underlying vocabulary. Take
the afore-mentioned query $q_2$ as an example. We can interpret this
query as the formula
\[
    \exists x( \exists y,z(E(x,y) \wedge E(y,z)) \wedge \neg \exists y(E(x,y) )
\]
over the graph vocabulary. The non-standard way is to regard a UCV query $\phi$
as a unary formula over the view set. For example, we can think of
$q_2$ as a unary formula over the vocabulary $\sigma' = \langle V,V' \rangle$.
Now, if $\phi \in \ucv(\sigma,\mcl{V})$, then we denote by $\phi^{\mcl{V}}$ the
unary formula over $\mcl{V}$ corresponding to $\phi$ in the non-standard
interpretation of UCV queries.
However, for notational convenience, we shall write $\phi$ instead of
$\phi^{\mcl{V}}$ when the meaning is clear from the context. Given a vocabulary
$\sigma$ and a $\sigma$-view set $\mcl{V} = \{V_1,\ldots,V_n\}$, we may define
the function $\Lambda: STRUCT(\sigma) \rarw STRUCT(\mcl{V})$ such that for any
$\dbiI \in STRUCT(\sigma)$
\[
    \Lambda(\dbiI) \assign \langle I; V_1^{\Lambda(\dbiI)}, \ldots,
                    V_n^{\Lambda(\dbiI)} \rangle
\]
where $V_i^{\Lambda(\dbiI)} \assign V_i(\dbiI)$.
For example, let $\sigma = \langle E \rangle$ and $\mcl{V} = \{V,V'\}$ be as
above, and let
\[
    \dbiI = \langle \{1,2,3,4\}; E^{\dbiI} = \{(1,2),(2,3),(3,4)\} \rangle.
\]
Then, we have
\[
    \dbiJ \assign \Lambda(\dbiI) = \langle \{1,2,3,4\}, V^{\dbiJ} = \{1,2,3\},
                V'^{\dbiJ} = \{1,2\} \rangle.
\]
In the following, we shall reserve the symbol $\Lambda$ to denote this special
function. In addition, if $\dbiJ \in STRUCT(\mcl{V})$ and
there exists a structure $\dbiI \in STRUCT(\sigma)$ such that $\Lambda(\dbiI) =
\dbiJ$, we say that the structure $\dbiJ$ is \defn{realizable} with respect
to the vocabulary $\sigma$ and the view set $\mcl{V}$, or that $\dbiI$
\defn{realizes} $\dbiJ$. We shall omit mention of $\sigma$ and $\mcl{V}$ if they
are understood by context.

A number of remarks about the notion of realizability are in order. First,
some unary structures are \emph{not} realizable with respect to a given
view
set $\mcl{V}$. For example, the query $q_2$
has infinitely many models if treated as a unary formula, but none of these
models are realizable, since $V' \subseteq V$. Second, if
$\phi \in \ucv(\sigma,\mcl{V})$ has a model $\dbiI$, then the
structure $\Lambda(\dbiI)$ over $\mcl{V}$ is a model for $\phi^{\mcl{V}}$.
In other words, if a UCV query is satisfiable, then it is
also satisfiable if treated as a unary formula. Conversely, it is also clearly
true that a UCV query is satisfiable, if it is satisfiable when treated as a
unary formula and that at least one of its models is realizable. More precisely,
if $\Lambda(\dbiI)$ is a model for $\phi^{\mcl{V}}$, then $\dbiI$ is a model
for $\phi$. So, combining these, we have
$I \models \phi$ iff $\Lambda(I) \models \phi^{\mcl{V}}$. So, we immediately have the following lemma:
\begin{lemma}
    Suppose $\bld{A}, \bld{B} \in STRUCT(\sigma)$ and
    $\phi \in \ucv(\sigma,\mcl{V})$. Then, for
    $\Lambda: STRUCT(\sigma) \rarw STRUCT(\mcl{V})$ defined above,
    the following statements are equivalent:
    \begin{enumerate}
    \item $\bld{A} \models \phi$ iff $\bld{B} \models \phi$,
    \item $\Lambda(\bld{A}) \models \phi^{\Lambda}$ iff $\Lambda(\bld{B}) \models
        \phi^{\Lambda}$.
    \end{enumerate}
\label{lem:realize}
\end{lemma}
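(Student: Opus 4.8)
The plan is to reduce the lemma to the single fact, essentially already sketched in the paragraph preceding it, that for every $\dbiI \in STRUCT(\sigma)$ and every $\phi \in \ucv(\sigma,\mcl{V})$ we have $\dbiI \models \phi$ if and only if $\Lambda(\dbiI) \models \phi^{\mcl{V}}$; call this equivalence $(\star)$. Granting $(\star)$, both directions of the lemma are immediate: instantiating $(\star)$ at $\bld{A}$ and at $\bld{B}$ yields $\bld{A} \models \phi \Leftrightarrow \Lambda(\bld{A}) \models \phi^{\mcl{V}}$ and $\bld{B} \models \phi \Leftrightarrow \Lambda(\bld{B}) \models \phi^{\mcl{V}}$, so substituting these two equivalences into the biconditional in statement (1) produces statement (2), and substituting them the other way produces (1) from (2). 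Thus all the content lies in $(\star)$.

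To establish $(\star)$ I would proceed by structural induction on the UCV formula $\phi$ (equivalently, on its non-standard reading $\phi^{\mcl{V}}$ as a unary formula over $\mcl{V}$), carrying a valuation $v : \text{VAR} \to I$ through the argument to handle free variables; since $\Lambda$ leaves the universe unchanged, the very same $v$ is also a valuation into $\Lambda(\dbiI)$. Base case $\phi = V(x)$ with $V \in \mcl{V}$: by the definition $V^{\Lambda(\dbiI)} := V(\dbiI)$ we get $\dbiI \models V(x)[v]$ iff $v(x) \in V(\dbiI) = V^{\Lambda(\dbiI)}$ iff $\Lambda(\dbiI) \models V(x)[v]$, the right-hand occurrence of $V(x)$ now being read as an atomic unary formula over $\mcl{V}$. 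The Boolean cases $\neg\psi$ and $\psi_1 \wedge \psi_2$ follow directly from the induction hypothesis and the Tarskian clauses for $\neg$ and $\wedge$. For $\exists x\,\psi$: a witness $a \in I$ for $\dbiI \models \psi$ under the valuation that agrees with $v$ except that $x \mapsto a$ is literally the same element of the universe of $\Lambda(\dbiI)$, and applying the induction hypothesis to $\psi$ under this modified valuation closes the case in both directions. Specialising to sentences $\phi$ gives exactly the equivalence used in the first paragraph, and since we only ever consider satisfiability of sentences, this is all that is needed.

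There is no real obstacle here; the only thing requiring care is the bookkeeping in the base case --- keeping the two readings of the symbol $V(x)$ distinct (the view $V$ applied to $x$ over $\sigma$ versus an atomic monadic predicate over $\mcl{V}$) and observing that the clause $V_i^{\Lambda(\dbiI)} := V_i(\dbiI)$ in the definition of $\Lambda$ is precisely what forces the two readings to coincide. Alternatively, one may dispense with the induction altogether and simply invoke the displayed identity ``$I \models \phi$ iff $\Lambda(I) \models \phi^{\mcl{V}}$'' from the discussion immediately above the lemma, whereupon the proof reduces to the one-line substitution in the first paragraph.
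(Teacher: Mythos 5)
Your proposal is correct and follows the same route as the paper, which derives the lemma immediately from the equivalence $\dbiI \models \phi$ iff $\Lambda(\dbiI) \models \phi^{\mcl{V}}$ established in the discussion just before the statement. Your structural induction merely fills in the details that the paper leaves as informal remarks.
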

This lemma is useful when combined with Ehrenfeucht-Fra\"isse games. For
example, suppose that we are given a model $\bld{A}$ for $\phi$, and we
construct a ``nicer'' structure $\bld{B}$ that, we wish, satisfies $\phi$. If
we can prove that the second statement in the lemma (which is often easier to
establish as views have arity one), we might deduce that $\bld{B} \models \phi$.

\section{Decidability of UCV Queries}
\label{sec:decid}
In this section, we prove our main result that satisfiability is decidable for $\ucv$ formulas.
Our main theorem stipulates that $\ucv$ has the bounded model property.

\begin{theorem}
\label{th:fin_s}
Let $\phi$ be a formula in $\ucv$. Suppose, further, that $\phi$ contains precisely the
views in the view set $\mcl{V}$, and relation symbols in the vocabulary
$\sigma$, with $m$ being the maximum length of the views in $\mcl{V}$,
and $p = |\sigma|$. If $\phi$ is satisfiable, then it has a model using at most
$2^{2^{q(p,m)}}$ elements, for some fixed polynomial $q$ in $p$ and $m$.
\end{theorem}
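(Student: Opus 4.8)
The plan is to reduce satisfiability of a $\ucv$ formula $\phi$ to satisfiability of its non-standard interpretation $\phi^{\mcl{V}}$ as a unary formula, and then to control the size of a model by a two-stage bounding argument: first bound the ``view-type information'' that needs to appear in a model, then bound the number of base-structure elements needed to realize each required view type. By Lemma~\ref{lem:realize} and the surrounding discussion, $\dbiI \models \phi$ iff $\Lambda(\dbiI) \models \phi^{\mcl{V}}$, so it suffices to build a small $\sigma$-structure whose image under $\Lambda$ satisfies $\phi^{\mcl{V}}$. By Lemma~\ref{lm:unaryform} we may assume $\phi^{\mcl{V}}$ has quantifier rank $1$; over the $n = |\mcl{V}|$ unary predicates, a quantifier-rank-$1$ unary sentence is determined by which of the $2^n$ ``colors'' (subsets of $\mcl{V}$) are realized and whether the structure is above certain small cardinality thresholds (at most the number of quantifiers, hence at most $|\phi|$). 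So the target is: realize a prescribed set $C$ of colors, each color realized by at least one element (finitely many per color suffices for rank~$1$), while staying realizable with respect to $\sigma$ and $\mcl{V}$.

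The heart of the argument is therefore: given that some (possibly infinite) structure $\bld{A}$ realizes a given set of colors, construct a \emph{finite} $\sigma$-structure realizing the same colors, with a doubly-exponential size bound. The natural approach is, for each color $c$ that must be realized, to pick a witness element $a_c$ in $\bld{A}$ and take a bounded-radius neighborhood of $a_c$ in the Gaifman graph $\gaif(\bld{A})$: since each view in $\mcl{V}$ has length at most $m$, whether $a_c$ satisfies a view $V$ depends only on tuples within Gaifman-distance roughly $m$ of $a_c$, and whether $a_c$ \emph{fails} a view $V$ is automatically preserved when we pass to a substructure. The delicate point is that taking disjoint neighborhoods for different colors can only \emph{remove} tuples, which preserves non-membership in views but may destroy membership; so each local neighborhood $N_c$ must be chosen to already contain full witnessing subdatabases for every view in $c$. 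One then ``trims'' and ``stitches'': keep, for each required color, a witnessing gadget of size polynomial in $m$ (a conjunctive-query witness has at most $m$ tuples hence $O(m)$ elements, and these pull in further views, so one iterates the closure a bounded number of times — the number of distinct colors, $2^n$ — before it stabilizes). Making the components pairwise far apart in the Gaifman graph (distance $> m$) guarantees no new view instances are created across components and no spurious colors appear. The number of components is at most $2^n$, and $n = |\mcl{V}| \le |\phi|$; but $|\phi|$ is not bounded in terms of $p$ and $m$ alone, so a further observation is needed: a $\ucv$ formula over vocabulary $\sigma$ with view-length bound $m$ can mention only finitely many \emph{semantically distinct} views, namely those conjunctive queries of length $\le m$ over $p$ relation symbols, and there are at most $2^{\text{poly}(p,m)}$ such queries up to logical equivalence. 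Hence $n$ itself may be taken $\le 2^{\text{poly}(p,m)}$, giving $2^n \le 2^{2^{\text{poly}(p,m)}}$ components, each of polynomial size — yielding the claimed $2^{2^{q(p,m)}}$ bound.

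I expect the main obstacle to be the \emph{closure-under-colors} step: when we insert a witnessing subdatabase for the views in some target color $c$, the newly added tuples and elements may themselves satisfy other views, changing the colors of auxiliary elements and potentially of the witness $a_c$ itself. Controlling this requires arguing that the iteration of ``add witnesses for all currently-violated requirements'' terminates after boundedly many rounds with a gadget realizing exactly the intended color on its distinguished element and only ``harmless'' colors elsewhere, or alternatively folding auxiliary elements' colors into the set $C$ of colors we are required to realize anyway. A clean way to handle this is to work with \emph{$\equivUCV{1}$-types} of elements and show that the relevant local structures can be chosen from a finite palette whose size is doubly exponential in the input parameters; the connectivity/adjacency facts about Gaifman graphs recorded in Section~\ref{sec:prelim} (in particular that the elements of a single tuple form a clique $K_r$, so distance and connectivity of tuples behave well) are exactly what makes the ``far-apart components do not interfere'' claim go through. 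The polynomial $q$ then emerges by bookkeeping: one exponent absorbs the count of distinct length-$\le m$ conjunctive queries over $p$ symbols, and the outer exponent absorbs the number of color-combinations, with the per-component polynomial size (and the rank-$1$ cardinality thresholds from $\phi$, which can be capped at the size already forced by the colors) contributing only lower-order terms.
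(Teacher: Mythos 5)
Your setup is right and matches the paper's: reduce to the equality-free unary formula $\phi^{\mcl{V}}$ via $\Lambda$, observe that by the quantifier-rank-1 normal form satisfiability depends only on which ``colors'' (complete conjunctions of views and negated views) are non-empty, enumerate all $\leq m(mp)^m$ inequivalent views of length at most $m$ so that the color palette depends only on $p$ and $m$, and use monotonicity of conjunctive queries plus Gaifman locality to control interference between far-apart pieces. The gap is in the step you yourself flag as the main obstacle: the claim that the iteration ``add witnessing subdatabases for all currently-needed view memberships'' terminates after boundedly many rounds, yielding per-color gadgets of polynomial size. This is false. Take the paper's own running example $\forall x(V_1(x)\wedge\neg V_2(x))$ with $V_1(x)\larw E(x,y)$ and $V_2(x)\larw E(x,x)$: every element needs an outgoing edge to a \emph{distinct} element, which in turn needs its own outgoing edge, so the naive closure never stabilizes. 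The number of colors bounds the number of element \emph{types}, not the number of fresh witnesses each newly added element demands. The only way to obtain a finite structure is to route some justifications back into already-built material, i.e., to create cycles, and that is exactly where the danger lies: linking or identifying elements can create new short paths in the Gaifman graph that make some element satisfy a view of length $\leq m$ that it did not satisfy before, changing its color and invalidating the appeal to Lemma \ref{lm:equiv}.

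The paper's proof is essentially a machine for doing this re-routing safely: it unfolds the given model into a (possibly infinite) \emph{justification forest}, renames constants so that distinct nodes share almost nothing, makes $\Delta$ disjoint copies, truncates every tree at depth $h = cm$, and re-justifies each truncated leaf by grafting onto it the root justification of the corresponding tree in \emph{another} copy, where the pattern of cross-copy links forms a $\delta$-regular graph of girth at least $cm$. The high girth guarantees that the new links create no two elements at Gaifman-distance $\leq m$ that were not already close, and Proposition \ref{prop:part} then decomposes any would-be witnessing tuple-set at a re-linked leaf into pieces local to the old and to the new justification, which can be swapped independently because the two relevant constants lie in the same equivalence class. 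Nothing in your proposal plays the role of this truncate-and-relink step, and without it the construction does not produce a finite model.
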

Before we prove this theorem, we first derive some corollaries. Simple
algebraic manipulations yield the following corollary.
\begin{corollary}
Continuing from Theorem \ref{th:fin_s}, if $n$ is the size of (the parse tree of) a
satisfiable formula $\phi$, then $\phi$ has a model of size
$2^{2^{g(n)}}$ for some fixed polynomial $g$ in $n$.
\label{th:fin_s2}
\end{corollary}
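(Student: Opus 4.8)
The plan is to deduce Corollary~\ref{th:fin_s2} from Theorem~\ref{th:fin_s} by bounding the two parameters $m$ (the maximum length of views in $\mcl{V}$) and $p = |\sigma|$ in terms of $n$, the size of the parse tree of $\phi$. Since every view occurring in $\phi$, along with every relation symbol occurring in those views, must be ``written into'' $\phi$ (or into the associated view definitions, which are part of the input presenting $\phi$), both $m$ and $p$ are at most linear in $n$ --- in fact bounded by $n$ itself, since the parse tree records each view atom and each relation symbol must appear in some view body counted by the syntax. I would state this explicitly: $m \leq n$ and $p \leq n$.

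Then I would simply substitute into the bound from Theorem~\ref{th:fin_s}. We have a model of size at most $2^{2^{q(p,m)}}$ for a fixed polynomial $q$. Since $q$ is monotone in its arguments (or can be taken so, replacing $q$ by a dominating polynomial with nonnegative coefficients), and $p, m \leq n$, we get $q(p,m) \leq q(n,n) =: g(n)$, where $g$ is the single-variable polynomial obtained by this substitution. Hence $\phi$ has a model of size at most $2^{2^{g(n)}}$, which is the claimed bound.

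The only mildly delicate point --- and the one I would be careful to spell out --- is the relationship between the ``size of the parse tree'' of $\phi$ and the parameters $p$ and $m$. One must be clear about what is being measured: if $n$ counts only the nodes of the UCV parse tree (view atoms, $\neg$, $\wedge$, $\exists$), then the view definitions themselves contribute their own syntactic weight, and we should either fold the total size of the view definitions into $n$ or observe that each view appearing in $\phi$ contributes at least one parse-tree node, and the length of each such view is separately bounded by the size of its definition. In either reading the conclusion is the same: $m$ and $p$ are polynomially (indeed linearly) bounded in the natural notion of input size, so the double-exponential bound in $n$ follows. This is why the corollary is correctly described as following by ``simple algebraic manipulations'': there is no new combinatorial content, only the observation that a polynomial in two linearly-bounded quantities is a polynomial in $n$.
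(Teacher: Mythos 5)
Your proposal is correct and matches what the paper intends: the paper gives no explicit proof beyond the remark that the corollary follows by ``simple algebraic manipulations,'' and your argument --- bounding $m$ and $p$ linearly by the input size $n$ and substituting into the polynomial $q$ of Theorem~\ref{th:fin_s} to obtain $g(n) = q(n,n)$ --- is precisely that manipulation. Your care about whether $n$ counts the view definitions is a reasonable and welcome clarification, not a deviation.
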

Corollary \ref{th:fin_s2} immediately leads to the decidability of 
satisfiability for $\ucv$. We can in fact derive a tighter bound.
\begin{theorem}
Satisfiability for the $\ucv$ class of formulas is in 2-NEXPTIME.
\label{th:timecomp}
\end{theorem}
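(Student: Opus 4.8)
The plan is to leverage Theorem~\ref{th:fin_s} (equivalently Corollary~\ref{th:fin_s2}) directly: we already know a satisfiable $\ucv$ formula $\phi$ has a model whose universe $A$ has size at most $N = 2^{2^{g(n)}}$, where $n$ is the size of $\phi$'s parse tree. The naive decision procedure -- guess a structure of size $N$ and verify $\bld{A} \models \phi$ -- is not obviously in 2-NEXPTIME, because the \emph{description} of such a structure is of size roughly $N^{\max_i r_i} = 2^{2^{O(g(n))}}$ per relation, which is doubly exponential, so guessing the full structure and writing it down already costs doubly-exponential \emph{time}, which is fine, but we must be careful that verification is no worse. So the first thing I would do is pin down precisely what 2-NEXPTIME means here: a nondeterministic Turing machine running in time $2^{2^{\text{poly}(n)}}$. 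Since the bound $N$ from Corollary~\ref{th:fin_s2} is itself of the form $2^{2^{g(n)}}$, the certificate (a full $\sigma$-structure on $N$ elements, i.e.\ the lists of tuples for each $R_i \in \sigma$, each of size at most $N^{r_i} \le N^m$) has total size $2^{2^{O(g(n))}} \cdot \text{poly}(n)$, which is still $2^{2^{\text{poly}(n)}}$. So the machine can nondeterministically write down such a structure within the time budget.

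Second, I would argue that verification $\bld{A} \models \phi$ can be done deterministically in time polynomial in $N + n$, hence within $2^{2^{\text{poly}(n)}}$. The cleanest route is to use the non-standard interpretation from Lemma~\ref{lem:realize}: first compute $\Lambda(\bld{A})$, i.e.\ for each view $V_i \in \mcl{V}$ compute $V_i(\bld{A}) \subseteq A$. Each $V_i$ is a conjunctive query of length $\le m$, so it has at most $m$ existentially quantified variables; evaluating it by brute force over $\bld{A}$ costs $O(|A|^{m+1} \cdot \text{poly}(m)) = N^{O(m)}$, again doubly exponential in $n$ but well within budget. Having computed the unary structure $\Lambda(\bld{A}) \in STRUCT(\mcl{V})$, we then evaluate the unary formula $\phi^{\mcl{V}}$ on it by the obvious recursion over the parse tree of $\phi$, maintaining for each subformula with one free variable the subset of $A$ it defines (each Boolean/quantifier step costs $O(N)$ set operations, and there are $O(n)$ nodes), giving $O(nN)$ time. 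By the equivalence $\bld{A} \models \phi \iff \Lambda(\bld{A}) \models \phi^{\mcl{V}}$ noted just before Lemma~\ref{lem:realize}, this correctly decides whether the guessed $\bld{A}$ is a model.

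Putting it together: the nondeterministic algorithm (i) computes the bound $N = 2^{2^{g(n)}}$ from $n = |\phi|$; (ii) guesses a $\sigma$-structure $\bld{A}$ with $|A| \le N$ together with, for $\phi \in \ucv(\sigma,\mcl{V})$, the set $\mcl{V}$ of views actually occurring in $\phi$ (read off from $\phi$); (iii) deterministically verifies $\bld{A} \models \phi$ as above and accepts iff it holds. Correctness in one direction is the guess-and-check soundness; in the other direction it is exactly Corollary~\ref{th:fin_s2}, which guarantees that a satisfiable $\phi$ has \emph{some} model of size $\le N$, so some nondeterministic branch succeeds. Every step runs in time $2^{2^{\text{poly}(n)}}$, so satisfiability for $\ucv$ is in 2-NEXPTIME.

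The step I expect to need the most care is not any single complexity estimate -- those are routine once the model-size bound is in hand -- but rather making sure the \emph{encoding conventions} line up: specifically, that "size of the parse tree of $\phi$" in Corollary~\ref{th:fin_s2} is polynomially related to the input length under whatever reasonable encoding of $\ucv$ formulas one fixes (including the bodies of the views in $\mcl{V}$), so that $2^{2^{g(n)}}$ genuinely is doubly exponential \emph{in the input}. One should also double-check that $m$ (max view length) and $p = |\sigma|$ are both bounded by $n$, so the $N^{O(m)}$ cost of computing $\Lambda(\bld{A})$ stays $2^{2^{\text{poly}(n)}}$; this is immediate since the views and relation symbols all appear within $\phi$. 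With these bookkeeping points settled, the theorem follows.
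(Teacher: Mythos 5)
Your proposal is correct and follows essentially the same route as the paper: invoke the doubly exponential model-size bound of Corollary~\ref{th:fin_s2}, then guess a structure of that size and verify $\bld{A}\models\phi$, which the paper packages as Proposition~\ref{prop:timecomp} using the generic $O(n\cdot|\bld{A}|^n)$ model-checking bound rather than your two-stage evaluation via $\Lambda$. The only quibble is your $O(nN)$ estimate for evaluating $\phi^{\mcl{V}}$ (subformulas of a UCV formula may have several free variables, so the defined relations live in $A^k$ rather than $A$), but even the naive $O(nN^n)$ bound stays within the $2^{2^{\mathrm{poly}(n)}}$ budget, so nothing is lost.
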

This theorem follows immediately from the following proposition and corollary
\ref{th:fin_s2}.
\begin{proposition}
Let $s$ be a non-decreasing function with $s(n) \geq n$. Then, the problem of
determining whether an FO sentence has a model of size at most $s(n)$, where
$n$ is the size of the input formula, can be decided
nondeterministically in $2^{O(n \log(s(n)))}$ steps.
\label{prop:timecomp}
\end{proposition}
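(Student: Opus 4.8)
The plan is to guess a model of size at most $s(n)$ and verify that it satisfies the sentence, being careful about the encoding sizes so the whole procedure fits in $2^{O(n\log s(n))}$ nondeterministic steps. First I would fix the input FO sentence $\psi$ of size $n$ over some vocabulary $\sigma$; without loss of generality every relation symbol occurring in $\psi$ has arity at most $n$, and there are at most $n$ such symbols. Set $N = s(n)$ and take the universe to be $\{1,\ldots,N\}$. Nondeterministically guess, for each relation symbol $R_i$ of arity $r_i \le n$ appearing in $\psi$, a subset of $\{1,\ldots,N\}^{r_i}$; this guess has size $\sum_i N^{r_i} \le n\cdot N^n = n\cdot 2^{n\log N}$, which is $2^{O(n\log s(n))}$ since $s(n)\ge n$ guarantees $\log N \ge \log n$ and absorbs the leading factor. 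This gives a concrete finite $\sigma$-structure $\bld{A}$.

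Next I would verify $\bld{A}\models\psi$ by the standard bottom-up evaluation of $\psi$ over $\bld{A}$: for each subformula $\theta(\bar x)$ of $\psi$ with free variables among a tuple $\bar x$ of length at most $n$, compute the relation $\theta(\bld{A})\subseteq \{1,\ldots,N\}^{|\bar x|}$. Atomic formulas are read off the guessed tables; Boolean connectives are set operations on tables of size at most $N^n$; and a quantifier $\exists x\,\theta$ is handled by projecting away one coordinate, again touching at most $N^n \cdot N = N^{n+1}$ tuples. Since $\psi$ has at most $n$ subformulas and each step processes at most $N^{n+1} = 2^{(n+1)\log N}$ tuples with polynomial-in-$n$ overhead per tuple, the total running time of the verification is $n\cdot 2^{(n+1)\log N}\cdot \mathrm{poly}(n) = 2^{O(n\log s(n))}$. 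The sentence is accepted iff the table computed for $\psi$ itself (a $0$-ary relation) is nonempty, i.e. equals $\{()\}$.

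Putting the two phases together, the nondeterministic machine runs in $2^{O(n\log s(n))}$ steps and has an accepting computation iff $\psi$ has a model of size at most $s(n)$ — note that if $\psi$ has such a model at all, it has one on the universe $\{1,\ldots,N\}$ by relabelling, and this is exactly one of the guessed structures. The one point that needs a little care rather than being purely routine is the bookkeeping of encoding lengths: one must check that the arity bound $r_i\le n$ really holds (the arity of a symbol is at most the number of symbols in the formula, hence at most $n$) and that the $n^{O(1)}$ overheads and the number-of-subformulas factor are all swallowed by the $2^{O(n\log s(n))}$ bound, which uses $s(n)\ge n$ in an essential way. Everything else is a direct unwinding of Tarskian semantics.
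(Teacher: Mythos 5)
Your proof is correct and follows essentially the same route as the paper: guess a structure on a universe of size at most $s(n)$ and run the standard bottom-up model-checking procedure, which the paper delegates to a citation of Libkin's $O(n\cdot|\bld{A}|^n)$ evaluation bound while you spell it out (and are, if anything, slightly more careful about the encoding size when the arities are not fixed). No gaps.
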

\begin{proof}
We may use any reasonable encoding $\code(\bld{A})$ of a finite structure
$\bld{A}$ in bits (e.g. see \cite[Chapter 6]{Libkin2004}). The size of the
encoding, denoted $|\bld{A}|$, is polynomial in $|A|$.
We first guess a structure $\bld{A}$ of size at most $s(n)$. Let $s' = |A|$.
Since the size $|\bld{A}|$ of the encoding of $\bld{A}$ is polynomial in $s'$,
the guessing procedure takes $O(s^k(n))$ time steps for some constant $k$. We,
then, use the usual procedure for evaluating whether $\bld{A} \models \phi$.
This can be done in $O(n \times |\bld{A}|^n)$ steps (e.g. see
\cite[Proposition 6.6]{Libkin2004}). Simple algebraic manipulations give
the sought after upper bound.
\end{proof}

Observe that a lower bound for satisfiability of $\ucv$ formulas follows immediately
from the NEXPTIME completeness for satisfiability of \ufo $\;$ formulas given in \cite{BGG97}

\begin{theorem}
Satisfiability for the $\ucv$ class of formulas is NEXPTIME hard.
\end{theorem}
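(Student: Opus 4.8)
The plan is to give a polynomial-time (indeed linear-time) many-one reduction from satisfiability of \ufo $\;$ sentences to satisfiability of $\ucv$ sentences, and then to invoke the NEXPTIME-hardness of the former from \cite{BGG97}. The reduction is essentially the observation, already made in Section~\ref{sec:ucvdef}, that $\ucv$ syntactically subsumes \ufo: every monadic predicate is a trivial unary conjunctive view.

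Concretely, let $\phi$ be a \ufo $\;$ sentence over a unary vocabulary $\sigma = \langle P_1,\ldots,P_k\rangle$. Regard $\sigma$ as an ordinary relational vocabulary, each $P_i$ now being a unary \emph{base} relation symbol, and let $\mcl{V} = \{V_1,\ldots,V_k\}$ be the $\sigma$-view set consisting of the trivial views $V_i(x) \larw P_i(x)$, each of length $1$. Obtain $\hat\phi \in \ucv(\sigma,\mcl{V})$ from $\phi$ by replacing every atom $P_i(x)$ with $V_i(x)$; this transformation is clearly computable in time linear in $|\phi|$, since it merely relabels the leaves of the parse tree.

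It then remains to check that $\phi$ is satisfiable iff $\hat\phi$ is. For any $\sigma$-structure $\bld{A}$ we have $V_i(\bld{A}) = P_i^{\bld{A}}$ by definition of $V_i$, so $\Lambda(\bld{A})$ is just $\bld{A}$ itself up to the renaming of the relation symbols $P_i$ to $V_i$. Hence $\bld{A} \models \phi$ iff $\Lambda(\bld{A}) \models \hat\phi^{\mcl{V}}$, which by the discussion preceding Lemma~\ref{lem:realize} is equivalent to $\bld{A} \models \hat\phi$. Thus $\phi$ and $\hat\phi$ have exactly the same models, and in particular one is satisfiable precisely when the other is. Finite versus unrestricted satisfiability is not an issue here, since \ufo $\;$ enjoys the finite model property and so does $\ucv$ by Theorem~\ref{th:fin_s}.

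Since $\phi \mapsto \hat\phi$ is a polynomial-time reduction and satisfiability of \ufo $\;$ sentences is NEXPTIME-hard by \cite{BGG97}, satisfiability of $\ucv$ formulas is NEXPTIME-hard, as claimed. There is no genuine obstacle in this argument; the only points deserving a word of care are verifying that the syntactic embedding preserves satisfiability in the sense above and that it runs in polynomial time, both of which are immediate from the definition of $\Lambda$ and the realizability discussion.
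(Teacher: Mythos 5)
Your proposal is correct and takes essentially the same route as the paper, which simply observes that $\ucv$ subsumes \ufo\ (via trivial views $V_i(x) \larw P_i(x)$) and inherits NEXPTIME-hardness from the NEXPTIME-completeness of \ufo\ satisfiability in \cite{BGG97}. Your write-up merely makes explicit the satisfiability-preservation and polynomial-time checks that the paper leaves as immediate.
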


What remains now is to prove theorem \ref{th:fin_s}.
\begin{proof}[of theorem \ref{th:fin_s}]
Let $\phi, m, p$ be as stated in theorem \ref{th:fin_s}. We begin by
first enumerating all possible views over $\sigma$ of length at most
$m$.   As we shall see later in the proof of Subproperty
\ref{subp:prune}, doing so will help facilitate the correctness of
our construction of a finite model, since enumerating all such views
effectively allows us to determine all possible ways the model may
be ``seen'' by views, or parts of views. Let $\mcl{U} =
\{V_1,\ldots,V_N\}$ be the set of all non-equivalent views obtained.
By elementary counting, one may easily verify that $N \leq m(mp)^m$.
Indeed, each view is composed of its head and its body, whose length
is bounded by $m$. The body is a set of conjuncts that we may fix in
some order. There are at most $m$ variables that the head can take.
Each position in the body is a variable ($m$ choices) that is part
of a relation $R$ ($p$ choices). The upper bound is then immediate.

Let $\dbiI_0$ be a (possibly infinite) model for $\phi$. [If it is infinite,
by the L\"owenheim-Skolem theorem, we may assume that it is countable.] Without loss
of generality, we may assume that there exists a ``universe'' relation $U$ in
$\dbiI_0$ which contains each constant in $adom(\dbiI_0)$.
Otherwise, if $U' \notin \sigma$ is a unary relation symbol, the
$(\sigma \cup \{U'\})$-structure obtained
by adding to $\dbiI_0$ the relation $U'$, which is to be interpreted as $I_0$,
is also a model for $\phi$.

Let us now define $2^N$ formulas $C_0,\ldots,C_{2^{N}-1}$ of the form
\[
    C_i(x) \assign (\neg) V_1(x) \wedge \ldots \wedge (\neg) V_N(x),
\]
where the conjunct $V_j(x)$ is negated iff the $j$th bit of the binary
representation of $i$ is 0. For each $\bld{A} \in STRUCT(\sigma)$, these
formulas induce an equivalence relation on $A$ with each set $C_i(\bld{A})$
being an equivalence class. When $\bld{A}$ is clear, we refer to the
equivalence class $C_i(\bld{A})$ simply as $C_i$.
In addition, the existence of the universe relation $U$ in $\dbiI_0$
implies that the all-negative equivalence class $C_0$ is empty.

We next describe a sequence of five satisfaction-preserving procedures for
deriving a finite model from $\dbiI_0$. This sequence is best described diagrammatically:
\[
    \dbiI_0 \stackrel{\text{\tt makeJF}}{\lorarw}
    \dbiI_1 \stackrel{\text{\tt rename1}}{\lorarw}
    \dbiI_2 \stackrel{\text{\tt rename2}}{\lorarw}
    \dbiI_3 \stackrel{\text{\tt copy}}{\lorarw}
    \dbiI_4 \stackrel{\text{\tt prune}}{\lorarw}
    \dbiI_5.
\]
The $i$th procedure above takes a structure $\dbiI_i$ as input, and outputs
another structure $\dbiI_{i+1}$. The structure $\dbiI_5$ is guaranteed to be
finite (and indeed bounded). That each procedure preserves satisfiability
immediately follows by subproperties \ref{subp:makeJHS}, \ref{subp:rename},
\ref{subp:maxim}, \ref{subp:copy}, and \ref{subp:prune}.
While reading the description of the procedures below, it is instructive to keep
in mind that the property that $C_i(\dbiI_j) = \emptyset$ iff
$C_i(\dbiI_{j+1}) = \emptyset$ is sufficient for showing that the $j$th
procedure preserves satisfiability (see lemma \ref{lm:equiv}).

Roughly speaking, the procedure \texttt{makeJF} transforms the initially given structure
$\dbiI_0$ into another structure that has a forest-like graphical
representation, called a ``justification forest''. Each
subsequent procedure works only on justification forests. In the sequel,
we shall use $\mcl{H}_i$ to denote our graphical representation of
$\dbiI_i$ ($i \in \{1,\ldots,5\}$).

\begin{flushleft}
    \itshape The procedure \texttt{makeJF}
\end{flushleft}
We define the structure $\dbiI_1$ by first defining a sequence
$\dbiI_1^0, \dbiI_1^1, \ldots$ of structures such that $\dbiI_1^k$
is a substructure of $\dbiI_1^{k+1}$, and then setting $\dbiI_1 =
\bigcup_{k=0}^{\infty} \dbiI_1^k$. [Note: we take the normal union,
not \emph{disjoint union}.] We first deal with the base case of
$\dbiI_1^0$. For each non-empty equivalence class $C_i(\dbiI_0)$, we
choose a witnessing constant $a_i \in C_i(\dbiI_0)$. We define
$I_1^0$ as the collection of all such $a_i$s. All relations in
$\dbiI_1^0$ are empty. Each $a_i$ is said to be \defn{unjustified}
in $\dbiI_1^0$, meaning that the model is missing tuples that can
witness the truth of $a_i$ being a member of some equivalence class.
We now describe how to define $\dbiI_1^{k+1}$ from $\dbiI_1^{k}$.
For each $a \in I_1^k$, if $a \in C_i(\dbiI_0)$ for some $i$, it is
the case that $a \in V_j(\dbiI_0)$ iff $\bit{j}{i} = 1$ for $1 \leq
j \leq N$. For such $a$, we may take a minimal witnessing
substructure $\bld{S}_a$ of $\dbiI_0$ such that $a \in
V_j(\bld{S}_a)$ iff $\bit{j}{i} = 1$. As each constant in
$adom(\bld{S}_a)$ appears in at least one relation in $\bld{S}_a$,
we shall often think of these witnessing structures as databases
(i.e. sets of tuples), and refer to them as \defn{justification
sets}. We define the structure $\dbiI_1^{k+1}$ to be the union of
$\dbiI_1^k$ and all the witnessing structures $\bld{S}_a$ such that
$a$ is unjustified in $\dbiI_1^k$. The elements in $I_1^k$ become
\emph{justified} in $\dbiI_1^{k+1}$. The elements in
$I_1^{k+1}-I_1^k$ are then said to be
\defn{unjustified}
in $\dbiI_1^{k+1}$.
Observe that the structure $\dbiI_1^{K+1}$ does not unjustify any elements that
were justified in $\dbiI_1^{k}$, since there is no negation in the view
definitions.
Finally, the structure $\dbiI_1$ is defined as the union
of all $\dbiI_1^k$s. Observe that each element in $I_1$ appears in at least one
relation in $\dbiI_1$.

The structure $\dbiI_1$ has an intuitive graphical representation, which we
denote by $\mcl{H}_1$. The graph $\mcl{H}_1$ is simply a labeled forest in
which each tree $T_i$ (for some $0 \leq i \leq 2^N-1$) corresponds to exactly one
witnessing constant $a_i$ for each non-empty $C_i$. We define $T_i$
as follows: the root of $T_i$ is labeled by
$\bld{S}_{a_i} \times C_i$; and for each $j=0,1,\ldots$, any $\bld{S}_b \times
C_k$-labeled node $v$ at
level $j$ (for some justification set $\bld{S}_b$ and equivalence class formula
$C_k$), and any constant $c$ in $adom(\bld{S}_b)$ that is distinct from $b$,
define
a new $\bld{S}_c \times C_{k'}$-labeled node to be a child of $v$, for the
unique $k'$ such that $c \in C_{k'}(\dbiI_0)$. In the following, when the meaning
is clear, we shall often refer to an $(\bld{S}_a \times C_k)$-labeled node
simply as a $\bld{S}_a$-labeled node. Also, observe
the similarity of the construction of $\mcl{H}_1$ and that of $\dbiI_1$. In
fact, the union of all $\bld{S}_a$, for which there is an $\bld{S}_a$-labeled
node in
$\mcl{H}_1$, is precisely $\dbiI_1$. Observe also that each tree $T_i$ may be
infinite. For obvious reasons, we shall refer to $T_i$ as a \defn{justification
tree} (of $a_i$), and to $\mcl{H}_1$ as \defn{justification forest}. In the
following, for any justification tree $T$ and any justification forest
$\mcl{H}$, their \defn{corresponding structures} (or \defn{databases}), denoted
by $\mcl{D}(T)$ and $\mcl{D}(\mcl{H})$ respectively, are defined to be the
union of all $\bld{S}_a$, such that there is an $\bld{S}_a$-labeled node in,
respectively, $T$ and $\mcl{H}$. Furthermore, we shall use
$adom(T)$ and $adom(\mcl{H})$ to denote $adom(\mcl{D}(T))$ and
$adom(\mcl{D}(\mcl{H}))$, respectively. The elements in the set $adom(T)$ and
$adom(T)$ and $adom(\mcl{H})$ are referred to as, respectively,
constants in $T$ and constants in $\mcl{H}$.


We now illustrate this procedure by a small example. Define the
UCV formula
\[
    \phi = \forall x( V_1(x) \wedge \neg V_2(x) ),
\]
where the views are
\begin{eqnarray*}
    V_1(x) & \larw & E(x,y) \\
    V_2(x) & \larw & E(x,x).
\end{eqnarray*}
Here, we have $\mcl{V} = \{V_1,V_2\}$, $\sigma = \langle E \rangle$, and
$m = 2$. Suppose that
\[
    \dbiI_0 = \langle \mathbb{N}, E = \{(0,1),(1,2),(2,3),(3,4),\ldots\} \rangle
\]
is a path extending indefinitely to the right. Then, we have
$\dbiI_0 \models \phi$. Enumerating all non-equivalent views over $\sigma$
of length at most $m$, we have $\mcl{U} = \{V_1,V_2,V_3\}$ where
\[
    V_3(x) \larw E(y,x).
\]
Now, there are exactly two non-empty equivalence classes:
\begin{eqnarray*}
    C_{100} & = & \{0\} \\
    C_{101} & = & \{1,2,\ldots\}.
\end{eqnarray*}
Then, we have $\bld{S}_0 = \{E(0,1)\}$ and $\bld{S}_i = \{E(i-1,i),E(i,i+1)\}$
for $i > 0$. Following the above procedure, we obtain the trees $T_{100}$
and $T_{101}$ as depicted in figure \ref{fig:H1}. Note that $\mcl{H}_1$ is
the disjoint union of $T_{100}$ and $T_{101}$.
\begin{figure}[h]
\begin{center}
    \epsfig{file=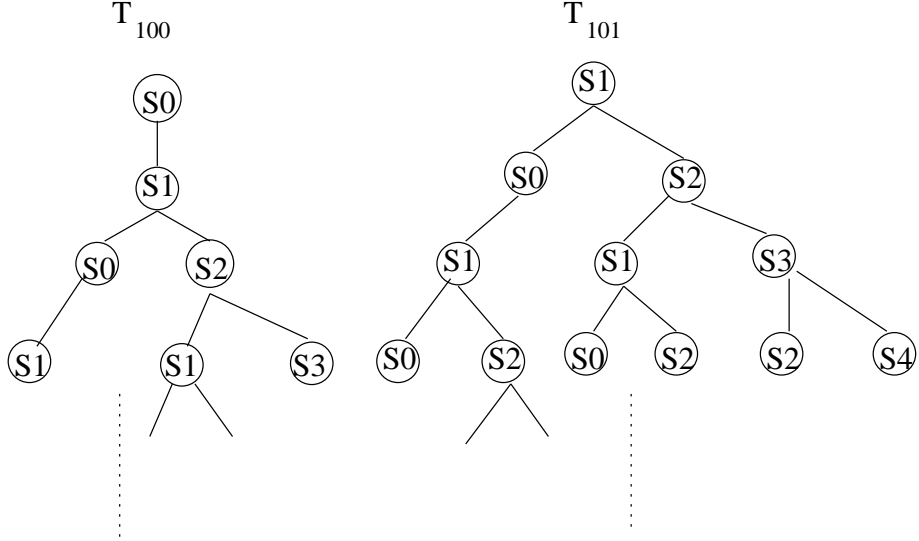}
\end{center}
\caption{\small
A depiction of the justification forest $\mcl{H}_1$ as an output of
\texttt{makeJF}.
\label{fig:H1}}
\end{figure}

\begin{flushleft}
    \itshape The procedure \texttt{rename1}
\end{flushleft}
\emph{Proviso: in subsequent procedures (including the present one), we shall
not change the second entries (i.e. $C_i$) of each node label (i.e. of the form
$\bld{S}_a\times C_i$) and frequently omit mention of them.}

The aim of this procedure is to ensure that there are no two justification trees
$T$ and $T'$ with $adom(T) \cap adom(T') \neq \emptyset$. It essentially
performs renaming of constants in $adom(T)$, for each tree
$T$ in $\mcl{H}_1$.
This step will later help us
guarantee the
correctness of the last step that is used to produce the final model $\dbiI_5$, which relies
on a kind of ``tree disjointness'' property.
More formally, we define $\dbiI_2$ to be the disjoint union\footnote{The
disjoint union of two $\sigma$-structures $\bld{A}$ and $\bld{B}$ with
$A \cap B = \emptyset$ is the structure with universe $A \cup B$ and
relation $R$ interpreted 
as $R^{\bld{A}}\cup R^{\bld{B}}$. If $A \cap B \neq \emptyset$, one
can simply force disjointness by renaming constants.} of
$\mcl{D}(T)$ over all trees $T$ in $\mcl{H}_1$. The justification forest
$\mcl{H}_2$ corresponding to $\dbiI_2$ can be obtained from $\mcl{H}_1$ by
renaming constants of the tuples in each tree $T$ in $\mcl{H}_1$ accordingly.

Let us continue with our previous example of $\mcl{H}_1$. The graph $\mcl{H}_2$
in this case will be precisely identical to $\mcl{H}_1$, except that in
$T_{101}$ we use the label, say, $\bld{S}_{0'} = \{E(0',1')\}$ (resp.
$\bld{S}_{i'} =
\{E((i-1)',i'),E(i',(i+1)')\}$ for $i > 0$) instead of $\bld{S}_0$ (resp.
$\bld{S}_i$ for $i > 0$).

\begin{flushleft}
    \itshape The procedure \texttt{rename2}
\end{flushleft}
The aim of this procedure is to transform the model in such a way
that each constant $a$ can appear only
at two consecutive levels, say $j$ and $j+1$, within each tree. It appears at level $j$  as part
of an $S_b$-labeled node $v$, for some constant $b \neq a$, and at level $j+1$
as part of an $S_a$-labeled node that is a child of $v$. Further, the procedure
ensures that any given constant occurs in at most one node's label at each
level in a tree.
Again, this will step will later help us
guarantee the
correctness of the step that is used to produce the final model $\dbiI_5$, which relies
on the existence of a kind of internal ``disjointness'' property within trees.

Let us fix a sibling ordering for the nodes
within each tree $T_i$ in $\mcl{H}_2$. Define a set
$U$ of constants disjoint from $I_2$ as follows:
\[
    U = \{ a_{j,l} : \text{$j, l \in \mathbb{N}$ and $a \in I_2$} \}.
\]
For $a,b \in I_2$, we require that $a_{j,l} \neq b_{j',l'}$ whenever
either $j \neq j'$, or $l \neq l'$, or $a \neq b$. For each tree $T_i$ and for each $j = 1,2,\ldots$,
choose the $l$th node $v$ with respect to the fixed sibling ordering (say,
$\bld{S}_a$-labeled)
at level $j$ in $T_i$.   Let $v$'s children be $v_1,\ldots,v_k$ (labeled by,
respectively, $\bld{S}_{b^1},\ldots,\bld{S}_{b^k}$ with $b^h \neq a$).
Now do the
following:
change $v$ to $\bld{S}_a[b^1_{j,l},\ldots,b^k_{j,l}/b^1,\ldots,b^k]$;
and change $v_h$, where $1 \leq h \leq k$, to
$\bld{S}_{b^h_{j,l}} \assign \bld{S}_{b^h}[b^h_{j,l}/b^h]$.
Observe that
there are two stages in this procedure where each non-root node at level $j$,
say $\bld{S}_a$-labeled, undergoes relabeling: first when we are at level $j-1$
(the constant $a$ is renamed by $a_{j,k}$ for some $k$), and second when we are
at level $j$ (constants other than $a_{j,k}$ are renamed
for what is now $\bld{S}_{a_{j,k}}$ ). The output of this
procedure on $\mcl{H}_2$ is denoted by $\mcl{H}_3$, whose corresponding
structure we denote by $\dbiI_3$.

Continuing with our previous example.  The root node  $u_1$ of $T_{100}$ in
$\mcl{H}_2$ is $\bld{S}_0=\{E(0,1)\}$, its child $u_2$ (sibling zero at level 1)
is $\bld{S}_1=\{E(0,1),E(1,2)\}$
and in turn the children of that child are $u_3=\bld{S}_0=\{E(0,1)\}$ (sibling 0 at level 2)
 and $u_4=\bld{S}_2=\{E(1,2),E(2,3)\}$ (sibling 1 at level 2).
Under the \texttt{rename2} procedure, node $u_1$
is unchanged, since it is at level zero.  Node $u_2$ is changed to $\bld{S}_1=\{E(0_{1,0},1),E(1,2_{1,0})\}$
Node $u_3$ is changed to $\bld{S}_{0_{1,0}}=\{E(0_{1,0},1_{2,0})\}$ and $u_4$ is changed to
$\bld{S}_{2_{1,0}}=\{E(1_{2,1},2_{1,0}),E(2_{1,0},3_{2,1})\}$.


\begin{flushleft}
    \itshape The procedure \texttt{copy}
\end{flushleft}
This procedure makes a number of isomorphic copies of the model $\mcl{H}_3$ and then unions them
together.
Duplicating the model in this way facilitates the construction of a bounded model by the
\texttt{prune} procedure, that will be described shortly.
Let
$\delta$ be the total number of constants that appear in some tuples from a
node label at level $h := cm$ in $\mcl{H}_3$, for some fixed $c \in \mathbb{N}$,
independent from $\phi$, whose value will later become clear in the proofs that follow. By virtue of
procedure \texttt{makeJF}, we are guaranteed that
each node in $\mcl{H}_3$ can have at most $N \times m$ children, where
$N \times m$ represents an upper bound on the number of constants each
justification set might contain. Since there are at most $2^N$ trees in
$\mcl{H}_3$, by elementary counting, we see that $\delta \leq 2^N \times
(N \times m)^h$. Now, letting $g := cm$, make $\Delta := \delta^{g}$
(isomorphic) copies of
$\mcl{H}_3$, each with a disjoint set of constants. That is, the node labeling of
each new copy of $\mcl{H}_3$ is isomorphic to that of $\mcl{H}_3$, except that is uses
disjoint set of constants. Let us call them the copies $\mcl{B}_1,
\ldots, \mcl{B}_{\Delta}$ (the original copy of $\mcl{H}_3$ is included).
So, we have $B_i \cap B_j = \emptyset$, for $i \neq j$. For
each tree $T_i$ in $\mcl{H}_3$, we denote by $T_i^k$ the isomorphic copy of
$T_i$ in $\mcl{B}_k$. Now, let
\[
    \mcl{H}_4 = \mcl{B}_1 \cup \ldots \cup \mcl{B}_{\Delta}.
\]
The structure corresponding to $\mcl{H}_4$ is denoted by $\dbiI_4$. In the
sequel, each node at level $h$ in $\mcl{B}_k$ is said to be a \defn{(potential) leaf} of
$\mcl{B}_k$.

\begin{flushleft}
    \itshape The procedure \texttt{prune}
\end{flushleft}
The purpose of this procedure is to transform $\mcl{H}_4$ into a finite model.
Intuitively, this is achieved by ``pruning'' all trees at level $h$ and then rejustifying
the resulting unjustified constants by ``linking'' them to a justification being used in some other
part of the model.  This is the most complex step in the entire sequences of procedures, and care
will be needed later to prove to ensure that satisfiability is not violated when constants are being rejustified.

We begin first by describing the connections that we wish to construct between
the different parts of the model.   Roughly speaking, the model we intend to construct somewhat
resembles
a $\delta$-regular graph, whose nodes are the copies $\mcl{B}_1 \cup \ldots \cup \mcl{B}_{\Delta}$
made earlier,
and where edges between copies indicate that
one copy is being used to make a new justification for a node at level $h$ in
another copy.

Firstly though, we state a proposition from extremal graph theory
(see \cite[Theorem 1.4' Chapter III]{Bol04}] for proof)
that can be used
to guarantee the existence of the kind of $\delta$-regular graph we intend to construct.

\begin{proposition}
Fix two positive integers $\delta, g$ and take an integer $\Delta$ with
\[
    \Delta \geq \frac{(\delta - 1)^{g-1} - 1}{\delta - 2}.
\]
Then, there exists a $\delta$-regular graph of size $\Delta$ with girth at least
$g$.
\end{proposition}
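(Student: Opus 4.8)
The plan is to prove this as a standard consequence of the Moore bound / Erd\H{o}s--Sachs-style existence theorem for regular graphs of prescribed girth. The statement is purely about the existence of a finite graph, so the proof will not interact with views, justification forests, or any of the database machinery developed earlier; it is a self-contained graph-theoretic fact, and indeed the paper itself points to \cite[Theorem 1.4', Chapter III]{Bol04} for it. Accordingly, the cleanest approach is to \emph{cite} that theorem and then verify that the numerical hypothesis we impose on $\Delta$ is exactly (or at least implies) the hypothesis under which the cited result guarantees a $\delta$-regular graph of girth $\geq g$.

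Concretely, the first step is to recall the precise form of the extremal result being invoked: for fixed $\delta \geq 3$ and $g \geq 3$, there is a $\delta$-regular graph of girth at least $g$ on $\Delta$ vertices whenever $\Delta$ is large enough — and the standard quantitative version says ``large enough'' can be taken to be the Moore-type bound $1 + \delta + \delta(\delta-1) + \cdots + \delta(\delta-1)^{g-2}$, or a convenient closed form dominating it. The second step is the routine algebraic check that
\[
    \frac{(\delta-1)^{g-1} - 1}{\delta - 2}
\]
is at least this threshold (it is the sum of a geometric series $1 + (\delta-1) + \cdots + (\delta-1)^{g-2}$, which is the number of vertices in a depth-$(g-2)$ rooted $(\delta-1)$-ary tree, and a short comparison shows it dominates the relevant Moore bound up to the harmless constant factors that the cited theorem already absorbs). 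The third step is to dispose of the boundary cases $\delta = 2$ (cycles $C_\Delta$ with $\Delta \geq g$ work, and the displayed bound is to be read in the limiting sense $\sum_{i=0}^{g-2} 1 = g-1$) and to note parity: a $\delta$-regular graph on $\Delta$ vertices requires $\delta\Delta$ even, which for odd $\delta$ forces $\Delta$ even — but since we only need \emph{some} $\Delta' \geq \Delta$ with the property (the construction using this proposition picks $\Delta$ to be at least a bound, not exactly equal to it), we may freely round $\Delta$ up by one if necessary without affecting the application.

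The main obstacle, such as it is, is essentially bookkeeping rather than mathematics: one must make sure the inequality direction and the off-by-one/parity conventions in the cited theorem line up with the way $\Delta$ is used downstream in \texttt{copy} and \texttt{prune}, where $\Delta := \delta^g$ is plugged in. So the real content of the proof is: (i) quote the extremal graph theory theorem; (ii) observe $\delta^g \geq \frac{(\delta-1)^{g-1}-1}{\delta-2}$ comfortably, so the concrete choice made later satisfies the hypothesis with room to spare; and (iii) remark that any $\Delta$ meeting the stated lower bound, possibly increased by $1$ to fix parity, yields the desired $\delta$-regular graph of girth $\geq g$. No step here is genuinely hard; the proposition is included mainly to make the later pruning argument self-contained, and the honest write-up is two or three sentences plus the pointer to \cite{Bol04}.
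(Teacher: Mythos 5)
Your approach matches the paper's exactly: the paper offers no proof of this proposition, merely citing \cite[Theorem 1.4', Chapter III]{Bol04} (the Erd\H{o}s--Sachs-type existence result for regular graphs of prescribed girth), which is precisely the route you take. Your additional remarks on parity ($\delta\Delta$ must be even, so the statement as literally written needs $\Delta$ bumped by one in bad cases) and on the degenerate case $\delta=2$ are, if anything, more careful than the paper itself, and you correctly observe that the downstream use with $\Delta=\delta^{g}$ is unaffected.
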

Using $\delta,g$ and $\Delta$ as defined in the \texttt{copy} procedure, this
proposition implies that there exists a $\delta$-regular graph $\bld{G}$ with
vertices $\{\mcl{B}_1, \ldots,\mcl{B}_{\Delta}\}$ and with girth at least $g$.
Let us now treat $\bld{G}$ as a directed graph, where each edge in $\bld{G}$
is regarded as two bidirectional arcs.

Observe that, for each vertex $\mcl{B}_k$, there is a bijection $out_k$ from
the set of leafs (nodes at height $h$) of $\mcl{B}_k$ to the set of arcs going out from $\mcl{B}_k$
in $\bld{G}$. We next take each leaf of $\mcl{B}_k$ in
turn. For a leaf $v$ (say, $\bld{S}_b$-labeled),
suppose that $out_k(v) = (\mcl{B}_k,\mcl{B}_{k'})$. Choose $i$ such that
$b \in C_i(\dbiI_4)$.
If the root of $T_i^{k'}$ is $\bld{S}_c$-labeled, for some $c \in I_4$, then we
delete all descendants of $v$ in $T_i^k$ and change $v$ to $\bld{S}_c[b/c]$.
In this way, we ``prune'' each of the trees in $\mcl{H}_4$, and link
each leaf node to the root node of another tree for the purpose of
justification. We denote by $\mcl{H}_5$ the resulting collection of interlinked models,
whose corresponding structure is denoted by $\dbiI_5$.  $\mcl{H}_5$ can
be thought of as a collection of interlinked forests, where each forest corresponds
to one of the copies $\{\mcl{B}_1, \ldots,\mcl{B}_{\Delta}\}$ and each forest is a collection
of trees.

Observe now that each ``tree'' in $\mcl{H}_5$ is of height $h$. Since there are
at most $\Delta \times 2^N$ trees in $\mcl{H}_5$, each of which has at most
$(N \times m)^{h+1}$ constants, we see that
\begin{eqnarray*}
    I_5 & \leq & (\Delta \times 2^N) \times (N \times m)^{h+1} \\
        & \leq & ((2^N \times (Nm)^{cm})^{cm} \times 2^N) \times
        (N\times m)^{cm+1}
\end{eqnarray*}
It is easy to calculate now that $I_5 \leq 2^{2^{q(p,m)}}$ for some polynomial
$q$ in $p$ and $m$.  We have thus managed to construct a bounded model $\dbiI_5$
which satisfies the original $\ucv$ formula $\phi$.
\end{proof}

We now prove the correctness of our construction for theorem \ref{th:fin_s}.
The proof is divided into a series of subproperties that assert the correctness
of each procedure in our construction. First, we prove a simple lemma.
\begin{lemma}
Let $\mcl{V}$ be a set of (unary) views over a vocabulary $\sigma$. Suppose
$\dbiI, \dbiJ$ are $\sigma$-structures such that $C(\dbiI)$ is non-empty iff
$C(\dbiJ)$ is
non-empty for each equivalence class formula $C$ constructed with respect to
$\mcl{V}$. Then, $\dbiI$ and $\dbiJ$ agree on $\ucv(\sigma,\mcl{V})$.
\label{lm:equiv}
\end{lemma}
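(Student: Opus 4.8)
The plan is to translate the claim, via Lemma~\ref{lem:realize}, into a statement about the unary structures $\Lambda(\dbiI)$ and $\Lambda(\dbiJ)$ over the view vocabulary $\mcl{V}$, and then to settle that statement with a single round of an \ef{} game. For the first step, recall that each UCV sentence $\phi \in \ucv(\sigma,\mcl{V})$ gives, under the non-standard interpretation, a unary sentence $\phi^{\mcl{V}}$ over $\mcl{V}$, and that as $\phi$ ranges over $\ucv(\sigma,\mcl{V})$ the sentence $\phi^{\mcl{V}}$ ranges over \emph{all} unary sentences over $\mcl{V}$ (up to the usual boolean shorthands), since both classes are generated from the atoms $V(x)$, $V \in \mcl{V}$, by $\neg$, $\wedge$, $\exists$. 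Hence, applying Lemma~\ref{lem:realize} to each $\phi$, the structures $\dbiI$ and $\dbiJ$ agree on $\ucv(\sigma,\mcl{V})$ if and only if $\Lambda(\dbiI)$ and $\Lambda(\dbiJ)$ agree on every unary sentence over $\mcl{V}$. (If one wishes to include UCV \emph{formulas} with free variables, one first passes to existential closures, which preserves the ``agree'' relation.) So it suffices to show that $\Lambda(\dbiI)$ and $\Lambda(\dbiJ)$ agree on all unary sentences over $\mcl{V}$.

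By Lemma~\ref{lm:unaryform}, every unary sentence over $\mcl{V}$ is equivalent to one of quantifier rank $1$, so it is enough to establish agreement on unary sentences of quantifier rank $1$; and since these form a subclass of all first-order sentences over $\mcl{V}$ of quantifier rank $1$, by the \ef{} game characterisation stated above it is enough to give Duplicator a winning strategy in the one-round \ef{} game on $\Lambda(\dbiI)$ and $\Lambda(\dbiJ)$. This is where the hypothesis enters. Observe first that an equivalence-class formula $C(x)=(\neg)V_1(x)\wedge\cdots\wedge(\neg)V_n(x)$ is built solely from the $V_j$, so $C(\dbiI)=C(\Lambda(\dbiI))$ and $C(\dbiJ)=C(\Lambda(\dbiJ))$; in other words, the classes $C$ are exactly the atomic $1$-types that an element of a $\mcl{V}$-structure can realise, and the hypothesis says precisely that $\Lambda(\dbiI)$ and $\Lambda(\dbiJ)$ realise the same set of atomic $1$-types. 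Now whenever Spoiler picks an element $a$ in one structure, $a$ lies in a unique class $C$; by hypothesis $C$ is non-empty in the other structure, so Duplicator responds with some $b$ in that class. Then $a$ and $b$ satisfy exactly the same unary predicates of $\mcl{V}$, and as there is no equality in $\ufo$ the substructure induced by $\{a\}$ is isomorphic to the substructure induced by $\{b\}$ under $a\mapsto b$. Since the hypothesis is symmetric in $\dbiI$ and $\dbiJ$, this works regardless of which structure Spoiler plays in, so Duplicator wins and the three reductions together yield the lemma.

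There is no deep difficulty here: the argument is three short, routine reductions. The only point that needs a little care is the identification, in the last step, of the syntactic ``equivalence-class formulas'' used throughout the model construction with the semantic notion of an atomic $1$-type over $\mcl{V}$, together with the minor sentence-versus-formula bookkeeping flagged above. Everything else follows immediately from Lemmas~\ref{lem:realize} and~\ref{lm:unaryform} and the one-round \ef{} game.
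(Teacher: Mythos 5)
Your proof is correct and follows essentially the same route as the paper's: reduce to the view vocabulary via Lemma~\ref{lem:realize}, then settle agreement on $\ufo(\mcl{V})$ by a one-round \ef{} game (the paper compresses this second step into the phrase ``by standard Ehrenfeucht-Fra\"isse argument,'' which your quantifier-rank-1 reduction via Lemma~\ref{lm:unaryform} and the atomic-$1$-type observation spell out faithfully). No gaps.
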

\begin{proof}
By standard Ehrenfeucht-Fra\"isse argument, we see that $\Lambda(\dbiI)$ and
$\Lambda(\dbiJ)$ agree on
$\ufo(\mcl{V})$. Then, by lemma \ref{lem:realize}, we have that $\dbiI$ and
$\dbiJ$ agree on $\ucv(\sigma,\mcl{V})$.
\end{proof}

\begin{subproperty}[(Correctness of \texttt{MakeJF})]
\begin{enumerate}
\item For each node $v$ (say, $(\bld{S}_a \times C_i)$-labeled) of $\mcl{H}_1$,
$a \in C_i(\dbiI_1)$ with witnessing structure $\bld{S}_a$.
\item If $\dbiI_0 \models \phi$, then $\dbiI_1 \models \phi$.
\end{enumerate}
\label{subp:makeJHS}
\end{subproperty}
\begin{proof}
First, note that $\dbiI_1 \subseteq \dbiI_0$. Since conjunctive queries
are monotonic, we have $V(\dbiI_1) \subseteq V(\dbiI_0)$ for each view
$V \in \mcl{U}$. So, we have that $a \in V(\dbiI_1)$ implies that $a \in
V(\dbiI_0)$. In addition, for each constant $a \in I_1$, if $a \in V(\dbiI_0)$,
then $a \in V(\dbiI_1)$, which is witnessed at some $\bld{S}_a$-labeled node.
In turn, this implies that for $a \in I_1$, it is the case that
$a \in C(\dbiI_1)$ iff $a \in C(\dbiI_0)$. This proves the first statement.
Also, by construction, if $C_i(\dbiI_0)$ is non-empty, where $i \in
\{0,\ldots,2^N-1\}$, we know that one of its members belongs to $\dbiI_1$,
witnessed at the root of $T_i$. Therefore, we also have that
$C(\dbiI_0)$ is non-empty iff $C(\dbiI_1)$ is non-empty. In view of lemma
\ref{lm:equiv}, we conclude the second statement.
\end{proof}
At this stage, it is worth noting that, once $\mcl{H}_1$ has been constructed, the
subsequent procedures might modify the label $(\bld{S}_a\times C_i)$ ---
its name (e.g. from $\bld{S}_a\times C_i$ to $\bld{S}_{a'} \times C_i$ for some
new constant $a'$) as well as its contents (e.g. replacing each occurrence of a
tuple $R(a,a)$ by $R(a',a')$ for some new constant $a'$). Despite this, we wish to highlight
that one invariant is preserved by each of these procedures that have been described:
\begin{invariant}[(Justification Set)]
Suppose $\mcl{H}$ is a justification forest of a structure $\bld{I}$, and $v$ a
$(\bld{S}_a\times C_i)$-labeled node of $\mcl{H}$. Then, we have
$a \in C_i(\bld{I})$ with witnessing structure $\bld{S}_a$.
\label{inv:JS}
\end{invariant}
Subproperty \ref{subp:makeJHS} shows that this is satisfied by $\mcl{H}_1$.
In fact, that this invariant is preserved by the later procedures will be
almost immediate from the proof of correctness of the procedure. Hence, we
leave it to the reader to verify.

\begin{subproperty}[(Correctness of \texttt{rename1})]
If $\dbiI_1 \models \phi$, then $\dbiI_2 \models \phi$.
\label{subp:rename}
\end{subproperty}
\begin{proof}
In this procedure, we perform constant renaming for each tree $T_i$ in
$\mcl{H}_1$. For the purpose of this proof, let us denote the tree so obtained
by $T_i'$. Such a renaming induces a bijection $f_i: adom(T_i)\rarw adom(T_i')$.
Extend $f_i$ to tuples, structures, and trees in the obvious way.
Observe that the structures corresponding to the trees $f_i(T_i)$ and $T_i$ are
isomorphic. Now, in view of lemma \ref{lem:rename}, it is easy to
check that for each tree $T_i$ in $\mcl{H}_1$ and a constant $a$ in the
structure corresponding
to $T_i$, $a \in C_j(\dbiI_1)$ iff $f_i(a) \in C_j(\dbiI_2)$. By virtue of by
lemma \ref{lm:equiv}, we conclude our proof.
\end{proof}

\begin{lemma}
Suppose that $a$ is a constant in the structure corresponding to $T_i$ of
$\mcl{H}_1$. Then, $a \in V(\dbiI_1)$ iff $f_i(a) \in V(\dbiI_2)$.
\label{lem:rename}
\end{lemma}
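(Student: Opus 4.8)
The plan is to unwind the definition of the procedure \texttt{rename1} and observe that, tree by tree, it is simply a relabeling of constants that induces an isomorphism onto a disjoint copy. First I would recall that $\dbiI_2$ is defined to be the \emph{disjoint} union of $\mcl{D}(T_i)$ over all trees $T_i$ in $\mcl{H}_1$, realized by choosing, for each $i$, a bijection $f_i : adom(T_i) \rarw adom(T_i')$ onto a fresh set of constants disjoint from everything else, and then setting the tuples of $T_i'$ to be the $f_i$-images of the tuples of $T_i$. By construction $f_i$ extends to an isomorphism from the substructure $\mcl{D}(T_i)$ of $\dbiI_1$ onto the substructure $\mcl{D}(T_i')$ of $\dbiI_2$.

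Next I would use the key fact, established in Subproperty \ref{subp:makeJHS}(1) (equivalently, Invariant \ref{inv:JS}), that for a constant $a$ appearing in $T_i$ its membership in a view is already \emph{witnessed locally inside} $\mcl{D}(T_i)$: namely $a \in V(\dbiI_1)$ iff $a \in V(\mcl{D}(T_i))$. One direction is immediate since $\mcl{D}(T_i) \substruct \dbiI_1$ and conjunctive views are monotone; the other direction follows because every constant $c$ in the justification set $\bld{S}_a$ spawns an $\bld{S}_c$-labeled child of the relevant node, so the whole witnessing substructure sits inside $\mcl{D}(T_i)$. Since distinct trees have disjoint universes in $\mcl{H}_1$ only after \texttt{rename1} — but before it, the argument still goes through because the justification of $a$ lives entirely within $T_i$'s nodes — the same localization holds in $\dbiI_2$: $f_i(a) \in V(\dbiI_2)$ iff $f_i(a) \in V(\mcl{D}(T_i'))$, using disjointness of the copies. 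Finally, since $f_i$ is an isomorphism from $\mcl{D}(T_i)$ onto $\mcl{D}(T_i')$ and conjunctive views are preserved under isomorphism, $a \in V(\mcl{D}(T_i))$ iff $f_i(a) \in V(\mcl{D}(T_i'))$. Chaining the three equivalences yields $a \in V(\dbiI_1)$ iff $f_i(a) \in V(\dbiI_2)$, as required.

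The only subtlety — and the step I would treat most carefully — is the localization claim: that a view's truth at $a$ inside $\dbiI_1$ is witnessed entirely within $\mcl{D}(T_i)$, and symmetrically within $\mcl{D}(T_i')$ inside $\dbiI_2$. For $\dbiI_2$ this is where the disjoint-union structure is essential: a homomorphism from a connected conjunctive query witnessing $f_i(a) \in V$ must map into a single connected component of $\dbiI_2$, hence into a single $\mcl{D}(T_j')$, and since $f_i(a)$ only occurs in $\mcl{D}(T_i')$ that component must be $\mcl{D}(T_i')$. For $\dbiI_1$ it is exactly the content of Subproperty \ref{subp:makeJHS}(1) together with the recursive way \texttt{makeJF} adds, for each constant of a justification set, a new child node carrying its own justification set, so no tuple of the witness is lost. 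Everything else is the routine observation that conjunctive views are monotone and isomorphism-invariant.
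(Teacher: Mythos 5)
Your forward direction and your first localization step (that $a \in V(\dbiI_1)$ iff $a \in V(\mcl{D}(T_i))$, justified by the fact that the whole witnessing set $\bld{S}_a$ already sits inside $\mcl{D}(T_i)$ by Subproperty \ref{subp:makeJHS}) are sound and match the paper. The gap is exactly in the step you flag as the delicate one: the claim that $f_i(a) \in V(\dbiI_2)$ implies $f_i(a) \in V(\mcl{D}(T_i'))$. You justify it by arguing that a homomorphism from a \emph{connected} conjunctive query must land in a single connected component of the disjoint union $\dbiI_2$. But the views here are arbitrary unary conjunctive queries whose bodies need not be connected --- e.g.\ $V(x) \larw E(x,y), R(z,w)$ belongs to the enumeration $\mcl{U}$, and the paper itself works with such a disconnected view in the proof of Subproperty \ref{subp:prune}. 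For such a $V$, a witness for $f_i(a) \in V(\dbiI_2)$ must place the component containing the head variable inside $\mcl{D}(T_i')$, but it may place the remaining components in other trees $\mcl{D}(T_j')$, so localization to the single tree $T_i'$ does not follow from disjointness. (The localization statement happens to be true, but the natural way to prove it is to first establish the lemma and then invoke Invariant \ref{inv:JS}; so it cannot serve as a stepping stone without circularity.)

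The paper's backward argument avoids any single-tree localization: it takes a minimal witness $\bld{M} \subseteq \db(\dbiI_2)$ for $f_i(a) \in V$, partitions it as $\bld{M} = \bigcup_j \bld{M}_j$ according to which tree $T_j'$ each tuple lies in (well defined because the trees' active domains are pairwise disjoint after \texttt{rename1}), pulls back each piece with the corresponding bijection $f_j^{-1}$ --- a possibly different one for each piece --- and checks that $a \in V\bigl(\bigcup_j f_j^{-1}(\bld{M}_j)\bigr)$ with the union taken inside $\dbiI_1$. Your argument becomes correct if you replace the single-tree localization by this per-tree (equivalently, per-connected-component of the view body) decomposition; as written, one link of your chain of equivalences is established only under a connectivity assumption that the view definitions do not satisfy.
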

\begin{proof}
($\Rarw$) By subproperty \ref{subp:makeJHS}, it is the case that $a \in
V(\bld{S}_a)$.
Since $f_i$ is a bijection, it is also true that $f_i(a) \in V(f_i(\bld{S}_a))$.

($\Larw$) Let $\bld{M}$ be a minimal set of tuples in $\dbiI_2$ such that
$f_i(a) \in
V(\bld{M})$. Observe that there is a one-to-one function mapping the set of
conjuncts in $V$ to $\bld{M}$. For each tree $T_j$ in $\mcl{H}_2$, let
$\bld{M}_j$ denote the members of $\bld{M}$ that can be found in
$T_j$. Note that $adom(\bld{M}_j) \cap adom(\bld{M}_{j'}) = \emptyset$ for $j
\neq j'$. Now, let $\bld{M}' = \bigcup_j f_j^{-1}(\bld{M}_j)$. It is not
hard to see that $a \in V(\bld{M}')$. Since $\bld{M}' \subseteq \dbiI_1$, we
have $a \in V(\dbiI_1)$.
\end{proof}

\begin{subproperty}[(Correctness of \texttt{rename2})]
If $\dbiI_2 \models \phi$, then $\dbiI_3 \models \phi$.
\label{subp:maxim}
\end{subproperty}
\begin{proof}
Define the function $\eta:I_3 \rarw I_2$ such
that $\eta(a_{j,k}) = a$. Note that $\eta$ is onto. Extend $\eta$ to tuples, and
sets of tuples in the obvious way.
In view of lemma \ref{lm:equiv}, it is sufficient to show that, for each
$a \in I_3$ and $i \in
\{0,\ldots,2^N-1\}$, $a \in C_i(\dbiI_3)$ iff $\eta(a) \in C_i(\dbiI_2)$.
In turn, it is enough to show that, $a \in V(\dbiI_3)$ iff $\eta(a) \in
V(\dbiI_2)$.

($\Rarw$) Take a minimal set $\bld{M}$ of tuples in $\dbiI_3$ such that
$a \in V(\bld{M})$. Then, we have $\eta(a) \in V(\eta(\bld{M}))$. Since
$\eta(\bld{M}) \subseteq \dbiI_2$,
we have $\eta(a) \in V(\dbiI_2)$.

($\Larw$) Since invariant \ref{inv:JS} holds for $\mcl{H}_2$, the fact that
$\eta(a) \in V(\dbiI_2)$ is witnessed
by $S_{\eta(a)} \in \mcl{H}_2$. Since $S_a$ and $S_{\eta(a)}$ are isomorphic
justification sets, we have that $a \in V(\dbiI_3)$ is justified by
$S_a \in \mcl{H}_3$.
\end{proof}

\begin{subproperty}[(Correctness of \texttt{copy})]
\begin{enumerate}
\item For each node $v$ (say, $(\bld{S}_a \times C_i)$-labeled) of $\mcl{H}_4$,
$a \in C_i(\dbiI_4)$ with witnessing structure $\bld{S}_a$.
\item If $\dbiI_3 \models \phi$, then $\dbiI_4 \models \phi$.
\end{enumerate}
\label{subp:copy}
\end{subproperty}
\begin{proof}


Similar to the proof of subproperty \ref{subp:rename}.
\end{proof}


\begin{subproperty}[(Correctness of \texttt{prune})]
If $\dbiI_4 \models \phi$, then $\dbiI_5 \models \phi$.
\label{subp:prune}
\end{subproperty}
\begin{proof}
Recall that there are $N_l \assign \delta \times \Delta$ leafs in $\mcl{H}_4$.
Let us order these nodes as $v_1,\ldots,v_{N_l}$. Suppose also that $v_i$
is labeled by $\bld{S}_{b_i}$ for some $b_i \in I_4$. By virtue of
\texttt{rename2}, we see that $b_i \neq b_j$ whenever $i \neq j$. Next,
we may think of the procedure \texttt{prune} as consisting of $N_l$ steps,
where at step $i$, the node $v_i$ has all its descendants removed (pruned)
and $v_i$ is changed to
$\bld{S}_{b_i}' \assign \bld{S}_{c_i}[b_i/c_i]$ for some $c_i \in I_4$. Letting
$\mcl{K}_0 \assign \mcl{H}_4$, we denote by $\mcl{K}_i$ ($i = 1,\ldots,N_l$)
the resulting model after executing $i$ steps on $\mcl{K}_0$.
The structure corresponding to $\mcl{K}_i$ is denoted by $\dbiJ_i$.

We wish to prove by induction on $0 \leq i < N_l$ that
\begin{description}
\item[(I)] For each $a \in J_{i+1}$ and $V \in \mcl{U}$,
$a \in V(\dbiJ_{i+1})$ iff $a \in V(\dbiJ_i)$.
\item[(II)] Invariant \ref{inv:JS} holds for $\dbiJ_{i+1}$.
\item[(III)] For each $a \in J_{i+1}$, we have $a \in C_i(\dbiJ_{i+1})$ iff
        $a \in C_i(\dbiJ_i)$.
\end{description}

Note that $J_{i+1} \subseteq J_i$. So, by lemma \ref{lm:equiv} and the fact
that invariant \ref{inv:JS} holds for the initial case $\bld{J}_0$ (from
proofs of previous subproperties), statement (III) will imply what we wish
to prove. It is easy to see that statement (III) is a direct consequence of
statement (I). It is also easy to show that statement (I) implies statement
(II). This follows since
firstly, at step $i+1$, we replace the content
of $\bld{S}_{b_i}$ by that of $\bld{S}_{c_i}$, except for substituting $b_i$
for $c_i$. Second, the elements $b_i$ and $c_i$ belong to the same equivalence
class in $\dbiJ_i$, and invariant \ref{inv:JS} holds for $\dbiJ_i$ by
induction. Therefore, it remains only to prove statement (I).

Let us now fix $i < N_l$, $a \in J_{i+1}$, and $V \in \mcl{U}$. It is simple
to prove that $a \in V(\dbiJ_i)$ implies $a \in V(\dbiJ_{i+1})$. This is
witnessed by tuples in the $\bld{S}_a$-labeled (or $\bld{S}_{b_i}'$-labeled if
$a = b_i$) node in $\mcl{K}_{i+1}$, which exists by construction.

Conversely, we take a minimal set $\bld{M}$ of tuples in $\dbiJ_{i+1}$ with
$a \in V(\dbiJ_{i+1})$, witnessed by the valuation $\nu$. Our aim is to find a
set $\bld{M}'$ of tuples in
$\dbiJ_i$ with $a \in V(\bld{M}')$. Let $\bld{M}_{b_i} \assign \bld{M} -
\db(\dbiJ_i)$. Intuitively, $\bld{M}_{b_i}$ contains the set of \textit{new}
tuples. These are tuples which did not exist in the structure $\dbiJ_i$ and have been
created specifically to justify the node whose descendants (justifications)
have just been pruned.
By construction, we have $\bld{M}_{b_i} \subseteq
\bld{S}_{b_i}'$, which implies that $adom(\bld{M}_{b_i}) \subseteq
adom(\bld{S}_{b_i}')$. Observe also that $b_i \in adom(t)$ for each tuple $t$
in $\bld{M}_{b_i}$; otherwise, $t$ would be a tuple in
$\bld{S}_{c_i} \subseteq \dbiJ_i$ (i.e. it would not be a new tuple). Define
\[
    \bld{L} := \{ t \in \bld{M} - \bld{M}_{b_i} : \text{$t$ is connected to
            some $t' \in \bld{M}_{b_i}$ in $\bld{M}$} \}.
\]
$\bld{L}$ consists of tuples that are connected to new tuples.
Also, let $\bld{L}' \assign \bld{M} - \bld{M}_{b_i} - \bld{L}$, i.e., the set
of all tuples of $\bld{M}$ that are \emph{not} connected to any (new) tuples in
$\bld{M}_{b_i}$. Note that $\bld{L} \cup \bld{L}' \subseteq \dbiJ_i$, and
that the sets $\bld{M}_{b_i}$, $\bld{L}$, and $\bld{L}'$ form a partition on
$\bld{M}$. Also, by definition, we have $adom(\bld{L}') \cap adom(\bld{M}_{b_i}
\cup \bld{L}) = \emptyset$. In the following, we define $\bld{M}_{c_i} \assign
\bld{M}_{b_i}[c_i/b_i]$. Note that $\bld{M}_{c_i} \subseteq \bld{S}_{c_i}
\subseteq \dbiJ_i$.

Before we proceed further, it is helpful to see how we partition $\bld{M}$
on a simple example. Suppose that the view $V$ is defined as
\[
    V(x_0) \larw E(x_0,x_1),E(x_1,x_2),R(x_3,x_4),R(x_4,x_5).
\]
Furthermore, suppose that we take the valuation $\nu$ defined as $\nu(x_i) = i$.
In this case, $\bld{M}$ can be described diagrammatically as follows
\[
    V(0) \larw E(0,1),E(1,2),R(3,4),R(4,5).
\]
Assume now that the only tuple in $\bld{M}$ that doesn't belong
to $\db(\dbiJ_i)$ is $E(0,1)$. Then, we have $\bld{M}_{b_i} = \{E(0,1)\}$. It
is easy to show that $\bld{L} = \{E(1,2)\}$ and $\bld{L}' =
\{R(3,4),R(4,5)\}$.

We next state a result regarding $\bld{L}$ that will shortly be needed.  It clarifies
the nature of a partition that exists for $\bld{L}$ and the relationships which hold
between the elements of the partition.

\begin{proposition}
We can find tuple-sets $\bld{A}, \bld{B} \subseteq \bld{L}$ such that:
\begin{enumerate}
    \item $\bld{A} \cap \bld{B} = \emptyset$,
    \item $\bld{A} \cup \bld{B} = \bld{L}$,
    \item $adom(\bld{A}) \cap adom(\bld{B}) = \emptyset$,
    \item $b_i \notin adom(\bld{B})$, and
    \item $adom(\bld{M}_{b_i}) \cap adom(\bld{A}) \subseteq \{b_i\}$.
\end{enumerate}
\label{prop:part}
\end{proposition}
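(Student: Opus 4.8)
The plan is to reduce the entire statement to one connectivity claim and then settle that claim using the structural properties produced by \texttt{rename1}, \texttt{rename2} and \texttt{copy}. Recall from the discussion just before the proposition that every tuple of $\bld{M}_{b_i}$ contains $b_i$ and lies inside $\bld{S}'_{b_i}=\bld{S}_{c_i}[b_i/c_i]$, so $adom(\bld{M}_{b_i})\subseteq\{b_i\}\cup X$ where $X\assign adom(\bld{S}_{c_i})\setminus\{c_i\}$, and $b_i\notin X$ (the pruned leaf $v$ and the donor tree sit in distinct copies $\mcl{B}_k\ne\mcl{B}_{k'}$, as $\bld{G}$ has no loops). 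Also $\bld{L}=\bld{M}-\bld{M}_{b_i}-\bld{L}'$ lies in $\db(\dbiJ_i)$ and is a union of connected components of $\bld{M}-\bld{M}_{b_i}$, each of which shares some constant with a tuple of $\bld{M}_{b_i}$, hence with $\{b_i\}\cup X$. I would define $\bld{B}$ as the union of the components that touch a constant of $X$, and $\bld{A}$ as the union of the remaining ones (which thus touch $b_i$). Distinct connected components have disjoint vertex sets, so (1), (2), (3) hold outright \emph{provided} no component touches both $b_i$ and a constant of $X$; granting that, (4) holds since every component meeting $b_i$ went into $\bld{A}$, and (5) holds since then $adom(\bld{A})\cap X=\emptyset$ while $b_i\notin X$. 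Hence the only thing to prove is: \emph{no connected component of $\bld{M}-\bld{M}_{b_i}$ contains both a tuple whose domain includes $b_i$ and a tuple whose domain includes a constant of $X$.}

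For this I would locate the two families of tuples inside $\dbiJ_i$ and show they are far apart. By \texttt{rename2}, $b_i$ occurs in $\dbiJ_i$ only in the (still unrelabelled) leaf $v$ and its parent --- i.e.\ at levels $h-1,h$ of the single tree containing $v$, inside the copy $\mcl{B}_k$; the constants of $X$ are level-$0,1$ constants of the donor tree in $\mcl{B}_{k'}$, and (by \texttt{rename1} and \texttt{copy}) occur in $\dbiJ_i$ only there and possibly in already-relabelled leaves --- at level $h$ --- of further copies that happened to reuse the donor $\bld{S}_{c_i}$. Since $V$ has length $\leq m$ and $\bld{M}$ is a minimal witness, $\bld{M}-\bld{M}_{b_i}$ has at most $m$ tuples, so each of its connected components spans a subgraph of $\gaif(\dbiJ_i)$ of diameter below $m$. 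Because each map $out_{(\cdot)}$ is a bijection, $v$ is the \emph{only} leaf of $\mcl{B}_k$ linking to $\mcl{B}_{k'}$, and $v$ is unrelabelled in $\dbiJ_i$; so any path in $\dbiJ_i$ out of $\mcl{B}_k$ must pass through some other relabelled leaf of $\mcl{B}_k$, landing near the root of a tree in a copy other than $\mcl{B}_{k'}$, after which reaching any relabelled leaf, or $T_{c_i}$ itself, costs roughly $h$ further steps --- and the girth bound $g=cm$ on $\bld{G}$ forbids any shorter route among the copies. Choosing the constant $c$ of the \texttt{copy} step larger than the constant bounding $|\bld{M}|$ yields $\min(h,g)=cm>m$, so no component of diameter below $m$ can bridge $b_i$ and $X$, which proves the claim and hence the proposition.

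I expect this last step --- turning ``$\mcl{B}_k$ and the regions carrying $X$ are far apart in $\dbiJ_i$'' into a clean numeric bound --- to be the main obstacle. It must combine, within a single distance estimate in $\gaif(\dbiJ_i)$, the two-consecutive-levels property of \texttt{rename2}, the tree- and copy-disjointness from \texttt{rename1} and \texttt{copy}, and the bijective, high-girth linking constructed in \texttt{prune}; and it is exactly this estimate that dictates how large $c$ (hence $h=g=cm$, and the number $\Delta=\delta^{g}$ of copies) must be chosen --- which is why the \texttt{copy} step precedes \texttt{prune}.
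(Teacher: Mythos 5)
Your proposal is correct and takes essentially the same approach as the paper: both reduce the proposition to the single geometric claim that $b_i$ and the constants of $adom(\bld{S}_{c_i})$ are more than $m$ apart in $\gaif(\dbiJ_i)$, and both establish it from the two-consecutive-levels property of \texttt{rename2}, the disjointness from \texttt{rename1}/\texttt{copy}, the bijectivity of $out_k$, and the height/girth bound $h=g=cm$. The only (cosmetic) difference is that you define $\bld{A},\bld{B}$ as unions of connected components of $\bld{M}-\bld{M}_{b_i}$ according to whether they touch $b_i$ or $adom(\bld{S}_{c_i})$, whereas the paper takes the radius-$m$ balls around $b_i$ and $\bld{S}_{c_i}$ in $\gaif(\dbiJ_i)$; the separation estimate you flag as the main obstacle is likewise left at the level of a sketch in the paper.
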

The proof of this proposition can be found at the end of this section. We now
shall construct $\bld{M}' \subseteq \db(\dbiJ_i)$ such that $a \in V(\bld{M}')$. First,
we put $\bld{L}'$ in $\bld{M}'$. This does not affect our choice of
tuple-sets that replace $\bld{M}_{b_i}$, $\bld{A}$, and $\bld{B}$ as
$adom(\bld{L}') \cap adom(\bld{M}_{b_i} \cup \bld{L}) = \emptyset$
(i.e. the set of free tuples instantiated by $\bld{L}'$ and the set of free
tuples instantiated by $\bld{M}_{b_i} \cup \bld{L}$ share no common variables),
as we have noted earlier. There are two cases to consider:
\begin{description}
\item[case 1] $a = b_i$.
        Let $F$ be the set of all free tuples in the body of $V$ such
that $\{ \nu(u) : u \in F \} =
            \bld{M}_{b_i} \cup \bld{B}$.
        Suppose $X$ is the set of all variables in $V$.
        Let $\{y_1,\ldots,y_r\} \subseteq X$ be the set of variables in
        $F$ such that $\nu(y_j) = b_i$. With $y$ as a new variable, let
        $F' := F[y/y_1,\ldots,y_r]$. Define the new view $V'(y)$
        whose conjuncts are exactly $F'$:
        \[
        V'(y) \larw \bigwedge F'
        \]
        Trivially, we have $b_i \in V'(\bld{M}_{b_i} \cup \bld{B})$. Then,
        as
        $b_i \notin adom(\bld{B})$ by proposition \ref{prop:part}, we have
        $c_i \in V'(\bld{M}_{c_i} \cup
        \bld{B})$. Note that $\bld{M}_{c_i} \cup \bld{B} \subseteq
        \db(\dbiJ_i)$ and $V' \in \mcl{U}$ since $length(V') \leq m$. So,
        since by induction $b_i$
        and $c_i$ belong to the same equivalence class in $\dbiJ_i$, there
        exist tuple-sets
        $\bld{P}_{b_i}$ and $\bld{B}'$ with $\bld{P}_{b_i} \cup \bld{B}'
        \subseteq \dbiJ_i$ such that
        $b_i \in V'(\bld{P}_{b_i} \cup \bld{B}')$. [$\bld{P}_{b_i}$ and
        $\bld{B}'$, respectively, replace the
        role of $\bld{M}_{c_i}$ and $\bld{B}$.] Observe now that $a \notin
        adom(\bld{B})$ as $a = b_i$. Since $adom(\bld{M}_{b_i})
        \cap adom(\bld{A}) \subseteq
        \{b_i\}$ and $adom(\bld{A}) \cap adom(\bld{B}) = \emptyset$ from
        proposition \ref{prop:part}, it is easy to verify that
        \[
        a \in V(\bld{P}_{b_i} \cup \bld{A} \cup \bld{B}' \cup \bld{L}').
        \]
\item[case 2] $a \neq b_i$. This is divided into two further cases:
    \begin{description}
      \item[(a)] $b_i \in adom(A)$. This is divided into two further cases:
        \begin{description}
     \item[(i)] $a \in adom(A)$. In this case, note that
            $a \notin adom(\bld{M}_{b_i})$ (using Proposition \ref{prop:part}(5))
          and $a \notin adom(\bld{B}$) (using Proposition \ref{prop:part}(3)).
            We can then continue in the same fashion as in the case 1.
     \item[(ii)] $a \notin adom(A)$. Let $F$ be the set of all free tuples
        in the body of $V$ such that $\{ \nu(u) : u \in F \} =
        \bld{A}$. Let $\{y_1,\ldots,y_r\} \subseteq X$ be the set of
        variables in $F$ such that $\nu(y_j) = b_i$. Let $y$ be a new
        variable (i.e. $y \notin X$) and $F' := F[y/y_1,\ldots, y_r]$,
        i.e., we replace each occurrence of the variables $y_1, \ldots, y_r$
        in $F$ by $y$. Then, let $V'(y)$ be the view whose conjuncts are
        exactly $F'$:
        \[
        V'(y) \larw \bigwedge F'.
        \]
        Then, $V' \in \mcl{U}$ and $b_i \in V'(\bld{A})$. Since
        $\bld{A} \subseteq \db(\dbiJ_i)$ and because $b_i$ and $c_i$ belong to the
        same equivalence class in $\dbiJ_i$ (by the induction hypothesis),
        there exists a set $\bld{A}' \subseteq \db(\dbiJ_i)$ such that
        $c_i \in V'(\bld{A}')$.  Since $adom(\bld{M}_{b_i}) \cap
        adom(\bld{A}) \subseteq \{b_i\}$ and
        $adom(\bld{A}) \cap adom(\bld{B}) = \emptyset$ from proposition
        \ref{prop:part}, it is easy to check that
        $a \in V'(\bld{M}_{c_i} \cup \bld{A}' \cup \bld{B} \cup \bld{L}')$.
     \end{description}
       \item[(b)] $b_i \notin adom(A)$. Let $\bld{M}_{c_i} \assign
        \bld{M}_{b_i}[c_i/b_i]$. By construction, we see that
        $\bld{M}_{c_i} \subseteq \bld{S}_{c_i} \subseteq \db(\dbiJ_i)$. By
        proposition \ref{prop:part} (items 4 and 5), it is the case that
        \[
        a \in V(\bld{M}_{c_i} \cup \bld{A} \cup \bld{B} \cup \bld{L}').
        \]
    \end{description}
\end{description}
In any case, we have $a \in V(\dbiJ_i)$. This completes the proof.

\end{proof}
It remains to prove proposition \ref{prop:part}.
\begin{proof}[of proposition \ref{prop:part}]
The present situation is depicted in figure \ref{fig:part}.
\begin{figure}[h]
\begin{center}
    \epsfig{file=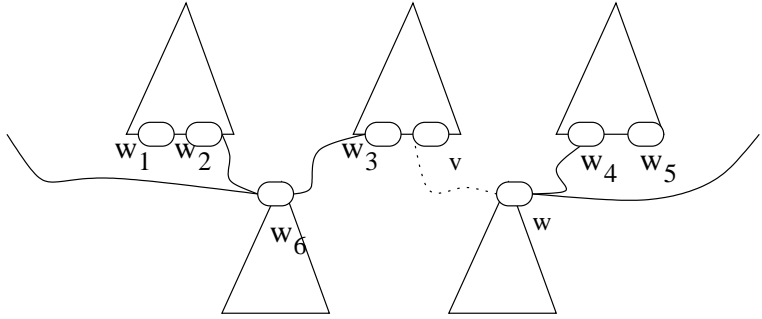}
\end{center}
\caption{\small $v$ is the $\bld{S}_{b_i}$-labeled node whose contents
    are to be changed by $\bld{S}_{c_i}[b_i/c_i]$. The node $w$ is
    $\bld{S}_{c_i}$-labeled, and will be ``linked'' to node $v$ after
    step $i+1$ of \texttt{prune} procedure is finished --- signified by
    the dotted line. Solid lines represent links that have been established
    in step $j < i+1$ of the procedure.
    \label{fig:part}}
\end{figure}
This is a snapshot of the moment just before we apply step $i+1$.
Step $i$ of \texttt{prune} procedure simply prunes the subtree rooted at
the $\bld{S}_{b_i}$-labeled node $v$, and links (rejustifies) $v$ using the
$\bld{S}_{c_i}$-labeled node $w$, where $b_i$ and $c_i$ belong to the same
equivalence class in $\bld{J}_i$. It is important to note that some
cousin\footnote{node of the same tree and level} $w_3$ of $v$ might also be
linked to a root $w_6$ of another tree, which in turn might be linked to
a leaf node $w_2$ of another tree, which in turn might have a cousin $w_1$ that
satisfies the same property as $w_3$ and so on. Furthermore, the node $w$ might
have also been linked to some other leaf $w_4$ that has a cousin $w_5$ that
is connected to a root of some other tree, and so on. Note that it is
impossible for two leafs of a tree to be linked to the same root node
of a tree by construction. Hence, the three trees in the middle (i.e. where
$v,w$, and $w_6$ are located) are necessarily distinct. The leftmost and
rightmost tree might be the same tree depending on the value of the girth $g$
that we defined earlier.

Let us now define
\begin{eqnarray*}
    \bld{A} & \assign & \{ t \in \bld{L} : d_{\gaif(\bld{J}_i)}(t,b_i)
                    \leq m \} \\
    \bld{B} & \assign & \{ t \in \bld{L} : d_{\gaif(\bld{J}_i)}(t,\bld{S}_{c_i})
                    \leq m \}.
\end{eqnarray*}
Intuitively, the set $\bld{A}$ contains tuples up to distance $m$ from the label
$\bld{S}_{b_i}$ of node $v$ in $\dbiJ_i$, while the set $\bld{B}$ contains
tuples up to distance $m$ from the label $\bld{S}_{c_i}$ of $w$ in $\dbiJ_i$. Note
that this is distance in the structure $\dbiJ_i$, not $\dbiJ_{i+1}$. It is
immediate that we have property (2) $\bld{A} \cup \bld{B} = \bld{L}$, as
the length of the view $V$ is at most $m$ and that
$adom(\bld{M}_{b_i}) \subseteq \{b_i\} \cup adom(\bld{S}_{c_i})$. So, it is
sufficient
to show that properties 3 and 5 are satisfied, as they obviously imply
properties 1 and 4. Note that our construction has ensured that:
\begin{enumerate}
\item Two nodes in any given tree in $\mcl{K}_i$ that are at least distance
    two apart cannot share a constant.
\item Two trees $T$ and $T'$ in $\mcl{K}_i$ cannot share a constant except on:
    (i) a unique leaf of $T$ and the root of $T'$, as is the case for
    $v$ and $w$ in Figure \ref{fig:part}
     or alternatively (ii) a unique leaf of
    $T$ and a unique leaf of $T'$.  This case can happen when both leafs are connected to the root
    of a different tree $T''$, as is the situation for $w_2$ and $w_3$ in Figure \ref{fig:part}.
\end{enumerate}

Therefore, for some sufficiently large constant $c' \in \mathbb{N}$, two nodes
$v'$ and $v''$ in $\mcl{K}_i$ of distance $c' m$ cannot have two elements of
$\bld{J}_i$ that are of distance $\leq m$ in $\gaif(\bld{J}_i)$. [In fact, a
careful analysis will show that $c' = 1$ is sufficient.] Therefore, the
locations of the constants in
$\bld{A}$ (resp. $\bld{B}$) cannot be ``very far away'' from the tuple $v$
(resp. $w$). In fact, if we set $c \geq c'$ (recall that $g = cm$) and consider the path $P$ between a
tuple $t \in \bld{A}$ and the constant $b_i$ (which belongs to $v$ and its
parent), it cannot
connect a root and a leaf of the same tree (i.e. through the body of the tree).
So, either it is completely contained in the tree of which $v$ is a leaf, or it
has to alternate alternate between leafs and root several times, and then end
in some tree. In figure \ref{fig:part}, we may pick the following example
\[
    v \rarw^* w_3 \rarw w_6 \rarw w_2 \rarw^* w_1 \rarw \ldots,
\]
where we use the notation $\rarw^*$ to mean ``path in the same tree''.
The same analysis can be applied to determine the locations of the tuples
of $\bld{B}$. Therefore, in order the ensure that properties 3 and 5 are
satisfied, we just need to ensure that the height of each tree and the girth of
$\mcl{K}_i$ be large enough, which can be done by taking a sufficiently
large $c$. When the girth (as ensured in the \texttt{copy} and \texttt{prune}
procedures) is sufficiently large, we can be sure that no paths of
length $\leq m$ exist between $v$ and $w$ in $\mcl{K}_i$
[In fact, a careful but tedious analysis shows that $c = 1$ is
sufficient.]
\end{proof}

Theorem \ref{th:fin_s} also holds for infinite models, since
even if the initial justification hierarchies are infinite, the proof method used
is unchanged.
We thus also
obtain finite controllability (every satisfiable formula is
finitely satisfiable) for $\ucv$.

\begin{proposition}
The $\ucv$  class of formulas
is finitely
controllable. \
\end{proposition}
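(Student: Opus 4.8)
The plan is to observe that nothing in the proof of Theorem~\ref{th:fin_s} ever used finiteness of the starting model, so that argument already yields finite controllability. Concretely, suppose $\phi \in \ucv$ is satisfiable and let $\dbiI_0$ be a model; by the L\"owenheim--Skolem theorem I may take $\dbiI_0$ to be countable. I would then feed $\dbiI_0$ through the very same pipeline
\[
\dbiI_0 \stackrel{\texttt{makeJF}}{\lorarw} \dbiI_1 \stackrel{\texttt{rename1}}{\lorarw} \dbiI_2 \stackrel{\texttt{rename2}}{\lorarw} \dbiI_3 \stackrel{\texttt{copy}}{\lorarw} \dbiI_4 \stackrel{\texttt{prune}}{\lorarw} \dbiI_5,
\]
invoking Subproperties~\ref{subp:makeJHS}--\ref{subp:prune} to conclude $\dbiI_5 \models \phi$, and the counting at the end of the proof of Theorem~\ref{th:fin_s} to conclude $|I_5| \le 2^{2^{q(p,m)}}$. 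Since this bound is finite, $\phi$ is finitely satisfiable, which is exactly finite controllability.

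The one point I would take care to verify is that \texttt{makeJF} and the later steps still behave correctly when $\dbiI_0$ is infinite. Here I would note: the number $N$ of non-equivalent views of length $\le m$ depends only on $\sigma$ and $m$ (it is bounded by $m(mp)^m$), hence is finite no matter how large $\dbiI_0$ is; so there are at most $2^N$ equivalence classes $C_i$ and at most $2^N$ justification trees in $\mcl{H}_1$, and each node still has at most $N\times m$ children, because justification sets are finite (view bodies are finite). Thus $\mcl{H}_1$ is a forest of boundedly many finitely branching trees; individual trees may have infinite depth, but this is harmless, since \texttt{copy} and \texttt{prune} only ever inspect each tree up to level $h = cm$, and \texttt{rename1}, \texttt{rename2} are purely syntactic relabelings insensitive to depth. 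Consequently the extremal-graph input (a $\delta$-regular graph of girth $\ge g$ on $\Delta$ vertices) and the rejustification analysis of Subproperty~\ref{subp:prune} and Proposition~\ref{prop:part} go through verbatim.

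I expect the only real obstacle to be precisely this bookkeeping check that no step secretly relied on $|I_0| < \infty$; once it is confirmed, finite controllability is an immediate corollary of Theorem~\ref{th:fin_s}, and indeed one could simply have stated Theorem~\ref{th:fin_s} with ``finite'' in the conclusion from the outset. I would therefore present the proof as a one-line appeal to the construction already given, prefaced by the remark (already made in the text immediately before the statement) that the construction is unchanged for infinite $\dbiI_0$.
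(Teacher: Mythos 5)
Your proposal is correct and is essentially identical to the paper's own argument: the proof of Theorem~\ref{th:fin_s} already starts from a possibly infinite model $\dbiI_0$ (invoking L\"owenheim--Skolem to make it countable) and produces a bounded finite model $\dbiI_5$, so finite controllability is an immediate corollary. Your additional bookkeeping — finitely many equivalence classes, finitely branching (possibly infinitely deep) justification trees, and the observation that \texttt{copy} and \texttt{prune} only look at levels up to $h=cm$ — is exactly the verification the paper leaves implicit in its remark that ``even if the initial justification hierarchies are infinite, the proof method used is unchanged.''
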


\section{Extending the View Definitions}

\label{extend}

The previous section showed that the first order language using unary
conjunctive view definitions is decidable.  A natural way to increase
the power of the language is
to make view bodies more expressive (but retain
unary arity for the views).  We say earlier
that allowing unary views to use disjunction in their definition does not
actually increase expressiveness
of the UCV language and hence this case is decidable.
Unfortunately, as we will show, employing other ways
of extending the views results in satisfiability becoming undecidable.

The first extension we consider is allowing inequality in the
views,
e.g.,
\begin{grp}
$V(x) \leftarrow R(x,y), S(x,x), x \neq y$
\end{grp}
Call the first order language over such views the
{\em first order unary conjunctive$^{\neq}$ view language}.
In fact, this language allows 
us to check whether a two counter machine computation is valid
and terminates, which thus leads to the following result:

\begin{theorem}
\label{viewineq}
Satisfiability is undecidable for the first order unary conjunctive$^{\neq}$
view query language.
\end{theorem}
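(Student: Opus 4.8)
The plan is to reduce from the (undecidable) halting problem for deterministic two-counter Minsky machines. Given such a machine $M$, I will construct a vocabulary $\sigma_M$, a finite set $\mcl{V}_M$ of unary conjunctive$^{\neq}$ views over $\sigma_M$, and a sentence $\phi_M \in \ucv^{\neq}(\sigma_M,\mcl{V}_M)$ with the property that $\phi_M$ is satisfiable iff $M$ halts on empty input. The intended model of $\phi_M$ is the \emph{computation history} of $M$: a finite backbone path $t_0 \to t_1 \to \cdots \to t_T$ of ``time'' elements, each $t_i$ carrying a unary label for the control state of the $i$-th configuration, and where the value of counter $c_\ell$ ($\ell\in\{1,2\}$) at time $i$ is recorded by a finite $E_\ell$-chain of fresh cells hanging off $t_i$. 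To turn the validity of a single transition into a \emph{local} (bounded-radius) property, I also add ``synchronisation'' edges $\mathrm{Syn}_\ell$ joining the $j$-th cell of the counter-$\ell$ chain at time $i$ to the $j$-th cell of the counter-$\ell$ chain at time $i+1$; an increment leaves the extra last cell unsynchronised, a decrement does the symmetric thing, and a zero-test leaves both chains empty. With these edges present, ``the move from configuration $i$ to configuration $i+1$ is the legal $M$-step'' becomes a condition involving only the control labels of $t_i,t_{i+1}$ and the immediate $E_\ell$/$\mathrm{Syn}_\ell$-neighbourhood of a synchronisation cell, of radius bounded by a constant depending only on the (fixed) instruction set of $M$.

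The point of allowing inequality — and, by Theorem~\ref{th:fin_s}, exactly the feature that plain UCV lacks — is that a conjunctive query without $\neq$ cannot tell a genuine chain of $k$ pairwise distinct cells from a ``folded'' version in which cells are reused, so a decremented counter chain could otherwise spoof its length. Putting inequalities into the view bodies lets me assert, inside a single view, that the nearby backbone and chain cells witnessing a local configuration are pairwise distinct, which pins the encoding down. The sentence then has the shape
\[
    \phi_M \;:=\; \big(\exists x\, \mathrm{Init}(x)\big)\;\wedge\;\big(\exists x\, \mathrm{Halt}(x)\big)\;\wedge\;\forall x\,\bigvee_{k} \mathrm{Cons}_k(x),
\]
where $\mathrm{Init}\in\mcl{V}_M$ forces $x$ to be a backbone cell in the start state with no $R$-predecessor and empty counters, $\mathrm{Halt}\in\mcl{V}_M$ forces $x$ to be a backbone cell in a halting state, and the finitely many $\mathrm{Cons}_k\in\mcl{V}_M$ enumerate the admissible local pictures (an interior backbone cell whose incoming and outgoing moves are legal, a properly linked and synchronised counter cell, a chain endpoint, and so on). Each $\mathrm{Cons}_k$, $\mathrm{Init}$, $\mathrm{Halt}$ is a unary conjunctive$^{\neq}$ query of length bounded in $|M|$, and $\phi_M$ is a $\ucv^{\neq}$ query since it is a Boolean combination with quantifiers of such views.

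Correctness in the easy direction is immediate: if $M$ halts, its finite computation history is a finite $\sigma_M$-structure satisfying $\phi_M$. The hard direction, which I expect to be the main obstacle, is to show that \emph{every} model of $\phi_M$, in particular an infinite one, encodes a genuine halting run. Following the backbone forward from an $\mathrm{Init}$-witness, the consistency views force every successive configuration to be the \emph{unique} $M$-successor of its predecessor (using determinism of $M$), so the backbone faithfully simulates the run of $M$; the presence of a $\mathrm{Halt}$-witness then has to be reconciled with this simulation. The delicate part is excluding infinite backbones and, more generally, ill-founded ``predecessor'' chains that never reach an $\mathrm{Init}$-cell. I would handle this by additionally tagging each backbone cell with a \emph{timestamp}, encoded as its own finite unary chain, and requiring (again by local views, and crucially using $\neq$ to force the chain to be genuine) that this chain strictly grows by one along $R$ and is empty exactly at $\mathrm{Init}$, so that the backward backbone from any cell must terminate at an $\mathrm{Init}$-cell; alternatively, one can reduce from the complementary non-halting problem, where the intended model is an infinite, halt-state-free simulation of the run of $M$ and determinism once more forces the correspondence. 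Once the finiteness/well-foundedness argument is settled, verifying that all the required tests really do fit inside bounded-length conjunctive$^{\neq}$ views is routine bookkeeping.
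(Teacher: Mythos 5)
Your top-level strategy --- a reduction from the halting problem for two-counter machines, with $\neq$ inside view bodies being the feature that lets you force successor/chain relations to behave like genuine chains --- is the same as the paper's. But the sentence you actually write down cannot work, for a reason independent of the details of your encoding. Every view in $\mcl{V}_M$ is a conjunctive$^{\neq}$ query and is therefore satisfied by more elements whenever tuples are added, and your $\phi_M$ mentions views only in the conjuncts $\exists x\,\mathrm{Init}(x)$, $\exists x\,\mathrm{Halt}(x)$ and $\forall x \bigvee_k \mathrm{Cons}_k(x)$, each of which becomes \emph{true} as soon as every element satisfies every view. Now take the complete structure $\bld{C}_n$ on $n$ elements (every relation of $\sigma_M$ interpreted as the full relation), with $n$ at least the number of variables occurring in any view: for any element $a$ and any view $V$, an injective instantiation of the body variables with $x\mapsto a$ satisfies all atoms and all inequalities, so $a\in V(\bld{C}_n)$. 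Hence $\bld{C}_n\models\phi_M$ for every machine $M$, halting or not, and the reduction fails. The missing device is exactly the one the paper leans on throughout: uniqueness and absence conditions must be expressed as \emph{negated existential views}, i.e.\ conjuncts $\neg\exists x\,\mathrm{Bad}_j(x)$ where $\mathrm{Bad}_j$ is a conjunctive$^{\neq}$ view detecting a forbidden pattern (two distinct successors of one cell, a predecessor of the initial cell, two distinct states at the same timestamp, \dots). Relatedly, you describe $\mathrm{Init}$ as forcing ``no $R$-predecessor and empty counters'' and $\mathrm{Cons}_k$ as forcing that the incoming and outgoing moves \emph{are} legal; these are negative or universal conditions that a positive existential view cannot state --- it can only assert that some legal witness exists, not that illegal ones are absent.

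Beyond this, your encoding (counter values as unary chains hanging off each time cell, compared via synchronisation edges) genuinely differs from the paper's, which keeps one global $succ$ relation serving simultaneously as the time line and the number line, stores configurations in a $4$-ary relation $config(t,s,c_1,c_2)$, and uses $\neq$ inside negated views to make $succ$ a partial injection and the timestamp a key for $config$. Notably, the paper does \emph{not} try to rule out ill-founded or cyclic successor structures; it tolerates junk components and only insists that one good chain from $zero$ to $last$ exists. You instead flag well-foundedness as the delicate step and propose a timestamp-chain gadget, but first-order logic cannot enforce finiteness of a chain, so ``the timestamp chain grows by exactly one along $R$ and is empty exactly at $\mathrm{Init}$'' does not exclude an infinite descending backbone all of whose timestamp chains are infinite (an infinite chain minus one cell is still infinite). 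If you keep your encoding, you need an argument in the paper's style that coexists with non-well-founded garbage rather than one that tries to eliminate it.
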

\begin{proof}
The proof is by a reduction from the halting problem of two counter machines
(2CM's) starting with zero in the counters.  Given any description of a 2CM
and its computation, we can show how to a) encode this description
in database relations and b) define queries to check this description.  We
construct a query which is satisfiable iff the 2CM
halts.  The basic idea of the simulation is similar to
one in \cite{LMSS92}, but with the major difference that  {\em
cycles are allowed} in the successor relation, though there must be at least
one good chain.

A two-counter machine is
a deterministic finite state machine with two non-negative counters.  The
machine can test whether a particular counter is empty or non-empty.
The transition function has the form
$$\delta: \; S \; \times \; \{=,>\} \; \times \; \{=,>\} \; \rightarrow \;
S \; \times \; \{pop,push\} \; \times \; \{pop,push\}$$
For example,  the statement $\delta(4,=,>) \; = \; (2,push,pop)$ means
that if we
are in state 4 with counter 1 equal to 0 and counter 2 greater
than 0, then go to state 2 and add one to counter 1 and subtract one
from counter 2.

The computation of the machine is stored in the relation
$config(t,s,c_1,c_2)$, where $t$ is the time, $s$ is the state and
$c_1$ and $c_2$ are values of the counters.
The states of the machine can be described by integers $0,1 \ldots
,h$ where 0 is the initial state and $h$ the halting (accepting) state.
The first configuration of the machine is $config(0,0,0,0)$ and
thereafter, for each move, the time is increased by one and the
state and counter values changed in correspondence with the transition
function.

We will use some relations to encode the computation of
2CMs starting with zero in the counters. These are:
\begin{itemize}
\item $S_0,\ldots,S_h$: each contains a constant which represents
      that particular state.
\item $succ$: the successor relation.  We will make sure it contains
one chain starting from $zero$
and ending at $last$ (but it may in addition contain unrelated cycles).
\item $config$: contains computation of the 2CM.
\item $zero$: contains the first constant in the chain in $succ$. This constant
is also used as the number zero.
\item $last$: contains the last constant in the chain in $succ$.
\end{itemize}

Note that we sometimes blur the distinction between unary relations
and unary views, since a view $V$ can simulate a unary relation $U$
if it is defined by $V(x) \leftarrow U(x)$.\\

The  unary and nullary views
(the latter can be eliminated using quantified unary views) are:
\begin{itemize}
\item $halt$: true if the machine halts.
\item $bad$:  true if the database doesn't correctly describe
      the computation of the 2CM.
\item $dsucc$:  contains all constants in $succ$.
\item $dT$:  contains all time stamps in $config$.
\item $dP$: contains all constants in $succ$ with predecessors.
\item $dCol_1, dCol_2$: are projections of the first and second columns
of $succ$.
\end{itemize}

When defining the views, we also state some formulas
(such as $hasPred$)
over the views which will be used to form our
first order sentence over the views.

\begin{itemize}
\item The ``domain'' views (those starting with the letter $d$) are easy to define, e.g.
 \begin{grp} $dP(x) \leftarrow succ(z,x)$\\
             $dCol_1(x) \leftarrow succ(x,y)$\\
             $dCol_2(x) \leftarrow succ(y,x)$
\end{grp}

\item $hasPred$ says ``each nonzero constant in $succ$
has a predecessor:''
\begin{grp}
$hasPred:$ $\forall x (dsucc(x) \Rightarrow (zero(x) \vee \ dP(x)))$
\end{grp}

\item $sameDom$ says
``the constants used in $succ$ and the timestamps in
$config$ are the same set'':
\begin{grp}
$sameDom:$ $\forall x (dsucc(x) \Rightarrow  dT(x)) \wedge
\forall y (dT(y) \Rightarrow dsucc(y)))$
\end{grp}

\item $goodzero$ says ``the zero occurs in $succ$'':
\begin{grp}
$goodzero:$ $\forall x (zero(x) \Rightarrow dsucc(x))$
\end{grp}

\item $nempty$ : each of the domains and
unary base relations is not empty
\begin{grp}
$nempty:$ $\exists x (dsucc(x))$
\end{grp}

\item Check that each constant in $succ$ has at most one successor
      and at most one predecessor and that it has no cycles of length 1.
\begin{grp}
$bad \leftarrow succ(x,y), succ(x,z), y \neq z$\\
$bad \leftarrow succ(y,x), succ(z,x), y \neq z$\\
$bad \leftarrow succ(x,x)$
\end{grp}

Note that the first two of these rules could be enforced by database style
functional
dependencies $x \rightarrow y$ and $y \rightarrow x$ on $succ$.

\item Check that every constant in the chain in succ which isn't
the last one must have a successor

\begin{grp}
$hassuccnext:$ $\forall y (dCol_2(y) \Rightarrow (last(y) \vee dCol_1(y))$
\end{grp}

\item Check that the last constant has no successor and
zero (the first constant)
has no predecessor.
\begin{grp}
$bad \leftarrow last(x), succ(x,y)$\\
$bad \leftarrow zero(x), succ(y,x)$
\end{grp}

\item Check that every constant eligible to be in last and zero must
be so.
\begin{grp}
$eligiblezero:$ $\forall y (dCol_1(y) \Rightarrow (dCol_2(y) \vee zero(y))$\\
$eligiblelast:$ $\forall y(dCol_2(y) \Rightarrow (dCol_1(y) \vee last(y)))$
\end{grp}

\item Each $S_i$ and $zero$ and $last$ contain $\leq$ 1 element.
\begin{grp}
$bad \leftarrow S_i(x), S_i(y), x \neq y$\\
$bad \leftarrow zero(x), zero(y), x \neq y$\\
$bad \leftarrow last(x), last(y), x \neq y$
\end{grp}

\item Check that $S_i,S_j
,last, zero$
are disjoint ($0 \leq i < j \leq h$):
\begin{grp}
$bad \leftarrow zero(x), last(x)$\\
$bad \leftarrow S_i(x), S_j(x)$\\
$bad \leftarrow zero(x), S_i(x)$\\
$bad \leftarrow last(x), S_i(x)$
\end{grp}

\item Check that the timestamp is the key for $config$.
There are three rules, one for the state and two for the two counters;
the one for the state is:
\begin{grp}
$bad \leftarrow config(t,s,c_1,c_2),$
$config(t,s',c'_1,c'_2), s \neq s'$
\end{grp}

\item Check the configuration of the 2CM at time zero.  $config$
must have a tuple at $(0,0,0,0)$ and there must not be any tuples
in config with a zero state and non zero times or counters.
\begin{grp}
$V_{z_s}(s)\!\leftarrow\!zero(t),\!config(t,s,x,y)$\\
$V_{z_{c_1}}(c)\!\leftarrow\!zero(t),\!config(t,x,c,y)$\\
$V_{z_{c_2}}(c)\!\leftarrow\!zero(t),\!config(t,x,y,c)$\\
$V_{y_s}(t)\!\leftarrow\!zero(s),\!config(t,s,x,y)$\\
$V_{y_{c_1}}(c_1)\!\leftarrow\!zero(s),\!config(t,s,c_1,x)$\\
$V_{y_{c_2}}(c_2)\!\leftarrow\!zero(s),\!config(t,s,x,c_2)$\\
$goodconfigzero:$ $\forall x (V_{z_s}(x) \Rightarrow S_0(x) \wedge $\\
\verb?   ?$(V_{z_{c_1}}(x) \vee V_{z_{c_2}}(x)
\vee V_{y_s}(x) \vee V_{y_{c_1}}(x)
\vee V_{y_{c_2}}(x)) \Rightarrow zero(x))$
\end{grp}

\item For each tuple in $config$ at time $t$ which isn't the halt state,
there must also be a tuple at time $t+1$ in $config$.
\begin{grp}
$V_1(t) \leftarrow config(t,s,c_1,c_2), S_h(s)$\\
$V_2(t) \leftarrow succ(t,t2), config(t2,s',c_1',c_2')$\\
$hasconfignext:$ $\forall t ((dt(t) \wedge \neg V_1(t)) \Rightarrow V_2(t))$
\end{grp}

\item Check that the transitions of the 2CM are followed.
For each transition $\delta(j,>,=) = (k,pop,push)$, we include three
rules, one for checking the state, one for checking the first counter
and one for checking the second counter.  For the transition in question we have
for checking the state
\begin{grp}
$V_{\delta}(t') \leftarrow  config(t,s,c_1,c_2),
succ(t,t'), S_j(s),succ(x,c_1),
zero(c_2)$\\
$V_{\delta_s}(s) \leftarrow V_{\delta}(t), config(t,s,c_1,c_2)$\\
$goodstate_{\delta}:$ $\forall s (V_{\delta_s}(s) \Leftrightarrow S_k(s))$
\end{grp}
and for the first counter, we (i)
find all the times where the transition is definitely correct
for the first counter
\[
  \begin{array}{@{}r@{~}l}
Q_{1_\delta}(t') & \leftarrow  config(t,s,c_1,c_2),\\
& \; \; \; succ(t,t'), S_j(s),succ(x,c_1),\\
& \; \; \;  zero(c_2),succ(c_1'',c_1),config(t',s',c_1'',c_2') \mbox{\Large}
  \end{array}
\]
(ii) find all the times where the transition may or may not be correct
for the first counter
\[
Q_{2_\delta}(t') \leftarrow  config(t,s,c_1,c_2),
succ(t,t'), S_j(s),succ(x,c_1),
zero(c_2)
\]
and make sure $Q_{1_\delta}$ and $Q_{2_\delta}$ are the same
\begin{grp}
$goodtrans_{\delta_{c_1}}:$ $\forall t (Q_{1_{\delta}}(t) \Leftrightarrow Q_{2_{\delta}}(t))$
\end{grp}
Rules for second counter are similar.

For transitions $\delta_1, \delta_2, \ldots , \delta_k$, the combination
can be expressed thus:
\begin{grp}
$goodstate:$ $goodstate_{\delta_1} \wedge goodstate_{\delta_2}
\wedge \ldots \wedge goodstate_{\delta_k}$\\
$goodtrans_{c_1}:$ $goodtrans_{\delta_{1_{c_1}}} \wedge goodtrans_{\delta_{2_{c_1}}}
\wedge \ldots \wedge goodtrans_{\delta_{k_{c_1}}}$\\
$goodtrans_{c_2}:$ $goodtrans_{\delta_{1_{c_2}}} \wedge goodtrans_{\delta_{2_{c_2}}}
\wedge \ldots \wedge goodtrans_{\delta_{k_{c_2}}}$
\end{grp}

\item Check that halting state is in $config$.
\begin{grp}
$hlt(t) \leftarrow config(t,s,c_1,c_2), S_h(s)$\\
$halt:$ $\exists x hlt(x)$
\end{grp}
\end{itemize}

Given these views, we claim that satisfiability is undecidable for
the query
$\psi = \neg bad \wedge hasPred \wedge sameDom \wedge halt
\wedge goodzero \wedge goodconfigzero \wedge \wedge nempty \wedge hassuccnext
\wedge eligiblezero \wedge eligiblelast \wedge
 goodstate \wedge goodtrans_{c_1} \wedge goodtrans_{c_2} \wedge hasconfignext$
\end{proof}

The second extension we consider is to allow ``safe'' negation in the
conjunctive views,
e.g.
\begin{grp}
 $V(x) \leftarrow R(x,y), R(y,z), \neg R(x,z)$
\end{grp}
Call the first order language over such views the
{\em first order unary conjunctive$^{\neg}$ view language}.
It is also undecidable, by a result in \cite{BDR98}.

\begin{theorem} \cite{BDR98}
Satisfiability is undecidable for the first order unary conjunctive$^{\neg}$
view query language. \ebox
\end{theorem}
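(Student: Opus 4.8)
The plan is to reduce from the halting problem of two-counter machines (2CMs) started with zero counters, reusing almost verbatim the encoding built for Theorem~\ref{viewineq}. There the vocabulary carries the chain $succ$ (possibly with disjoint extra cycles), the relation $config$ recording the run, the singleton markers $S_0,\dots,S_h,zero,last$, and the domain views $dsucc,dT,dCol_1,dCol_2$; the query $\psi$ conjoins $\neg bad$ with a family of $\forall^{*}\exists^{*}$-style consistency conditions together with the existential $halt$. The only ingredient of that construction not already available to us is inequality, which occurs solely in the bodies of the $bad$-rules (functionality and injectivity of $succ$, the timestamp key for $config$, singleton-ness and disjointness of the markers). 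So the entire task is to replace every body occurrence of $x\neq y$ by \emph{safe} negation, without admitting spurious models.

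First I would add an auxiliary binary relation $D$ meant to be ``distinctness''. Irreflexivity, $bad\leftarrow D(x,x)$, is a safe conjunctive$^{\neg}$ view and pins $D$ below genuine inequality; the work is to force $D$ to separate enough pairs. For each unary marker $P$, guarding the otherwise unguarded variable by an existing domain view, I would add $bad\leftarrow P(x),\ dsucc(y),\ \neg P(y),\ \neg D(x,y)$ --- safe because $x$ is guarded by $P$ and $y$ by $dsucc$ --- so that any two elements differing on some marker are forced into $D$. Since $succ$ need not be globally functional (Theorem~\ref{viewineq} already tolerates extraneous cycles while keeping one ``good chain''), I would also carry a strict order $Lt$ along the chain: safe $bad$-rules force $Lt$ irreflexive and transitive, and $bad\leftarrow succ(x,y),\ \neg Lt(x,y)$ forces $succ\subseteq Lt$, so that along the good chain $\neg Lt(x,y)\wedge\neg Lt(y,x)$ encodes ``same position''. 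Each $x\neq y$ in a body of Theorem~\ref{viewineq} is then replaced by $D(x,y)$ relative to these surrogates, and the consistency conditions are re-verified. For correctness: if the 2CM halts, take the obvious model with $D$ genuine inequality and $Lt$ the genuine chain order, so all constraints behave exactly as in Theorem~\ref{viewineq} and $\psi$ is satisfiable; conversely, any model of $\psi$ restricts, along its good chain, to a legal, halting run, with $halt$ forcing termination. This is the reduction carried out in \cite{BDR98}, whose simulation, like \cite{LMSS92} and Theorem~\ref{viewineq}, is engineered to survive extraneous cycles.

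The step I expect to be the main obstacle is exactly this elimination of equality: unlike a genuine equality predicate, $D$ may legitimately be interpreted as a \emph{proper} subset of the distinct pairs, so the marker-separation gadgets together with the $Lt$-order must be jointly strong enough that no ``fat'' or branched pseudo-run survives the $\neg bad$ conjunct, while simultaneously every negated atom must receive a positive guard to respect safety. Checking that this holds uniformly across all the $bad$-rules and consistency conditions --- rather than via any single slick trick --- is where the real effort goes. One could instead try to embed a known undecidable prenex fragment directly, exploiting that $\ucv^{\neg}$ with domain guards already expresses a range of $\forall^{*}\exists^{*}$-with-negation sentences over a shared binary vocabulary; but the 2CM route has the advantage of recycling essentially all of the machinery already behind Theorem~\ref{viewineq}.
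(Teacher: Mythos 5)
First, note that the paper itself gives no proof of this theorem: it is stated as a citation of \cite{BDR98}, so there is no in-paper argument to match your sketch against. Your proposal therefore has to stand on its own, and as written it does not close the obstacle you yourself identify. The crux is that an irreflexive relation $D$ is only guaranteed to be a \emph{subset} of inequality, and your gadgets force $D(x,y)$ only for pairs that are separated by some unary marker (via $bad\leftarrow P(x),\,dsucc(y),\,\neg P(y),\,\neg D(x,y)$) or that are $Lt$-comparable. But the places where the reduction of Theorem~\ref{viewineq} genuinely needs inequality are precisely pairs of ``generic'' chain elements that agree on every available unary view: two distinct successors $y,z$ of the same $x$ (the functionality rule), and two distinct counter values $c_1,c_1'$ occurring in two $config$ tuples with the same timestamp (the key rule). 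Such pairs lie in the same equivalence class of unary predicates and need not be $Lt$-comparable ($succ\subseteq Lt$ plus transitivity does not make $Lt$ total, and forcing totality of $Lt$ off the diagonal is itself an inequality constraint). Hence a model can interpret $D$ (and $Lt$) minimally, let $succ$ branch and $config$ carry several counter values per timestamp, and still satisfy $\neg bad$; the extraction of a legal halting run from an arbitrary model then fails. Worse, forcing $D\supseteq\,\neq$ is impossible in principle with safe negation alone: any rule of the form $bad\leftarrow U(x),U(y),\neg D(x,y)$ also fires on the diagonal unless one can exempt $x=y$, which is exactly the (in)equality being eliminated.

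A workable route is not to simulate $\neq$ at all but to push the safe negation directly into the consistency conditions, so that one never has to compare two candidate values for equality. For example, the transition check can be written as $bad\leftarrow config(t,s,c_1,c_2),\,succ(t,t'),\,config(t',s',c_1',c_2'),\,S_j(s),\,[\text{counter tests}],\,\neg S_k(s')$, and counter updates as $\ldots,\neg succ(c_1,c_1')$, asserting that \emph{every} successor configuration is correct rather than that the successor configuration is \emph{unique}; singleton-ness and disjointness of the markers can likewise be handled by negated membership rather than by pairwise distinctness. Some version of this is presumably what \cite{BDR98} does, but your sketch as it stands attributes to that paper a construction it does not establish, and the $D$/$Lt$ surrogate is the step that would fail.
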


A third possibility for increasing the expressiveness of views
would be to keep the body as a pure conjunctive query, but allow
views to have {\em binary} arity,  {\em e.g.}
\begin{grp}
$V(x,y) \leftarrow R(x,y)$
\end{grp}
This doesn't yield a decidable language either,
since this language has the same expressiveness as first
order logic over binary
relations, which is known to be undecidable \cite{BGG97}.

\begin{proposition}
Satisfiability is undecidable for the first order
binary conjunctive view language. \ebox
\end{proposition}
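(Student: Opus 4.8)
The plan is to give a straightforward many-one reduction from the satisfiability problem for first order logic over a purely binary vocabulary --- which is undecidable (see \cite{BGG97,BGG96}) --- to satisfiability in the first order binary conjunctive view language. The whole point is that a binary conjunctive view is expressive enough to \emph{copy} a base relation verbatim, so the view language over a suitable fixed set of views subsumes all of first order logic over that vocabulary.

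Concretely, fix a vocabulary $\sigma = \langle R_1,\ldots,R_p\rangle$ in which every $R_i$ is binary, and let $\phi$ be an arbitrary first order sentence over $\sigma$. For each $i$ define the binary conjunctive view $V_i(x,y) \larw R_i(x,y)$, so that $V_i(\bld{A}) = R_i^{\bld{A}}$ for every $\sigma$-structure $\bld{A}$. (If the formal syntax of the binary conjunctive view language permits a view atom $V(x,y)$ only on two \emph{distinct} variables, additionally take the unary conjunctive views $D_i(x) \larw R_i(x,x)$, which copy the diagonals.) Obtain $\phi'$ from $\phi$ by replacing each atom $R_i(x,y)$ with $V_i(x,y)$ and each atom $R_i(x,x)$ with $D_i(x)$, leaving equalities, Boolean connectives and quantifiers untouched. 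Since the language is by definition the closure of the view atoms under negation, conjunction and existential quantification, $\phi'$ is a legal formula of the first order binary conjunctive view language, and the map $\phi \mapsto \phi'$ (together with the fixed list of copy views) is computable.

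Because $V_i$ and $D_i$ are interpreted in any $\sigma$-structure $\bld{A}$ as exactly $R_i^{\bld{A}}$ and its diagonal, a trivial induction on the structure of $\phi$ gives $\bld{A} \models \phi$ iff $\bld{A} \models \phi'$ for all $\bld{A}$, finite or infinite; hence $\phi$ is satisfiable iff $\phi'$ is, which completes the reduction and yields undecidability. There is essentially no obstacle beyond verifying that the translation stays inside the formally defined syntax of the view language; the only point requiring any care is the treatment of atoms on a repeated variable, dealt with above by the auxiliary unary views $D_i$ (and unnecessary altogether if conjunctive view atoms with repeated arguments are allowed).
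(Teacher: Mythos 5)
Your proof is correct and is essentially the paper's own argument: the paper justifies this proposition with exactly the observation that the view $V(x,y)\larw R(x,y)$ copies a binary base relation verbatim, so the first order binary conjunctive view language has the same expressive power as first order logic over binary relations, which is undecidable by \cite{BGG97}. The only minor caution is that, by analogy with the definition of $\ucv$, the first-order part of the view language presumably contains no equality atoms, so the reduction should start from equality-free first order logic over a binary vocabulary --- still undecidable --- rather than carrying equalities of $\phi$ through the translation untouched.
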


A fourth possibility is to use unary conjunctive views, but allow
recursive view definitions.  e.g.
\begin{grp}
$V(x) \leftarrow edge(x,y)$\\
$V(x) \leftarrow V(x) \wedge edge(y,x)$
\end{grp}
Call this the first order unary
conjunctive$^{rec}$ language.
This language is undecidable also.

\begin{theorem}
Satisfiability is undecidable for the first order
unary conjunctive$^{rec}$ view language.
\end{theorem}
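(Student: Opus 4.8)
The plan is to prove undecidability by a reduction from the halting problem for two-counter machines (2CMs) starting with empty counters, following the same overall skeleton as the proof of Theorem~\ref{viewineq}: given a 2CM $M$, we compute a vocabulary $\sigma$, a finite set $\mcl{V}$ of recursive unary conjunctive views over $\sigma$, and a first order sentence $\psi$ over $\mcl{V}$ such that $\psi$ is satisfiable iff $M$ halts. As before, the intended model encodes a run of $M$ in a relation $config(t,s,c_1,c_2)$ together with a binary successor relation $succ$ threading the time stamps, singleton relations $S_0,\dots,S_h$ for the states, and $zero$/$last$ marking the ends of the time chain; $\psi$ is a conjunction of checks asserting ``this is a correct, halting computation''. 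The easy direction is immediate: if $M$ halts, the finite structure built directly from the actual computation (with $succ$ a single path of the appropriate length) satisfies $\psi$.

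The role played by inequality in Theorem~\ref{viewineq} will here be played by recursion. A recursive unary view is, over this encoding, essentially a monadic Datalog program, so it can compute reachability and transitive closures along $succ$ and $config$. Concretely, I would introduce a recursive view $onChain$ with rules $onChain(x)\leftarrow zero(x)$ and $onChain(x)\leftarrow onChain(y),succ(y,x)$, so that all correctness checks can be relativised to the portion of the structure genuinely reachable from $zero$; and, more importantly, recursive views that \emph{propagate} a ``consistent-so-far''/``error'' marker monotonically forward along $succ$. The key point is that the query layer is first order over \emph{unary} views and hence cannot count (this is the content underlying Lemma~\ref{lm:unaryform}): it can only ask whether a unary view is empty or not. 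Recursion is exactly what lets a \emph{local} violation at some time step become a \emph{global} phenomenon detectable by a count-free unary query --- e.g.\ a recursive view that, once it detects an inconsistency between $config(t)$ and $config(t')$ for $t'=succ(t)$, stays true at every later time, including the halt step.

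The main obstacle, and the reason the recursive case does not follow trivially, is that several checks in the construction of Theorem~\ref{viewineq} are ``uniqueness''/functional-dependency constraints --- $succ$ has at most one successor and at most one predecessor, time is a key for $config$, and $S_i$, $zero$, $last$ are singletons --- all expressed directly with $x\neq y$ and not expressible at all in a count-free unary query language. The plan to circumvent this is to exploit determinism of $M$: rather than forcing these relations to be single-valued, $\psi$ will only force that there \emph{exists} a chain from $zero$ along which the encoded configurations obey the transition function and reach the halt state, while $onChain$ together with ``has-a-predecessor'' bookkeeping prevents configurations sitting on extraneous branches or cycles from masquerading as a halting computation --- the same kind of subtlety (``unrelated cycles are harmless'') already handled in the proof of Theorem~\ref{viewineq}, only with recursion now doing the work that inequality did there. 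I expect the delicate part to be precisely the ``hard'' direction: verifying that if $M$ does not halt then $\psi$ is unsatisfiable, i.e.\ that no model with branching $succ$, multi-valued $config$, or spurious cycles can satisfy all the recursively phrased checks while exhibiting a halt state. Carrying this out should mirror, step for step, the case analysis in the proof of Theorem~\ref{viewineq}.
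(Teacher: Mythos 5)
Your proposal follows essentially the same route as the paper: the paper's own argument is likewise only a sketch that adapts the two-counter-machine reduction of Theorem~\ref{viewineq}, replacing inequality by a recursive view $conn\_zero$ with exactly the two rules you give for $onChain$ (base case $zero$, recursive step along $succ$) plus the assertion $\exists x(last(x)\wedge conn\_zero(x))$ to force a connected chain from $zero$ to $last$. Your additional discussion of how to compensate for the lost uniqueness/key constraints is a reasonable elaboration of the paper's remark that the adaptation ``becomes more complicated,'' not a departure from its approach.
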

\begin{proof} (sketch):  The proof of theorem \ref{viewineq} can be adapted
by removing inequality and instead using recursion
to ensure there exists a connected chain in $succ$.  It then becomes more complicated, but
the main property needed is that $zero$ is connected
to $last$ via the constants in $succ$.  This can be expressed by
\begin{grp}
$conn\_zero(x) \leftarrow zero(x)$\\
$conn\_zero(x) \leftarrow conn\_zero(y), succ(y,x)$\\
$\exists x (last(x) \wedge conn\_zero(x))$
\end{grp}
\end{proof}

\section{Applications}
\label{sec:app}

\subsection{Reasoning Over Ontologies}

A currently active area of research
is that of reasoning over ontologies (see e.g. \cite{DBLP:conf/iccs/Horrocks05}).
The aim here is to use decidable query languages
used for accessing and reasoning about information and structure for the Semantic Web.
In particular, ontologies provide vocabularies which can define
relationships or associations between various concepts (classes)
and also properties that link different classes together.
Description logics are a key
tool for reasoning over schemas and ontologies and to this end, a considerable
number of different description logics have been developed.
To illustrate some reasoning over a simple ontology, we
adopt an example from \cite{horr03}, describing people, countries and some relationships.
This example can be encoded
in a description logic such as $SHIQ$ and also in the UCV query language.
We show how to accomplish the latter.

\begin{itemize}
\item Define classes such as $Country, Person, Student$ and
$Canadian$.  These are just unary views defined over unary relations, e.g.
$Country(x) \leftarrow country(x)$.   Observe that we
can blur the distinction between unary views and unary relations and use them interchangeably.
\item State that $student$ is a subclass of $Person$.
$$ \forall x Student(x) \Rightarrow Person(x)$$
\item  State that $Canada$ and $England$ are both instances of the class
$Country$.  To accomplish this in the UCV language, we could define $Canada$ and
$England$ as unary views and ensure that they are contained in the $Country$
relation and are disjoint with all other classes/instances.
\item Declare $Nationality$ as a property relating the classes $Person$
(its domain) and $Country$ (its range).  In the UCV language, we could model this as
a binary relation $Nationality(x,y)$ and impose constraints on its domain and
range. e.g.
\begin{grp}
$dom\_Nationality(x) \leftarrow Nationality(x,y)$\\
$range\_Nationality(y) \leftarrow Nationality(x,y)$\\
$\forall x (dom\_Nationality(x) \Rightarrow Person(x))$\\
$\forall x (range\_Nationality(x) \Rightarrow Country(x))$\\
\end{grp}
\item State that $Country$ and $Person$ are disjoint classes.
$\forall x (Country(x) \Rightarrow \neg Person(x))$.
\item Assert that the class $Stateless$ is defined precisely
as those members of the class $Person$ that have no values for the
property $Nationality$.
\begin{grp}
$has\_Nationality(x) \leftarrow Nationality(x,y)$\\
$Stateless (x) \Leftrightarrow Person(x) \wedge \neg has\_Nationality(x)$\\
\end{grp}

\end{itemize}

The above types of statements are reasonably simple to express.
In order to achieve more expressiveness,
property chaining and property composition have been identified as important reasoning features.
To this end, integration of rule-based KR and DL-based KR is an active area of research.
The UCV query language has the advantage of being able to express certain types
of property chaining, which
would not be expressible in the description logic SHIQ, which is not able to
accomplish chaining \cite{horr03}.  For example

\begin{itemize}
\item An uncle is precisely a parent's brother.
\begin{grp}
$uncle_1(z) \leftarrow parent(x,y), brother(x,z)$\\
$uncle_2(z) \leftarrow parent(x,y), brother(z,x)$\\
$uncle(z) \Leftrightarrow uncle_1(z) \vee uncle_2(z)$\\
\end{grp}
\end{itemize}

We consequently
believe the UCV query language has some intriguing potential to be used as a reasoning component
for ontologies, possibly to supplement description logics for some specialized applications.
We leave this as
an open area for future investigation.


\subsection{Containment and Equivalence}
We now briefly examine the application of our results to
query containment.
Theorem \ref{th:fin_s} implies we can test whether $Q_1(x) \subseteq Q_2(x)$ under the constraints
$C_1 \wedge C_2 \ldots \wedge C_n$ where $Q_1, Q_2, C_1, \ldots, C_n$ are
all first order unary conjunctive view queries in 2-NEXPTIME.
This just amounts
to testing whether the sentence $\exists x (Q_1(x) \wedge \neg Q_2(x))  \wedge
C_1 \wedge \ldots \wedge C_n$ is unsatisfiable.  Equivalence of $Q_1(x)$
and $Q_2(x)$ can be tested with containment tests in both directions.

Of course, we can also show that testing the containment
$Q_1 \subseteq Q_2$ is undecidable if  $Q_1$ and $Q_2$ are first order
unary conjunctive view$^{\neq}$ queries, first order unary
conjunctive view$^{\neg}$ queries and first order unary conjunctive$^{rec}$
view queries.  

Containment of queries with negation was first considered in \cite{SY80}.  There it was essentially shown that the
problem is decidable for queries which do not apply projection
to subexpressions with difference.  Such a language is disjoint from ours,
since it cannot express a sentence such as $ \exists y V_4(y) \wedge
\neg\exists x (V_1(x) \wedge
\neg V_2(x))$ where $V_1$ and $V_2$ are views defined over several variables.

\subsection{Inclusion Dependencies}

Unary inclusion dependencies were identified as useful in \cite{CKV90}.
They take the form $R[x] \subseteq S[y]$.
If we allow $R$ and $S$ above to be unary conjunctive view queries,
we could obtain {\em unary conjunctive view containment dependencies}.
Observe that the unary views are actually
unary projections of the join of one or more relations.

We can also define a special type
of dependency called a {\em proper}  first order unary conjunctive inclusion dependency,
having the form
$Q_1(x) \subset Q_2(x)$, where $Q_1$ and $Q_2$ are first order unary conjunctive
view queries with one free variable.  
If $\{d_1, \ldots, d_k\}$ is a set of such dependencies, then it is
straightforward to test whether they imply another
dependency $d_x$, by testing the satisfiability of an appropriate
first order unary conjunctive view query.

\begin{theorem}
\label{implication}
Implication 
for the class of
unary conjunctive view containment dependencies
with subset and proper subset operators 
is i) decidable in 2-NEXPTIME and ii) finitely controllable. \ebox
\end{theorem}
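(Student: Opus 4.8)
The plan is to reduce implication between such dependencies to \emph{un}satisfiability of a single $\ucv$ sentence and then invoke the machinery of Section~\ref{sec:decid}. First I would check that every dependency under consideration is expressible in $\ucv$. A containment dependency $Q_1(x) \subseteq Q_2(x)$, where $Q_1,Q_2$ are unary conjunctive view queries with one free variable, is equivalent to the sentence $\forall x\,(Q_1(x) \rarw Q_2(x))$; a proper containment dependency $Q_1(x) \subset Q_2(x)$ is equivalent to $\forall x\,(Q_1(x) \rarw Q_2(x)) \wedge \exists x\,(Q_2(x) \wedge \neg Q_1(x))$. Since $\ucv$ already contains formulas with one free variable (each view $V(x)$ is one, and $\ucv$ is closed under boolean combinations and first-order quantification, as established in Section~\ref{sec:ucvdef}), both of these are $\ucv$ sentences, and so is the negation of any such dependency — for instance $\neg(Q_1 \subset Q_2)$ is equivalent to $\exists x\,(Q_1(x) \wedge \neg Q_2(x)) \vee \forall x\,(Q_1(x) \lrarw Q_2(x))$, again a $\ucv$ sentence.

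Now, given a finite set $\{d_1,\ldots,d_k\}$ of such dependencies and a candidate dependency $d_x$, the implication $\{d_1,\ldots,d_k\} \models d_x$ holds if and only if the $\ucv$ sentence $\psi \assign d_1 \wedge \cdots \wedge d_k \wedge \neg d_x$ is unsatisfiable, where each conjunct is the $\ucv$ sentence described above. The size of $\psi$ is linear in the total size of the input. By Theorem~\ref{th:timecomp}, satisfiability of a $\ucv$ sentence is in 2-NEXPTIME, so the implication test — which amounts to an unsatisfiability check for $\psi$ — is decidable within the same double-exponential bound, exactly as in the discussion of query containment preceding this theorem. This yields part (i).

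For part (ii), I would appeal to the finite controllability of $\ucv$ (the proposition stated at the end of Section~\ref{sec:decid}, itself an immediate consequence of the bounded model property of Theorem~\ref{th:fin_s}): $\psi$ has a model iff it has a finite model. Hence $\{d_1,\ldots,d_k\} \models d_x$ over arbitrary structures iff it holds over all finite structures, i.e.\ the implication problem is finitely controllable.

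The whole argument is essentially bookkeeping layered on top of Theorem~\ref{th:fin_s}; I do not expect a genuine obstacle. The one point that needs care is verifying that the subset and proper-subset operators, \emph{together with their negations}, stay inside $\ucv$ — in particular that allowing the $Q_i$ to be $\ucv$ formulas with one free variable rather than bare views causes no trouble. This is exactly where the closure properties of $\ucv$ recorded in Section~\ref{sec:ucvdef} are used, and once they are invoked the rest is a direct reduction to Theorems~\ref{th:fin_s} and~\ref{th:timecomp}.
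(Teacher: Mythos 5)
Your proposal is correct and follows essentially the same route the paper takes: the paper states this theorem without a separate proof, relying on the preceding discussion that implication reduces to unsatisfiability of the $\ucv$ sentence $d_1 \wedge \cdots \wedge d_k \wedge \neg d_x$, with the 2-NEXPTIME bound from Theorem~\ref{th:timecomp} and finite controllability from the proposition at the end of Section~\ref{sec:decid}. Your explicit encodings of $\subseteq$, $\subset$, and their negations as $\ucv$ sentences via the closure properties of Section~\ref{sec:ucvdef} simply make the paper's implicit bookkeeping precise.
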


The results from \cite{CKV90} show
that implication is decidable in polynomial time, but
not finitely controllable,
for either of the combinations
i) functional dependencies plus unary inclusion dependencies, ii) 
full implication dependencies plus unary inclusion dependencies.  In contrast,
the stated 
complexity in the above theorem is much higher, due to the increased
expressiveness of the dependencies, yet interestingly the class is finitely
controllable.   
 
We might also consider
unary conjunctive$^{\neq}$ containment
dependencies.  The tests in the proof of theorem \ref{viewineq}
for the 2CM can be written in the form $Q_1(x) \subseteq Q_2(x)$, with the
exception of the non-emptiness constraints, which must use the proper
subset operator.
Interestingly also, we can see from
the proof of theorem \ref{viewineq}, that adding the ability to express
functional dependencies 
would also result in undecidability.
We can summarise these observations in the following theorem and its corollary.

\begin{theorem}
Implication is undecidable for unary conjunctive$^{\neq}$
(or conjunctive$^{\neg}$) view containment dependencies
with the subset and the proper subset operators. \ebox
\end{theorem}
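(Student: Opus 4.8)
The plan is to reduce the halting problem of two-counter machines (2CMs) to the implication problem, by recycling the reduction in the proof of Theorem \ref{viewineq} and then applying a standard device to pass from undecidability of \emph{satisfiability} to undecidability of \emph{implication}. Recall that in Theorem \ref{viewineq}, from a 2CM $M$ one builds a first order unary conjunctive$^{\neq}$ view query $\psi_M$ that is satisfiable, over finite or infinite structures, if and only if $M$ halts on all-zero counters. The first observation is that $\psi_M$ is, by construction, a conjunction of conjuncts each having one of three shapes: (i) a universally quantified view-implication $\forall x(A(x)\Rightarrow B(x))$ or view-biconditional $\forall x(A(x)\Leftrightarrow B(x))$ with $A,B$ Boolean combinations of unary views (e.g.\ $goodzero$, $hasPred$, $hassuccnext$, $eligiblezero$, $goodstate_\delta$, $goodtrans_{\delta_{c_1}}$, $hasconfignext$, $\dots$); (ii) a ``no-bad'' conjunct $\neg\exists x\,bad(x)$, where $bad$ is a disjunction of unary conjunctive$^{\neq}$ views; and (iii) a non-emptiness conjunct $\exists x\,A(x)$ (namely $nempty$ and $halt$).

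Next I would transcribe each conjunct into a (proper) unary conjunctive$^{\neq}$ view containment dependency, using only that the first order query language over these views is closed under Boolean combinations. A view-implication $\forall x(A(x)\Rightarrow B(x))$ becomes $A(x)\subseteq B(x)$; a biconditional becomes two such dependencies; the conjunct $\neg\exists x\,bad(x)$ becomes $bad(x)\subseteq(bad(x)\wedge\neg bad(x))$, whose right-hand side is an identically empty first order unary conjunctive$^{\neq}$ view query in the single free variable $x$; and a non-emptiness conjunct $\exists x\,A(x)$ becomes $(A(x)\wedge\neg A(x))\subset A(x)$, whose left-hand side is always empty, so that the dependency asserts precisely that $A$ is non-empty --- this is the one place where the proper-subset operator is indispensable. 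Let $D_M$ be the finite set of dependencies so obtained; then $D_M$ is satisfiable iff $\psi_M$ is satisfiable iff $M$ halts, and this holds uniformly for the finite and for the unrestricted notion of model.

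Finally, to reach \emph{implication}, I would fix any always-false dependency $d_0$, for instance $d_0 : V(x)\subset V(x)$ for some view $V$ appearing in $D_M$. Since $d_0$ holds in no structure, $D_M\models d_0$ is equivalent to $D_M$ having no model, i.e.\ to $M$ \emph{not} halting. As non-halting of 2CMs is undecidable, the implication problem for unary conjunctive$^{\neq}$ view containment dependencies with the subset and proper-subset operators is undecidable, and the identical argument settles the finite-implication version. For the conjunctive$^{\neg}$ case I would run the same scheme starting instead from the undecidability of the first order unary conjunctive$^{\neg}$ view query language quoted above (from \cite{BDR98}): one checks that the sentence produced in that reduction is again a conjunction of view-implications together with finitely many view non-emptiness assertions, after which the translation into (proper) containment dependencies and then into an implication instance is word for word the same.

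The main obstacle is not a mathematical difficulty but the bookkeeping in the transcription step: one must verify that \emph{every} conjunct of the 2CM-encoding sentence genuinely has one of the three admissible shapes --- in particular that the auxiliary ``domain'' and ``bad'' views and the negation appearing in $hasconfignext$ all fit inside Boolean combinations of views with a single free variable --- and that the ``always-empty side'' trick correctly renders the ``no-bad'' and non-emptiness constraints using $\subseteq$ and $\subset$ respectively. For the conjunctive$^{\neg}$ variant there is the additional wrinkle of confirming that the reduction of \cite{BDR98} has output of the required syntactic form (or, failing that, of re-deriving it directly in that form), which is the only genuinely new work.
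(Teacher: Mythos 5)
Your proposal is correct and follows essentially the same route as the paper, which likewise rewrites the 2CM tests from Theorem \ref{viewineq} as containment dependencies $Q_1(x)\subseteq Q_2(x)$, reserving the proper-subset operator for the non-emptiness constraints. The only piece you add beyond what the paper makes explicit is the standard reduction from unsatisfiability to implication via an identically false dependency such as $V(x)\subset V(x)$, which is exactly the step the paper leaves implicit.
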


\begin{corollary}
Implication is undecidable for the combination of
unary conjunctive view containment 
dependencies
plus functional dependencies.
\end{corollary}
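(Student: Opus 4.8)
The plan is to revisit the two-counter machine reduction used in the proof of Theorem~\ref{viewineq} and to observe that inequality is never used there in an essential way: every view whose body contains an inequality appears only inside a rule of the form $bad \leftarrow \beta,\ u \neq v$, and imposing $\neg bad$ for such a rule is exactly the imposition of a functional dependency. So first I would walk through those rules and exhibit the functional dependency each one encodes. The pair $bad \leftarrow succ(x,y),succ(x,z),y\neq z$ and $bad \leftarrow succ(y,x),succ(z,x),y\neq z$ says that each column of $succ$ is a key, i.e.\ the dependencies $succ\colon 1\to 2$ and $succ\colon 2\to 1$; the three rules guarding $config$ (one for the state and one for each counter) say that the timestamp is a key of $config$, i.e.\ $config\colon 1\to 2,3,4$; and $bad \leftarrow S_i(x),S_i(y),x\neq y$, $bad \leftarrow zero(x),zero(y),x\neq y$, $bad \leftarrow last(x),last(y),x\neq y$ say that these unary relations hold at most one element, which is the degenerate functional dependency $\emptyset\to 1$. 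All the remaining $bad$-rules ($bad\leftarrow succ(x,x)$, the disjointness rules $bad\leftarrow zero(x),last(x)$, $bad\leftarrow S_i(x),S_j(x)$, and so on) contain no inequality at all.

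Next I would assemble the implication instance. Given a two-counter machine $M$, let $\Sigma_M$ consist of (i) the functional dependencies just listed; (ii) for each inequality-free $bad$-rule the containment dependency $bad_j(x)\subseteq(V(x)\wedge\neg V(x))$ asserting that the corresponding view is empty, which is a legitimate unary conjunctive view containment dependency since $\ucv$ is closed under Boolean combinations; (iii) each of the universal constraints $hasPred$, $sameDom$, $goodzero$, $goodconfigzero$, $hassuccnext$, $eligiblezero$, $eligiblelast$, $goodstate$, $goodtrans_{c_1}$, $goodtrans_{c_2}$, $hasconfignext$, each of which, after distributing any $\forall x(\cdots\wedge\cdots)$ over its conjuncts, has the form $\forall x(A(x)\Rightarrow B(x))$ for $\ucv$-definable unary $A,B$, hence is a containment dependency $A(x)\subseteq B(x)$; and (iv) the nonemptiness constraint $nempty$, written as the proper containment dependency $(V(x)\wedge\neg V(x))\subset dsucc(x)$. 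Finally let $d_M$ be the containment dependency $hlt(x)\subseteq(V(x)\wedge\neg V(x))$, that is, ``$hlt$ is empty''. Since $\neg d_M$ is equivalent to $\exists x\,hlt(x)$, the structures satisfying $\Sigma_M\cup\{\neg d_M\}$ are precisely the models of the query $\psi$ from the proof of Theorem~\ref{viewineq}; hence $\Sigma_M\not\models d_M$ iff $\psi$ is satisfiable iff $M$ halts, so $\Sigma_M\models d_M$ iff $M$ does not halt. As 2CM halting is undecidable, implication for unary conjunctive view containment dependencies together with functional dependencies is undecidable.

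The one point that needs care is the bookkeeping at the step where each inequality-bearing rule is traded for a functional dependency: I must check that the class of structures cut out is genuinely unchanged, in particular that the key constraints on $succ$ and $config$ and the $\emptyset\to 1$ constraints on $S_i$, $zero$ and $last$ forbid exactly what the original $bad$-rules forbade, and that no original rule silently relied on $u\neq v$ for anything beyond these functional-dependency-style uses. Given the rule-by-rule audit above this is routine, so I do not expect a genuine obstacle; the corollary then follows, and it is worth highlighting the contrast with Theorem~\ref{implication}, where pure unary conjunctive view containment dependencies --- even with the proper subset operator --- remain decidable and finitely controllable, so that it is exactly the addition of functional dependencies that destroys both properties.
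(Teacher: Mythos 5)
Your proof is correct and is essentially the paper's own argument: the paper justifies this corollary precisely by the remark (already made inside the proof of Theorem~\ref{viewineq}) that every inequality-bearing $bad$-rule there is just a key/functional-dependency constraint on $succ$, $config$, $S_i$, $zero$ and $last$, with everything else expressible as (proper) containments between UCV queries. Your rule-by-rule audit and the packaging of the halting condition as the negation of the target dependency $hlt(x)\subseteq\emptyset$ simply make explicit what the paper leaves as an observation.
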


\subsection{ Active Rule Termination}


The languages in this paper have their origins in \cite{BDR98}, where
active database rule languages based on views were studied.
The decidability result for first
order unary conjunctive views can be used to positively answer an open
question raised in \cite{BDR98}, which essentially asked whether termination
is decidable for active database rules expressed using unary conjunctive views.


\section{Expressive Power of the UCV Language}
\label{sec:expressive}
As we have seen in the previous sections, the logic $\ucv$ is quite suitable
to reason about hereditary information such as ``$x$ is a grandchild of $y$''
over family trees. This is due to the fact that $\ucv$ can express the
existence of a directed walk of length $k$ in the graph, for any fixed positive
integer $k$. Therefore, it is natural to also ask what is inexpressible in the logic.
In this section, we describe a game-theoretic technique for proving
inexpressibility results for $\ucv$. First, we show an easy adaptation of
\ef $\;$ games for proving that a boolean
query is inexpressible in $\ucv(\sigma,\mcl{V})$ for a signature $\sigma$ and a
finite view set $\mcl{V}$ over $\sigma$. Second, we extend this result for
proving that a boolean query is inexpressible in $\ucv(\sigma)$. An
inexpressibility result of the second kind is clearly more interesting, as it
is independent of our choice of the view set $\mcl{V}$ over $\sigma$. Moreover, such a result places an ultimate limit of what can be expressed by UCV queries.
Although it can be adapted to any class $\mcl{C}$ of
structures, we shall only state our theorem for proving inexpressibility
results in $\ucv$ \emph{over all finite structures}. For this section only, we
shall use $STRUCT(\sigma)$ to denote the set of all \emph{finite}
$\sigma$-structures.

Our first goal is quite easy to achieve. Recall that each view set $\mcl{V}$
over $\sigma$ induces a mapping $\Lambda: STRUCT(\sigma) \rarw STRUCT(\mcl{V})$
as defined in section \ref{sec:prelim}.
\begin{theorem}
Let $\bld{A}, \bld{B} \in STRUCT(\sigma)$. Define the function $\Lambda:
STRUCT(\sigma) \rarw STRUCT(\mcl{V})$. Then, the following statements
are equivalent:
\begin{enumerate}
\item $\bld{A}$ and $\bld{B}$ agree on $\ucv(\sigma,\mcl{V})$.
\item $\Lambda(\bld{A}) \equiv_1^{\ufo(\mcl{V})} \Lambda(\bld{B})$ (i.e. they
    agree on $\ufo(\mcl{V})$ formulas of quantifier rank 1.
\end{enumerate}
\end{theorem}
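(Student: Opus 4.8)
The plan is to prove the equivalence by chaining together two facts already established in the paper: Lemma~\ref{lem:realize} (which relates agreement on $\ucv(\sigma,\mcl{V})$ to agreement on $\ufo(\mcl{V})$ under $\Lambda$) and the Ehrenfeucht-Fra\"isse proposition (which relates agreement on quantifier-rank-1 formulas to Duplicator's winning strategy), together with Lemma~\ref{lm:unaryform} (every unary sentence is equivalent to one of quantifier rank 1).

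\medskip\noindent\textbf{Proof plan.} First I would show $(2) \Rightarrow (1)$. Assume $\Lambda(\bld{A}) \equiv_1^{\ufo(\mcl{V})} \Lambda(\bld{B})$. Let $\phi \in \ucv(\sigma,\mcl{V})$ be arbitrary; we must show $\bld{A} \models \phi \Leftrightarrow \bld{B} \models \phi$. Passing to the non-standard interpretation, $\phi^{\mcl{V}}$ is a unary sentence over $\mcl{V}$, and by Lemma~\ref{lm:unaryform} it is logically equivalent to a unary sentence $\psi$ of quantifier rank $1$. Since $\Lambda(\bld{A})$ and $\Lambda(\bld{B})$ agree on all quantifier-rank-1 formulas over $\mcl{V}$, they agree on $\psi$, hence on $\phi^{\mcl{V}}$. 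By Lemma~\ref{lem:realize} (specifically the equivalence of its two statements, using that $\phi^{\Lambda} = \phi^{\mcl{V}}$), it follows that $\bld{A} \models \phi \Leftrightarrow \bld{B} \models \phi$. As $\phi$ was arbitrary, $\bld{A}$ and $\bld{B}$ agree on $\ucv(\sigma,\mcl{V})$.

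\medskip For the converse $(1) \Rightarrow (2)$, I would argue contrapositively, or directly as follows. Suppose $\bld{A}$ and $\bld{B}$ agree on $\ucv(\sigma,\mcl{V})$. Take any $\ufo(\mcl{V})$ formula $\chi$ of quantifier rank $1$; we may assume $\chi$ is a sentence (taking existential closure preserves the property of being expressible in the fragment, and the fragment $\ucv$ is closed under quantification, as noted in the preliminaries). Here the key observation is that every unary sentence over $\mcl{V}$ \emph{is} of the form $\phi^{\mcl{V}}$ for some $\phi \in \ucv(\sigma,\mcl{V})$ --- indeed, reading the definition of $\ucv(\sigma,\mcl{V})$, the non-standard interpretation map $\phi \mapsto \phi^{\mcl{V}}$ is a bijection onto $\ufo(\mcl{V})$ (the view symbols play the role of the unary relation symbols, and both languages are built up by the same boolean and quantifier operations). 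So $\chi = \phi^{\mcl{V}}$ for some $\phi \in \ucv(\sigma,\mcl{V})$. By hypothesis $\bld{A} \models \phi \Leftrightarrow \bld{B} \models \phi$, and by Lemma~\ref{lem:realize} this gives $\Lambda(\bld{A}) \models \chi \Leftrightarrow \Lambda(\bld{B}) \models \chi$. Since $\chi$ was an arbitrary quantifier-rank-1 sentence, $\Lambda(\bld{A}) \equiv_1^{\ufo(\mcl{V})} \Lambda(\bld{B})$.

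\medskip\noindent\textbf{Main obstacle.} The routine direction is almost immediate from the cited lemmas; the only point requiring care is the surjectivity/bijectivity of the map $\phi \mapsto \phi^{\mcl{V}}$ from $\ucv(\sigma,\mcl{V})$ onto $\ufo(\mcl{V})$, which is what lets us pull an arbitrary unary formula over $\mcl{V}$ back to a UCV query. This is genuinely a matter of matching up the two inductive definitions (atoms $V(x)$ for $V \in \mcl{V}$ versus atoms of a unary vocabulary, and the identical closure conditions), so I would state it as a small observation and discharge it by structural induction rather than belabor it. The use of Lemma~\ref{lm:unaryform} to collapse to quantifier rank $1$ in the $(2)\Rightarrow(1)$ direction is the other ingredient that makes the single-round Ehrenfeucht-Fra\"isse game (rather than an $r$-round game) suffice, and it should be flagged explicitly.
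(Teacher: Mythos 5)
Your argument is correct and follows exactly the route the paper takes: the paper's proof is simply "Immediate from Lemma~\ref{lm:unaryform} and Lemma~\ref{lem:realize}," and you have spelled out the same chain of reasoning, including the (implicit in the paper) observation that $\phi \mapsto \phi^{\mcl{V}}$ identifies $\ucv(\sigma,\mcl{V})$ with $\ufo(\mcl{V})$ so that the rank-1 collapse can be pulled back in both directions. No gaps; your write-up is just a more explicit version of what the paper leaves to the reader.
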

\begin{proof}
Immediate from lemma \ref{lm:unaryform}, and lemma \ref{lem:realize}.
\end{proof}
So, to prove that a boolean query $\mcl{Q}$ is not expressible in
$\ucv(\sigma,\mcl{V})$, it suffices to find two $\sigma$-structures
such that $\Lambda(\bld{A}) \equiv_1^{\ufo(\mcl{V})} \Lambda(\bld{B})$, but
$\bld{A}$ and $\bld{B}$ do not agree on $\mcl{Q}$. In turn, to show that
$\Lambda(\bld{A}) \equiv_1^{\ufo(\mcl{V})} \Lambda(\bld{B})$, we can use
Ehrenfeucht-Fra\"isse games.

We now turn to the second task. Let us begin by stating an obvious corollary of
the preceding theorem.
\begin{corollary}
Let $\bld{A}, \bld{B} \in STRUCT(\sigma)$. For any view set $\mcl{V}$, define
the function $\Lambda^{\mcl{V}}: STRUCT(\sigma) \rarw STRUCT(\mcl{V})$. Then,
the following statements are equivalent:
\begin{enumerate}
\item $\bld{A}$ and $\bld{B}$ agree on $\ucv(\sigma)$.
\item For any view set $\mcl{V}$ over $\sigma$, we have
$\Lambda^{\mcl{V}}(\bld{A}) \equiv_1^{\ufo(\mcl{V})} \Lambda^{\mcl{V}}(\bld{B})$
\end{enumerate}
\label{cor:inexpr1}
\end{corollary}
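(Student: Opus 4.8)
The plan is to reduce the statement immediately to the preceding theorem by unwinding the definition of $\ucv(\sigma)$. Recall from Section~\ref{sec:ucvdef} that $\ucv(\sigma) = \bigcup_{\mcl{V}} \ucv(\sigma,\mcl{V})$, the union ranging over all finite $\sigma$-view sets $\mcl{V}$. So the first step is to observe that $\bld{A}$ and $\bld{B}$ agree on $\ucv(\sigma)$ if and only if they agree on $\ucv(\sigma,\mcl{V})$ for \emph{every} finite $\sigma$-view set $\mcl{V}$: one direction is the trivial inclusion $\ucv(\sigma,\mcl{V}) \subseteq \ucv(\sigma)$, and the other follows because any single UCV formula over $\sigma$ is a finite syntactic object mentioning only finitely many view atoms, hence lies in $\ucv(\sigma,\mcl{V}_0)$ for the finite view set $\mcl{V}_0$ consisting of precisely those views.

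Next, for each fixed $\mcl{V}$ I would invoke the preceding theorem with $\Lambda := \Lambda^{\mcl{V}}$, which states that $\bld{A}$ and $\bld{B}$ agree on $\ucv(\sigma,\mcl{V})$ if and only if $\Lambda^{\mcl{V}}(\bld{A}) \equiv_1^{\ufo(\mcl{V})} \Lambda^{\mcl{V}}(\bld{B})$. Combining this with the first step, statement (1) is equivalent to: for every finite $\sigma$-view set $\mcl{V}$, $\Lambda^{\mcl{V}}(\bld{A}) \equiv_1^{\ufo(\mcl{V})} \Lambda^{\mcl{V}}(\bld{B})$, which is exactly statement (2). The restriction of $STRUCT(\sigma)$ to finite structures in this section causes no trouble, since the preceding theorem is already phrased in that same setting.

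There is essentially no obstacle here; the only point that deserves a sentence of care is the ``every UCV formula over $\sigma$ lives in some $\ucv(\sigma,\mcl{V})$'' observation that lets agreement on the union be checked view-set-by-view-set. After that, the corollary is just a universal quantifier over $\mcl{V}$ pushed through the bi-implication supplied by the preceding theorem, so the proof is a one-line chain of equivalences.
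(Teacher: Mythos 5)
Your proof is correct and is precisely the argument the paper intends: the paper states this corollary without proof, calling it ``obvious,'' and your unwinding of $\ucv(\sigma)=\bigcup_{\mcl{V}}\ucv(\sigma,\mcl{V})$ together with the per-$\mcl{V}$ application of the preceding theorem is exactly the intended one-line chain of equivalences. The only point worth a sentence of care --- that each UCV formula mentions only finitely many views and hence lies in some $\ucv(\sigma,\mcl{V}_0)$ --- is the one you correctly flag.
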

This corollary is not of immediate use. Namely, checking the second statement
is a daunting task, as there are infinitely
many possible view sets $\mcl{V}$ over $\sigma$. Instead, we shall propose a
sufficient condition for this, which employs the easy direction of the
well-known homomorphism preservation theorem (see \cite{Hodg97}).
\begin{definition}
    A formula $\phi$ over a vocabulary $\sigma$ is said to be
    \defn{preserved under homomorphisms}, if for any $\bld{A}, \bld{B} \in
    STRUCT(\sigma)$ the following statement holds: whenever $\bld{a} \assign
    (a_1,\ldots,a_m) \in
    \phi(\bld{A})$ and $h$ is a homomorphism from $\bld{A}$ to $\bld{B}$,
    it is the case that $h(\bld{a}) \assign (h(a_1),\ldots,h(a_m)) \in
    \phi(\bld{B})$.
\end{definition}
\begin{lemma}
Conjunctive queries are preserved under homomorphisms.
\label{lm:homom}
\end{lemma}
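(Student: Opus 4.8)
The plan is to unwind the definition of a conjunctive query and push a satisfying valuation forward along the homomorphism. Write the conjunctive query as $\phi(y_1,\ldots,y_m) \larw R_1(u_1),\ldots,R_k(u_k)$, i.e. $\phi = \exists x_1\cdots x_n\,(R_1(u_1)\wedge\cdots\wedge R_k(u_k))$, where $y_1,\ldots,y_m$ are the free variables and $x_1,\ldots,x_n$ the quantified ones. Suppose $\bld{a}=(a_1,\ldots,a_m)\in\phi(\bld{A})$ and let $h\colon A\rarw B$ be a homomorphism. By definition of the image $\phi(\bld{A})$ there is an instantiation $v$ of $\bld{A}$ with $v(y_i)=a_i$ for all $i$ and $\bld{A}\models\phi[v]$; since $\phi$ is existential, $v$ may moreover be chosen so that every conjunct is witnessed, i.e. $v(u_j)\in R_j^{\bld{A}}$ for each $j$ (extending $v$ to the free tuples $u_j$ componentwise, as in the preliminaries).

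Next I would set $v' \assign h\circ v$, the instantiation of $\bld{B}$ obtained by composing $v$ with $h$. The key step is that $v'$ still witnesses every conjunct: for each $j$ we have $v'(u_j) = h(v(u_j))$, and since $v(u_j)\in R_j^{\bld{A}}$ and $h$ is a homomorphism, $h(v(u_j))\in R_j^{\bld{B}}$. Hence $\bld{B}\models (R_1(u_1)\wedge\cdots\wedge R_k(u_k))[v']$, and reinstating the existential quantifiers over $x_1,\ldots,x_n$ gives $\bld{B}\models\phi[v']$. Finally $v'(y_i)=h(v(y_i))=h(a_i)$, so $h(\bld{a})=(h(a_1),\ldots,h(a_m))\in\phi(\bld{B})$, which is exactly the assertion that $\phi$ is preserved under homomorphisms; the case $m=0$ is the special case of a sentence.

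There is essentially no genuine obstacle here: the only point requiring care is the bookkeeping between free and bound variables — one must make sure the witnessing valuation for the matrix agrees with the given valuation on the free variables, and that the existential quantifiers are reinstated at the end over $\bld{B}$. If a slicker phrasing is preferred, the same fact is immediate from the standard observation that a conjunctive query $\phi$ holds of $\bld{a}$ in $\bld{A}$ iff there is a homomorphism from the canonical structure of the body of $\phi$, with distinguished tuple $(y_1,\ldots,y_m)$, into $\bld{A}$ sending that tuple to $\bld{a}$; composing such a homomorphism with $h$ yields the corresponding homomorphism into $\bld{B}$. This lemma is precisely the easy direction of the homomorphism preservation theorem specialised to conjunctive queries, and is all that is needed for the inexpressibility arguments that follow.
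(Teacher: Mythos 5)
Your proof is correct: the paper states this lemma without proof, treating it as the standard ``easy direction'' of the homomorphism preservation theorem, and your argument --- composing a witnessing valuation for the matrix with the homomorphism $h$ and noting that each atomic conjunct is preserved --- is precisely the standard argument one would supply. The only point worth noting is that the paper's conjunctive queries contain neither equality nor negation in the body, so no further care is needed beyond the bookkeeping on free versus bound variables that you already address.
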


\begin{theorem}
    Let $\bld{A}, \bld{B} \in STRUCT(\sigma)$. To prove that
    $\Lambda(\bld{A}) \equiv_1^{\ufo(\mcl{V})} \Lambda(\bld{B})$ for all
    $\sigma$-view sets $\mcl{V}$, it is sufficient to show that
    \begin{enumerate}
    \item For every $a \in A$, there exists a homomorphism $h$ from
        $\bld{A}$ to $\bld{B}$ and a homomorphism $g$ from $\bld{B}$ to
        $\bld{A}$ such that $g(h(a)) = a$.
    \item For every $b \in B$, there exists a homomorphism $h$ from
        $\bld{A}$ to $\bld{B}$ and a homomorphism $g$ from $\bld{B}$ to
        $\bld{A}$ such that $h(g(b)) = b$.
    \end{enumerate}
\label{thm:homom}
\end{theorem}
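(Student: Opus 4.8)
The plan is to reduce the claim to the single-round \ef\ game and then read off Duplicator's winning move directly from the supplied homomorphisms. Fix an arbitrary $\sigma$-view set $\mcl{V}$. By the \ef\ games proposition, $\Lambda(\bld{A})$ and $\Lambda(\bld{B})$ agree on all quantifier-rank-$1$ formulas over $\mcl{V}$ — in particular on every formula of $\ufo(\mcl{V})$ — as soon as Duplicator has a winning strategy in the one-round game on these two structures. Since every relation symbol of $\mcl{V}$ is unary, the substructure of $\Lambda(\bld{A})$ induced by a singleton $\{a\}$ is determined precisely by the ``view type'' $\{\,V \in \mcl{V} : a \in V(\bld{A})\,\}$, and similarly for $\Lambda(\bld{B})$. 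Hence a winning response by Duplicator to a Spoiler move $a \in A$ is any $b \in B$ with $a \in V(\bld{A}) \Leftrightarrow b \in V(\bld{B})$ for all $V \in \mcl{V}$, and symmetrically for a Spoiler move in $\bld{B}$.

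Next I would manufacture such a witness from hypothesis~(1) together with Lemma~\ref{lm:homom}. Given $a \in A$, take homomorphisms $h \colon \bld{A} \to \bld{B}$ and $g \colon \bld{B} \to \bld{A}$ with $g(h(a)) = a$, and set $b := h(a)$. For any view $V$ (a conjunctive query with one free variable over $\sigma$), Lemma~\ref{lm:homom} gives: if $a \in V(\bld{A})$ then $b = h(a) \in V(\bld{B})$; and if $b \in V(\bld{B})$ then $g(b) \in V(\bld{A})$, i.e.\ $a = g(h(a)) \in V(\bld{A})$. Thus the single element $b$ has the same view type as $a$ with respect to \emph{every} conjunctive query, hence with respect to every $V \in \mcl{V}$, so mapping $\{a\}$ to $\{b\}$ is an isomorphism of the induced singleton substructures. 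A Spoiler move $b \in B$ is handled in mirror image, now invoking hypothesis~(2): choose $h, g$ with $h(g(b)) = b$ and respond with $a := g(b)$; Lemma~\ref{lm:homom} again shows $a$ and $b$ share the same view type. Hence Duplicator wins the game, and since $\mcl{V}$ was arbitrary the theorem follows.

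I do not anticipate a real obstacle here: once one recognises that a winning move in the one-round game on the $\Lambda$-images is simply an element of matching view type, the construction is forced. The only points needing mild care are that a \emph{single} witness $b = h(a)$ must work simultaneously for all views — which it does, because homomorphism preservation is applied with the same pair $(h,g)$ independently of $V$ — and that both directions of Spoiler's choice are covered, which is precisely why the two-sidedness of hypotheses~(1) and~(2) is needed rather than a single homomorphism.
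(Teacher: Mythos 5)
Your proof is correct and follows essentially the same route as the paper's: fix an arbitrary view set, reduce to the one-round \ef\ game via the stated proposition, and use Lemma \ref{lm:homom} with the pair $(h,g)$ from the relevant hypothesis to show that $h(a)$ (resp.\ $g(b)$) has the same view type as $a$ (resp.\ $b$), giving Duplicator's winning move. No gaps.
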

\begin{proof}
Take an arbitrary $\sigma$-view set $\mcl{V}$. We use Ehrenfeucht-Fra\"isse game
argument. Suppose Spoiler places a pebble on an element $a$ of
$\Lambda(\bld{A})$, whose domain is $A$. Then, the first assumption tells us
that there exist homomorphisms $h: A \rarw B$ and $g: B \rarw A$ such that
$g(h(a)) = a$. Duplicator
may respond by placing the other pebble from the same pair on the element
$h(a)$ of $\Lambda(\bld{B})$. To show this, we need to prove that
$a \mapsto h(a)$
defines an isomorphism between the substructures of $\Lambda(\bld{A})$ and
$\Lambda(\bld{B})$ induced by, respectively, the sets $\{a\}$ and $\{h(a)\}$.
Let $V \in \mcl{V}$.
It is enough to show that $a \in V(\bld{A})$ iff $h(a) \in V(\bld{B})$.
If $a \in V(\bld{A})$, then we have $h(a) \in V(\bld{B})$ by lemma
\ref{lm:homom}. Similarly, if $h(a) \in V(\bld{B})$, theorem \ref{lm:homom}
implies that $a = g(h(a)) \in V(\bld{A})$.

For the case where Spoiler plays an element of $\bld{B}$, we can use the
same argument with the aid of the second assumption above. In either case,
we have $\Lambda(\bld{A}) \equiv_1 \Lambda(\bld{B})$.
\end{proof}
This theorem allows us to give easy inexpressibility proofs for a variety
of first-order queries. We now give three easy inexpressibility proofs for
first-order queries over directed graphs (i.e. structures with one binary
relation $E$).
\begin{example}
    We show that the formula $SYM \equiv \forall x,y(E(x,y) \lrarw E(y,x))$
    accepting graphs with symmetric $E$ is
    not expressible in $\ucv(\sigma)$. To do this, consider the graphs
    $\bld{A}$ and $\bld{B}$ defined as follows
    \begin{center}
    \epsfig{file=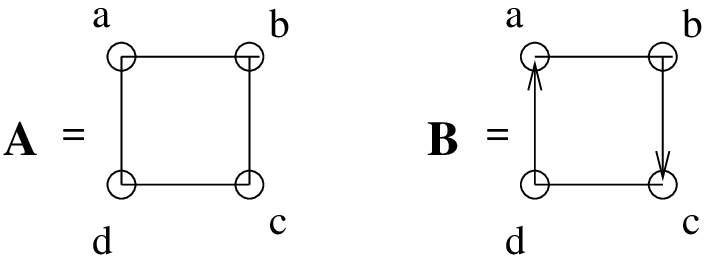}
    \end{center}
    Obviously, the graph $E^{\bld{A}}$ is symmetric, while $E^{\bld{B}}$
    is not. Consider the functions $h_1,h_2: A \rarw B$ and
    $g: B \rarw A$ defined as
    \begin{itemize}
    \item $h_1(a) = h_1(c) = a$ and $h_1(b) = h_1(d) = b$,
    \item $h_2(a) = h_2(c) = c$ and $h_2(b) = h_2(d) = d$, and
    \item for $i \in B$, $g(i) = i$.
    \end{itemize}
    It is easy to verify that $h_1$ and $h_2$ are homomorphisms from $\bld{A}$
    to $\bld{B}$, whereas $g$ a homomorphism from $\bld{B}$ to $\bld{A}$.
    Now, for $x \in \{a,b\}$, we have $g(h_1(x)) = x$ and $h_1(g(x)) = x$.
    For $x \in \{c,d\}$, we have $g(h_2(x)) = x$ and $h_2(g(x)) = x$. So,
    by theorem \ref{thm:homom} and corollary \ref{cor:inexpr1}, we conclude that
    $SYM$ is not expressible in $\ucv(\sigma)$ over all finite directed
    graphs.
\end{example}
\begin{example}
    We now show that the transitivity query
    \[
    TRANS \equiv \forall x,y,z( E(x,y) \wedge E(y,z) \rarw E(x,z) )
    \]
    is not expressible in $\ucv(\sigma)$. To do this, consider the graphs
    $\bld{A}$ and $\bld{B}$ defined as
    \begin{center}
    \epsfig{file=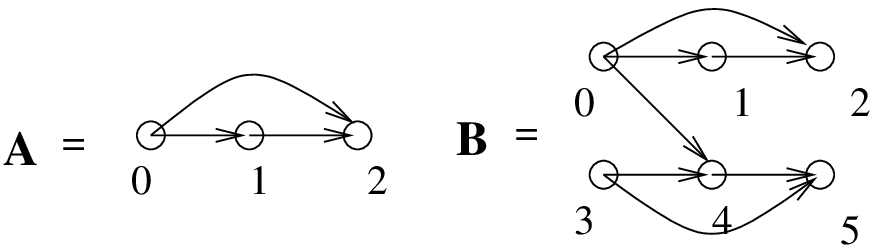}
    \end{center}
    It is obvious that $\bld{A} \models TRANS$, and it is not the case that
    $\bld{B} \models TRANS$. Consider the homomorphisms $h_1, h_2$ from
    $\bld{A}$ to $\bld{B}$, and the homomorphism $g$ from $\bld{B}$ to $\bld{A}$
    defined as
    \begin{itemize}
    \item for $i \in A$, $h_1(i) = i$;
    \item for $i \in A$, $h_2(i) = i+3$; and
    \item for $i \in B$, $g(i) = i \bmod{3}$.
    \end{itemize}
    Then, for $i \in A$, we have $g(h_1(i)) = i$. Conversely, suppose
    that $i \in B$. If $i = 0,1,2$, then $h_1(g(i)) = i$. Similarly, if
    $i = 3,4,5$, then $h_2(g(i)) = i$. So, by theorem \ref{thm:homom} and
    corollary \ref{cor:inexpr1}, transitivity is not expressible in
    $\ucv(\sigma)$ over finite directed graphs.
\end{example}
\begin{example}
The query $\forall x,y E(x,y)$ is also not expressible in $\ucv(\sigma)$.
It is easy to apply theorem \ref{thm:homom} and corollary \ref{cor:inexpr1}
on the following graphs to verify this fact.
\begin{center}
    \epsfig{file=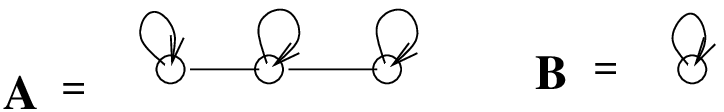}
\end{center}
\end{example}

\section{Related Work}
\label{sec:related}
Satisfiability of first order logic has been thoroughly
investigated in the context of the classical decision problem \cite{BGG97}.
The main thrust there has been determining for which quantifier
prefixes first order languages are decidable.  We are not aware
of any result of this type which could be used to demonstrate
decidability of the  first order unary conjunctive view language.
Instead, our result is best classified as a new
decidable class generalising the traditional
decidable unary first-order language (the L\"{o}wenheim class \cite{lowenheim}).
Use of the L\"{o}wenheim class itself for reasoning
about schemas is described in \cite{Th96}, where applications towards
checking intersection and disjointness of object oriented classes are given.

As observed earlier,
description
logics are important logics for
expressing constraints on desired models.
In \cite{CDL98},
the query containment problem is studied in the context of
the description logic ${\cal DLR}_{reg}$.  There are certain similarities
between this and the first order (unary) view languages  we have studied in
this paper.
The key difference appears to be that although ${\cal DLR}_{reg}$ can be
used to define view constraints, these constraints cannot express
unary conjunctive views (since assertions do not allow arbitrary projection).
  Furthermore, ${\cal DLR}_{reg}$ can express
functional dependencies on a single attribute, a feature which
would make the UCV language undecidable (see proof of theorem \ref{viewineq}).
There is a result in \cite{CDL98}, however, showing undecidability
for a fragment of ${\cal DLR}_{reg}$ with inequality, which could be adapted
to give an alternative proof of
theorem \ref{viewineq} (although inequality is used there in a slightly more
powerful way).

Another interesting family of decidable logics are
guarded logics.
The Guarded Fragment \cite{guarded98} and the Loosely Guarded
Fragment \cite{bentham97}
are both logics that have the finite model property \cite{DBLP:journals/sLogica/Hodkinson02}.
The philosophy of UCV is somewhat
similar
to these guarded logics, since the decidability of UCV also arises from certain
restrictions on quantifier use.
In terms of expressiveness though, guarded
logics
seem distinct from UCV formulas, not being able to express
cyclic views, such as
$\exists x (V(x))$, where $V(x) \leftarrow R(x,y), R(y,z), R(z,z'), R(z',x)$.

Another area of work that deals with complexity of views is the view consistency problem, with
results given in  \cite{AD98}.
This involves determining
whether there exists an underlying database instance that realises a
{\em specific} (bounded) view instance .  The problem we have focused on in this
paper is slightly more complicated; testing satisfiability of a first
order view query
asks the question whether there exists an ({\em unbounded}) view instance
that makes the query true.    This explains how satisfiability can be undecidable
for first order unary conjunctive$^{\neq}$ view queries, but view
consistency for non recursive datalog$^{\neq}$ views is in $NP$.
Monadic views have been recently examined in \cite{DBLP:conf/icdt/NashSV07}, where they
were shown to exhibit nice properties in the context of
answering and rewriting conjunctive queries using only a set of views.  This is an
interesting counterpoint to the result of this paper, which demonstrate how monadic
views can form the basis of a decidable fragment of first order logic.

\section{Summary and Further work}
\label{sec:summary}

In this paper, we have introduced a new decidable language based on the use of 
unary conjunctive
views embedded within first order logic.  This is a powerful generalisation of
the well known fragment of first order logic using only unary relations (the L\"{o}wenheim class).
We also showed that
our new class is maximal, in the sense that increasing the expressivity of
views
is not possible without undecidability resulting.
Table~1 provides a summary of our decidability results.  Note that the Unary
Conjunctive$^{\cup}$ View language corresponds to the extension of $\ucv$ by
allowing disjunction in the view definition.

We feel that the decidable case we have identified,
is sufficiently natural and interesting to be of practical, as well as theoretical interest.
\begin{table}
\begin{center}
{\large Table 1:\\
Summary of Decidability Results for First Order View Languages}\\
\begin{tabular}{||l|l||} \hline
Unary Conjunctive View  & Decidable\\ \hline
Unary Conjunctive$^{\cup}$ View & Decidable\\ \hline
Unary Conjunctive$^{\neq}$ View  & Undecidable\\ \hline
Unary Conjunctive$^{rec}$ View  & Undecidable\\ \hline
Unary Conjunctive$^{\neg}$ View  & Undecidable \cite{BDR98}\\ \hline
Binary Conjunctive View  & Undecidable \\ \hline
\end{tabular}
\label{summary-table}
\end{center}
\end{table}

An interesting open problem for future work is to investigate the decidability of
an extension to the first order unary conjunctive view language, when
equality is allowed to be used outside of the unary views (i.e. included in the first
order part).    An example formula in this
new language is\\

$$ \forall X,Y (V_1(X) \wedge V_2(Y) \Rightarrow X \neq Y) $$

We conjecture this extended language is decidable, but do not currently have a proof.

For other future work, we believe it would be worthwhile to investigate relationships
with description logics and also examine alternative ways of introducing
negation into the UCV language.  One possibility might be to allow views of arity
zero to specify description logic like constraints, such as $R_1(x,y)
\subseteq R_2(x,y)$.

Finally, there is still an exponential gap between the upper bound complexity of
2-NEXPTIME and lower bound complexity of NEXPTIME-hardness that we derived. The
primary reason for this exponential blow-up is the enumeration of all subviews
of the views that are present in the formula, which we need for the proof.

\begin{acks}
We thank Sanming Zhou for pointing out useful references on extremal graph
theory. We are grateful to Leonid Libkin for his comments on a draft of this paper.
\end{acks}

\bibliographystyle{acmtrans}
\bibliography{mvq}

\end{document}